\let\csname equation*\endcsname\relax
\let\csname endequation*\endcsname\relax
\newcommand{\p}{\partial}
\newcommand{\dd}{{\rm d}}
\newcommand{\bd}{\begin{definition}}                %inizia definizione
\newcommand{\ed}{\end{definition}}                  %fine definizione
\newcommand{\bc}{\begin{corollary}}                 %inizia corollario
\newcommand{\ec}{\end{corollary}}                   %fine corollario
\newcommand{\bl}{\begin{lemma}}                     %inizia lemma
\newcommand{\el}{\end{lemma}}                       %fine lemma
\newcommand{\bp}{\begin{proposition}}            %inizia proposizione
\newcommand{\ep}{\end{proposition}}                %fine proposizione
\newcommand{\bere}{\begin{remark}}                  %inizia osservazione
\newcommand{\ere}{\end{remark}}                     %fine oservazione
\newcommand{\bt}{\begin{theorem}}
\newcommand{\et}{\end{theorem}}
\newcommand{\bit}{\begin{itemize}}
\newcommand{\eit}{\end{itemize}}
\newtheorem{theorem}{Theorem}[section]
\newtheorem{corollary}[theorem]{Corollary}
\newtheorem{lemma}[theorem]{Lemma}
\newtheorem{proposition}[theorem]{Proposition}
\theoremstyle{definition}
\newtheorem{definition}[theorem]{Definition}
\theoremstyle{remark}
\newtheorem{remark}[theorem]{Remark}
\newtheorem{example}[theorem]{Example}
\begin{document}

\title{The boundary of the chronology violating set}

%\author{E. Minguzzi\thanks{
%Dipartimento di Matematica e Informatica ``U. Dini'', Universit\`a
%degli Studi di Firenze, Via S. Marta 3,  I-50139 Firenze, Italy.
%Corresponding author e-mail: ettore.minguzzi@unifi.it }}

\author{E Minguzzi}
\address{Dipartimento di Matematica e Informatica ``U. Dini'', Universit\`a
degli Studi di Firenze, Via S. Marta 3,  I-50139 Firenze, Italy.}
%\ead{}

\ead{ettore.minguzzi@unifi.it}

\date{}

\begin{abstract}
\noindent A sufficiently general definition for the future and past boundaries of the chronology violating region is given. In comparison to previous studies, this work does not assume  that the complement of  the chronology violating set is globally hyperbolic. The  boundary of the chronology violating set is studied and several propositions are obtained which confirm the reasonability of the definition. Some singularity theorems related to chronology violation are considered. Among the other results we prove that compactly generated horizons are compactly constructed.
\end{abstract}

\section{Introduction}

This work is devoted to the study of the boundary of the chronology
violating region of spacetime. This boundary shares some features
that are  reminiscent of the Cauchy horizon $H(S)$ of an achronal
hypersurface. In fact, in some cases in which the chronology
violation can be removed by passing to a suitable covering (that
`counts' the numbers of times a timelike curve crosses a
hypersurface $S$), this equivalence can be made manifest
\cite{maeda98}. Unfortunately, in the general case there is no such
correspondence and the boundary of the chronology violating set must
be studied for its own sake.

As today there are few results concerning the boundary of
the chronology violating set $\mathcal{C}$ , and indeed the very definition of
future and past boundary seems to be missing in the
literature. In some cases, as done by Thorne \cite{thorne93} and
Hawking \cite{hawking92}, the character and properties of the boundary $\p \mathcal{C}$  are obtained
by assuming a globally hyperbolic complement $M\backslash \bar{\mathcal{C}}$, and by  identifying the
future boundary of the chronology violating set with the past Cauchy
horizon of such chronological complement.
%from the exterior of the chronology violating region by assuming
%that this exterior is globally hyperbolic and by identifying the
%future boundary of the chronology violating set with the past Cauchy
%horizon of the chronological portion of spacetime.
Of course, this
definition is not completely satisfactory as it is well posed only
for spacetimes which admit a globally hyperbolic chronological
region.

The development  of closed timelike curves in spacetimes which admit non-compact partial Cauchy hypersurfaces is fairly well understood. Hawking argued \cite{hawking92}  that if the closed timelike curves originated by the actions of an advanced civilization, then the generators of the Cauchy horizon, followed in the past direction, would enter the space region were those actions took place. This fact should be expected since the generators themselves represent the flow of the information which signals the fact that a causality violation took place. Without a breaking of the spacetime continuum those generators would have to enter a compact region, namely the future Cauchy horizon would have to be compactly generated. However,
%Such formation would imply the existence of a future Cauchy horizon, which would be compactly generated if the closed timelike curves were induced by the action of advanced civilization, namely on compact region of spacetime. In his work on chronology protection \cite{hawking92}
Hawking showed that  compactly generated horizons cannot form and some well known gaps in the proof, connected with a tacit differentiability assumption on the horizon, have been recently solved \cite{minguzzi14d}, thus confirming the validity of {\em chronology protection} at the  classical level. Similar issues connected with Tipler's analysis \cite{tipler77} have also been clarified \cite{minguzzi14d}.

It has been shown that Hawking's no go theorem on the formation of timelike curves can be circumvented either by relaxing the assumption on the compact generation of the horizon or by admitting violation of the null energy condition, see \cite{ori93,grant93,olum00,ori05,ori07,dietz12} (we shall say more on this in Sect.\ \ref{cop}). The reader is warned that on this topic some imprecise or misleading statements can be repeatedly found in the literature; the  most relevant example is given by  Hawking's claims \cite[Sect.\ 3]{hawking92}
% initially due  Hawking and reported by other works,
that (a)
 %closed causal curves on compact Cauchy horizons (fountains) would be generic
 `absence of closed null geodesics' on compact Cauchy  horizons would be unstable, that is,  the least perturbation of the metric would cause the horizon to contain closed null geodesics; (b) `presence of closed null geodesics' would be stable. At present there is no convincing proof for these claims, and some studies seem to suggest different conclusions \cite{chrusciel94,minguzzi09d}.
 %for instance it has been shown that absence of fountains (a kind of special closed causal curve) can be stable and that compact Cauchy horizons not containing closed causal curves are not destroyed by the

%  almost closed causal curve on the horizon would close up under  a small perturbation of the metric. It has been clarified that it is not really so and that, on the contrary, a compact Cauchy horizon can be stable though it does not contain  closed causal curves

In this work we study the boundary of the chronology
violating set without making  restrictive assumptions and we eventually obtain  a definition of its future and past parts. As it
happens for the concept of Cauchy horizon, the results of
this work could prove useful for the study of singularities under chronology violation. Indeed, we shall
argue that a deeper understanding of this boundary could clarify the
mutual relationship  between chronology violation and geodesic
incompleteness (i.e.\ singularities).

Let us recall that a spacetime $(M,g)$ is a connected, time-oriented
Lorentzian   manifold of
arbitrary dimension $n+1\geq 2$,  where $g \in C^k$, $k\ge 3$,  has signature $(-,+,\dots,+)$. As a
matter of notation, the boundary of a set is denoted with a dot. In
some cases in which  this notation could be ambiguous the dot is
replaced by the symbol $\partial $. The subset symbol $\subset$ is
reflexive, i.e. $X \subset X$. A set is achronal if no timelike curve joins two of its points. If $S$ is a closed achronal set, the Cauchy development $D^+(S)$ is the set of those $p\in M$ such that every past inextendible causal curve ending at $p$ intersects $S$. The Cauchy horizon is $H^+(S)=\overline{D^+(S)}\backslash I^{-}(D^+(S))$.

Let us also recall that a future
lightlike ray is a future inextendible achronal causal curve, in
particular it is a lightlike geodesic. Past lightlike rays are
defined analogously. A lightlike line is an achronal inextendible
causal curve, hence a lightlike geodesic
without conjugate points. In this work, unless otherwise specified,
all the curves will be future directed, thus, for instance, a past
lightlike ray {\em ends} at its endpoint.

The condition of absence of lightlike lines is is implied under the null genericity and the null
convergence conditions by null completeness (as these three
conditions together imply the existence of conjugate points on any
null geodesic \cite{hawking73,beem96}). Therefore, in
the study of singularity theorems it is often a good strategy to
assume the absence of lightlike lines and to look for
contradictions.

The chronology violating region $\mathcal{C}:=\{x: x\ll x\}$ is the
set formed by those points through which passes at least one closed
timelike curve. The relation $x\sim y$ if $x\ll y$ and $y \ll x$ is
an equivalence relation in $\mathcal{C}$ and, as it is well known
since the work by Carter, it splits the chronology violating region
into (open) equivalence classes denoted in square bracket,
$[x]=I^{+}(x)\cap I^{-}(x)$. Two points belonging to the same class
have the same chronological future and the same chronological past.

\section{The boundary of a chronology violating class}

In this work we are going to study the boundary of a generic
chronology violating class since the boundary of the chronology
violating region can be recovered from those. In this respect  the following result \cite[Theor.\ 4.5]{minguzzi07c}  is
worth recalling.

\begin{theorem}
Let $\dot{[x]}$ and $\dot{[y]}$ be the boundaries of the distinct
chronology violating classes  $[x]$ and $[y]$. Through every point
of $\dot{[x]}\cap \dot{[y]}$ (a set which may be empty) there passes
a lightlike line entirely contained in $\dot{[x]}\cup \dot{[y]}$. Thus, a spacetime without lightlike lines has  chronology
violating set components   having disjoint closures.
\end{theorem}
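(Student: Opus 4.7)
\textbf{Preliminaries.} Since $[x]=I^+(x)\cap I^-(x)$ and $[y]$ are open equivalence classes for $\sim$ and hence pairwise disjoint, $p\notin[x]\cup[y]$. I would first observe that $p\notin\mathcal{C}$: if $p$ belonged to some class $[z]$, openness of $[z]$ together with $p\in\overline{[x]}$ would give $[z]\cap[x]\neq\emptyset$, whence $[z]=[x]$, contradicting $p\in\dot{[x]}$. In particular no closed timelike curve passes through $p$.

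\textbf{Candidate curve via limit-curve lemma.} I would pick sequences $p_n\in[x]$ and $q_n\in[y]$ with $p_n,q_n\to p$, and for each $n$ a closed timelike loop $\gamma_n\subset[x]$ through $p_n$ and $\mu_n\subset[y]$ through $q_n$. Reparametrising by $h$-arclength for a complete background Riemannian metric $h$ and iterating each loop, one gets periodic inextendible causal curves $\gamma_n:\mathbb{R}\to M$ and $\mu_n:\mathbb{R}\to M$ whose images lie in $[x]$ and $[y]$ respectively. The limit-curve lemma then provides, along a subsequence, an inextendible causal curve $\gamma:\mathbb{R}\to M$ through $p$ contained in $\overline{[x]}\cup\overline{[y]}$, obtained by limiting the $\gamma_n$ (and, if needed, splicing them with pieces of the $\mu_n$ via short causal bridges between the $p_n$ and $q_n$).

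\textbf{Achronality forces boundary containment.} Suppose $\gamma$ is achronal. Since $[x]$ is open, $\gamma^{-1}([x])$ is open in $\mathbb{R}$; if non-empty it would contain an interval, and two distinct points $\gamma(s_1),\gamma(s_2)\in[x]$ would be equivalent under $\sim$, giving in particular $\gamma(s_1)\ll\gamma(s_2)$ and contradicting achronality. Hence $\gamma\cap[x]=\emptyset$, and by the symmetric argument $\gamma\cap[y]=\emptyset$; combined with $\gamma\subset\overline{[x]}\cup\overline{[y]}$ this yields $\gamma\subset\dot{[x]}\cup\dot{[y]}$, so $\gamma$ is the desired lightlike line.

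\textbf{Main obstacle.} The central difficulty is proving that $\gamma$ is in fact achronal. A limit of timelike loops need not be. The essential ingredient I expect to exploit is the coexistence at $p$ of the two distinct classes $[x]$ and $[y]$: were $\gamma(s_1)\ll\gamma(s_2)$, openness of $I^\pm$ together with the closed timelike loops through approximating points $p_n\in[x]$ and $q_n\in[y]$ should let one close a chronological cycle between a point of $[x]$ and one of $[y]$, forcing $p_n\sim q_n$ and contradicting $[x]\neq[y]$. Executing this achronality argument rigorously — and in particular correctly managing the limit-curve construction that uses \emph{both} approximating sequences simultaneously — is what I expect to be the main technical hurdle, and it is precisely at this step that the hypothesis $p\in\dot{[y]}$ (rather than just $p\in\dot{[x]}$) is indispensable.
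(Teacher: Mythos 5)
You have correctly isolated the crux (achronality of the limit curve), but the proposal contains a genuine gap there, and a secondary one in the construction itself. Note first that the paper does not prove this statement; it quotes it from \cite[Theor.~4.5]{minguzzi07c}, so what follows compares your sketch with the argument that actually closes the theorem. The construction step already fails as described: a limit of the periodic curves $\gamma_n\subset[x]$ lies in $\overline{[x]}$ and is in general \emph{not} achronal --- this is precisely why Lemma~\ref{vhq} yields only a future \emph{or} a past ray in $\dot{[x]}$, never a line (cf.\ the generators in Fig.~\ref{edge}, which leave the boundary when prolonged). Your remedy of ``splicing via short causal bridges between $p_n$ and $q_n$'' is not available: $p_n\in[x]$ and $q_n\in[y]$ need not be causally related at all (the two classes may be causally unrelated), and since $[x]\ne[y]$ they can be related in at most one temporal direction, because $x\ll y$ and $y\ll x$ together force $[x]=[y]$; even when a bridge exists it need not stay near $p$ nor inside $\overline{[x]}\cup\overline{[y]}$. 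Moreover, your achronality sketch cannot be completed for a curve lying in $\overline{[x]}$: a violation $\gamma(s_1)\ll\gamma(s_2)$ with both points in $\overline{[x]}$ yields only the innocuous chain $\gamma(s_1)\ll x\ll\gamma(s_2)$, and when the violation straddles $p$ no contradiction with $p\in\dot{[x]}\cap\dot{[y]}$ results.

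The missing idea is that $[y]$ enters order-theoretically, not through a spliced curve. Openness of $I^{\pm}$ and $p\in\dot{[x]}\cap\dot{[y]}$ give the four implications $p\in I^{-}([x])\Rightarrow y\ll x$, $p\in I^{+}([y])\Rightarrow y\ll x$, $p\in I^{-}([y])\Rightarrow x\ll y$, $p\in I^{+}([x])\Rightarrow x\ll y$ (e.g.\ if $p\ll x$, any $w$ with $p\ll w\ll x$ satisfies $y\ll w$ since $I^{-}(w)\ni p$ meets $[y]$). Since $\lnot(x\ll y)$ or $\lnot(y\ll x)$, after relabelling one may assume $\lnot(y\ll x)$, whence $p\notin I^{-}([x])$ and $p\notin I^{+}([y])$. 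Now the limit-curve argument is applied to each class \emph{separately}: $p\in\dot{[x]}\setminus I^{-}([x])$ yields a future lightlike ray $\sigma\subset\dot{[x]}$ from $p$ (as in Lemma~\ref{hzo}), and time-dually $p\in\dot{[y]}\setminus I^{+}([y])$ yields a past lightlike ray $\eta\subset\dot{[y]}$ ending at $p$. Achronality of the concatenation $\eta\cup\sigma$ is then immediate: $u\ll v$ with $u\in\eta\subset\overline{[y]}$ and $v\in\sigma\subset\overline{[x]}$ would give $y\ll w\ll x$ for an intermediate $w$, i.e.\ $y\ll x$, while the degenerate cases $u=p$ and $v=p$ contradict $p\notin I^{-}([x])$ and $p\notin I^{+}([y])$ respectively. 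This trichotomy on the causal relation between the two classes, and the resulting assignment of the future ray to one boundary and the past ray to the other, is the content your proposal lacks and cannot be supplied by the limit-curve machinery alone.
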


%The previous theorem may also signal a physical effect of roughly
%speaking, repulsion of the chronology violating classes. Indeed, if
%null completeness is preserved there should be some mechanism to
%preventseems to be

 For the proof of the next lemma see \cite[Prop.
2]{kriele89}, or the proof of \cite[Theorem 12]{minguzzi07d}.

\begin{lemma} \label{vhq}
Let $[r]$ be a chronology violating class. If $p \in \dot{[r]}$ then
through $p$ passes a future lightlike ray contained in $\dot{[r]}$
or a past lightlike ray contained in $\dot{[r]}$ (and possibly
both).
\end{lemma}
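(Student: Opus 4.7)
My strategy is to build an inextendible causal curve through $p$ inside $\overline{[r]}$ as a limit of closed timelike curves accumulating at $p$, and then to argue that one of its one-sided halves both lies in $\dot{[r]}$ and is achronal, hence a lightlike ray.

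Since $[r]$ is open and $p\in\dot{[r]}$, I pick a sequence $p_n\in[r]$ with $p_n\to p$; each relation $p_n\ll p_n$ yields a closed timelike curve $\sigma_n\subset [r]$ through $p_n$ (every point of a closed timelike curve meeting $[r]$ has the same chronology class $[r]$). Traversing $\sigma_n$ infinitely often in both time directions and reparametrizing by $h$-arc length for a fixed complete auxiliary Riemannian metric $h$ produces an inextendible timelike curve $\tilde\sigma_n\colon\mathbb{R}\to[r]$ with $\tilde\sigma_n(0)=p_n$. The standard limit curve lemma then extracts, on a subsequence, a uniform-on-compacta limit $\gamma\colon\mathbb{R}\to\overline{[r]}$, an $h$-arc length parametrized inextendible causal curve with $\gamma(0)=p$; the use of $h$-arc length is precisely what excludes the degenerate possibility that $\gamma$ collapses to the constant curve at $p$.

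Next I would establish the dichotomy that $\gamma$ cannot re-enter $[r]$ on both sides of $0$. If $\gamma(t_+)\in[r]$ for some $t_+>0$, the push-up property applied to $p\le \gamma(t_+)\ll r$ gives $p\ll r$; symmetrically, $\gamma(t_-)\in[r]$ with $t_-<0$ yields $r\ll p$; together they would place $p$ in $I^+(r)\cap I^-(r)=[r]$, contradicting $p\in\dot{[r]}$. Up to time reversal I may therefore assume $\gamma^-:=\gamma|_{(-\infty,0]}\subset\overline{[r]}\setminus[r]=\dot{[r]}$.

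The main obstacle is to show that at least one of $\gamma^-$ and $\gamma^+:=\gamma|_{[0,\infty)}$ is achronal, because an inextendible achronal causal curve is automatically a null geodesic, and thus a lightlike ray through $p$. Suppose $\gamma^-$ is not achronal; then there exist $s_1<s_2\le 0$ with $\gamma(s_1)\ll\gamma(s_2)\le p$, so $\gamma(s_1)\ll p$. Approximating $\gamma(s_1)\in\dot{[r]}$ by points $x_n\in[r]$ and using openness of $I^-(p)$ one obtains $x_n\ll p$ for large $n$, and then $r\ll x_n$ yields $r\ll p$. If in addition $\gamma^+$ meets $[r]$ at some $t_+>0$, the same push-up produces $p\ll r$, giving $p\in[r]$, a contradiction; and if $\gamma^+\subset\dot{[r]}$ is also non-achronal, a symmetric approximation argument applied to future points produces $p\ll r$ and the same contradiction. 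Hence whenever $\gamma^-$ fails to be achronal, $\gamma^+$ must lie in $\dot{[r]}$ and be achronal, supplying a future lightlike ray from $p$; otherwise $\gamma^-$ is itself a past lightlike ray in $\dot{[r]}$ ending at $p$. In either alternative the desired ray through $p$ has been produced.
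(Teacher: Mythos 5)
Your proof is correct. The paper does not spell out its own argument for this lemma---it defers to \cite[Prop.~2]{kriele89} and the proof of \cite[Theor.~12]{minguzzi07d}---but those proofs run exactly along your lines (closed timelike curves of $[r]$ accumulating at $p$, the limit curve theorem producing an inextendible causal limit curve through $p$ inside $\overline{[r]}$, and push-up together with openness of $I^{\pm}$ forcing at least one half of that curve to be an achronal ray contained in $\dot{[r]}$), so this is essentially the same approach; your infinite-wrapping, $h$-arc-length reparametrization is simply a tidy way to bypass the bounded-versus-divergent length case distinction in the limit curve theorem.
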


%
%\begin{lemma}
%The set $\mathcal{B}_f$ made by all the points of $\dot{C}$ through
%which there passes a future lightlike ray contained in $\dot{C}$ is
%closed. The same holds for the analogously defined set
%$\mathcal{B}_p$, and $\dot{\mathcal{C}}=\mathcal{B}_p \cup
%\mathcal{B}_f $.
%\end{lemma}
%
%
%\begin{proof}
%It is a consequence of the fact that  a sequence  of future
%lightlike rays $\sigma_n$ of starting points $x_n \to x$ has as
%limit curve a future lightlike ray of starting point $x$
%\cite{minguzzi07c}, and analogously in the past case. Clearly, by
%lemma \ref{vhq} $\mathcal{C}=\mathcal{B}_f\cup \mathcal{B}_p$.
%\end{proof}

\begin{definition}
Let $[r]$ be a chronology violating class. The set $R_f([r])$ is
that subset of $\dot{[r]}$ which consists of the points $p$ through which
passes a future lightlike ray contained in $\dot{[r]}$. The set
$R_p([r])$ is defined analogously.
\end{definition}
%
%Given a chronology violating class $[r]$ we define $R_f([r])$ as the
%subset of $\dot{[r]}$ made of the points $p$ through which passes a
%future lightlike ray contained in $\dot{[r]}$. Analogous definition
%is given for $R_p([r])$, and as done above it is easy to prove that
%they are closed and $\dot{[r]}=R_p([r]) \cup R_f([r])$.

\begin{lemma} \label{xws}
The sets $R_p([r])$ and $R_f([r])$ are closed and
$\dot{[r]}=R_p([r]) \cup R_f([r])$.
\end{lemma}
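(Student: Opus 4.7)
\emph{Proof plan.} The equality $\dot{[r]}=R_p([r])\cup R_f([r])$ is nearly immediate: the inclusion from right to left is built into the definitions of $R_p([r])$ and $R_f([r])$ as subsets of $\dot{[r]}$, while the inclusion from left to right is exactly the content of Lemma~\ref{vhq}. So the substance of the lemma lies in the closedness assertion, and by a time-dual argument it suffices to handle $R_f([r])$.

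The plan is a standard limit curve argument. Take any sequence $p_n\in R_f([r])$ converging to some $p$; since $\dot{[r]}$ is closed as a topological boundary, $p\in\dot{[r]}$. By hypothesis each $p_n$ is the past endpoint of a future lightlike ray $\gamma_n$ entirely contained in $\dot{[r]}$. The limit curve theorem (see, e.g., \cite{beem96,minguzzi07d}) then produces, after passing to a subsequence, a future inextendible continuous causal curve $\gamma$ issuing from $p$, obtained as a uniform-on-compact-parameter-intervals limit of the $\gamma_n$. Since $\dot{[r]}$ is closed and each $\gamma_n$ lies in it, the limit curve $\gamma$ also lies in $\dot{[r]}$.

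The step I consider the heart of the argument is upgrading $\gamma$ from a continuous causal curve to a lightlike ray, i.e.\ verifying achronality. Suppose for contradiction $\gamma$ contains two points $a\ll b$. By the uniform convergence I can pick $a_n,b_n\in\gamma_n$ with $a_n\to a$, $b_n\to b$ and $a_n$ preceding $b_n$ along $\gamma_n$. Because $I^-$ is open, $a_n\ll b_n$ for all $n$ large enough, contradicting the achronality of $\gamma_n$. Hence $\gamma$ is a future lightlike ray through $p$ contained in $\dot{[r]}$, so $p\in R_f([r])$ and $R_f([r])$ is closed.

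The only delicate points I anticipate are (i) ensuring the limit curve $\gamma$ inherits future inextendibility from the rays $\gamma_n$, which is a standard consequence of the limit curve theorem applied to inextendible causal curves, and (ii) selecting $a_n,b_n$ in the correct causal order along $\gamma_n$, which follows from uniform convergence on a compact parameter interval containing the pre-images of $a$ and $b$. No stronger hypothesis on the spacetime (such as global hyperbolicity of $M\setminus\overline{\mathcal{C}}$) is needed.
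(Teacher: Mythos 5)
Your proof is correct and follows essentially the same route as the paper: the union identity is immediate from Lemma~\ref{vhq} and the definitions, and closedness is obtained from the limit curve theorem applied to a sequence of future lightlike rays with converging starting points, the limit being a future lightlike ray contained in the closed set $\dot{[r]}$. The paper simply cites this limit-curve fact from \cite{minguzzi07c} in one line, whereas you spell out the achronality check explicitly; the content is the same.
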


\begin{proof}
It is a consequence of the fact that  a sequence  of future
lightlike rays $\sigma_n$ of starting points $x_n \to x$ has as
limit curve a future lightlike ray of starting point $x$
\cite{minguzzi07c}, and analogously in the past case. Clearly, by
lemma \ref{vhq}, $\dot{[r]}=R_p([r]) \cup R_f([r])$.
\end{proof}

Note that it can be $R_p([r])\cap R_f([r])\ne \emptyset$ (see Fig.\ \ref{edge}).

A set $F$ is a said to be a {\em future set} if $I^{+}(F)\subset F$.
A future set is open iff $I^{+}(F)=F$. If $F$ is  future then
$J^{+}(\bar{F})\subset \bar{F}$ which implies that the closure $\bar{F}$ is future. Analogous
definitions and results hold for past sets, in particular $F$ is a
future set iff $M\backslash F$ is a past set. The boundary of a
future set is an {\em achronal boundary} \cite{beem96}.

The achronal boundary $\p I^{-}([r])$ will be particulary important
in what follows. The proof of the next result is rather standard.

\begin{proposition}
Through every point $p$ of the achronal boundary $\p I^{-}([r])$
starts a (possibly non-unique) future lightlike ray contained in $\p
I^{-}([r])$. Furthermore, if a causal curve connects two distinct
points $x$ and $y$ of $\p I^{-}([r])$ then the causal curve is
contained in $\p I^{-}([r])$ and coincides with a segment of future
lightlike ray contained in $\p I^{-}([r])$.
\end{proposition}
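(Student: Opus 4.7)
The plan is to exploit that $I^{-}([r])$ is an open past set (with $I^{-}([r])=I^{-}(r_0)$ for any fixed $r_0\in[r]$, since points in a chronology violating class share the same chronological past), so that $\partial I^{-}([r])$ is an achronal boundary; the two claims will then both follow from a limit curve construction combined with the elementary chronological implication $p\le z\ll r_0\Rightarrow p\ll r_0$.

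For the first claim, I would fix $r_0\in[r]$ and pick a sequence $p_n\to p$ with $p_n\in I^{-}([r])$, which is available because $p\in\overline{I^{-}([r])}\setminus I^{-}([r])$. Joining each $p_n$ to $r_0$ by a timelike curve $\gamma_n\subset I^{-}([r])$ and applying the limit curve lemma, I extract a future-directed continuous causal curve $\gamma$ starting at $p$ which either has future endpoint $r_0$ or is future-inextendible. The curve $\gamma$ lies in $\overline{I^{-}([r])}$ by construction, and no point $z\in\gamma$ can belong to $I^{-}([r])$, for otherwise $p\le z\ll r_0$ would force $p\in I^{-}([r])$, contradicting $p\in\partial I^{-}([r])$. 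Hence $\gamma\subset\partial I^{-}([r])$. The first alternative is then impossible because $r_0\in[r]\subset I^{-}([r])$, so $\gamma$ must be future-inextendible. Since a causal curve contained in an achronal set is an achronal lightlike geodesic, $\gamma$ is a future lightlike ray in $\partial I^{-}([r])$.

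For the second claim, let $\alpha$ be a causal curve from $x$ to $y$ with $x,y\in\partial I^{-}([r])$ distinct. Achronality of $\partial I^{-}([r])$ rules out $x\ll y$, so by a standard result $\alpha$ must be a lightlike geodesic segment. The same two-sided trick yields $\alpha\subset\partial I^{-}([r])$: for $z\in\alpha$ one has $z\le y\in\overline{I^{-}([r])}$, while $z\in I^{-}([r])$ would give $x\le z\ll r_0$ and hence $x\in I^{-}([r])$, a contradiction. Applying the first claim at $y$ one obtains a future lightlike ray $\beta$ from $y$ inside $\partial I^{-}([r])$; the concatenation of $\alpha$ followed by $\beta$ is a future-directed causal curve in the achronal set $\partial I^{-}([r])$, hence achronal, hence a single lightlike geodesic. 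Being future-inextendible (as $\beta$ is), this concatenation is a future lightlike ray in $\partial I^{-}([r])$ of which $\alpha$ is an initial segment.

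The main technical obstacle I anticipate is the clean application of the limit curve lemma and the identification of achronal causal curves with lightlike geodesics; both are standard but must be invoked with the right hypotheses (subsequencing, no $h$-length collapse, $p\ne r_0$, continuity of the tangent at the concatenation point). Conceptually, however, the entire argument reduces to the one-line chronological composition $p\le z\ll r_0\Rightarrow p\ll r_0$, which is what prevents any candidate curve lying in $\overline{I^{-}([r])}$ from ever entering the open set $I^{-}([r])$.
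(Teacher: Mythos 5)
Your proposal is correct and follows essentially the same route as the paper: the limit curve theorem applied to timelike curves from $p_n\in I^{-}([r])$ to a fixed point of $[r]$, the observation that the limit curve cannot meet the open set $I^{-}([r])$ (else $p\in I^{-}([r])$ by push-up), and the concatenation with the ray at $y$ to handle the second claim. The only cosmetic difference is that you deduce $z\in\overline{I^{-}([r])}$ from $z\le y$ via the past-set property of the closure, where the paper argues directly with points $z'\ll z$; both are valid with the tools already stated in the text.
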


\begin{proof}
Let $\sigma_n$ be a timelike curve connecting $p_n\in I^{-}([r])$ to $r$, with
$p_n\to p$. By the limit curve theorem \cite{minguzzi07c} either there is a continuous
causal curve connecting $p$ to  $r$, which is impossible because
$p\notin I^{-}([r])$ or there is a future inextendible continuous
causal curve $\sigma$ contained in $\overline{I^{-}([r])}$. No point
of this curve can be contained in $I^{-}([r])$ otherwise $p\in
I^{-}([r])$ thus $\sigma\in \p I^{-}([r])$. Since $\p I^{-}([r])$ is
achronal $\sigma$ is a lightlike ray.

If the causal curve $\gamma$ connects $x$ to $y$ then between $x$
and $y$ no point of it can belong to $I^{-}([r])$ otherwise $x\in
I^{-}([r])$, a contradiction. Let $z\in \gamma \backslash\{y\}$, and
take $z'\ll z$, then $z'\ll y$ and since $I^{+}$ is open and $y \in
\p I^{-}([r])$ we have $z'\in I^{-}([r])$. Taking the limit $z'\to
z$ we obtain $z\in \overline{I^{-}([r])}$ thus $z\in \p I^{-}([r])$.
The causal curve obtained by joining $\gamma$ with the lightlike ray
starting from $y$ must be achronal as it is contained in $\p
I^{-}([r])$ and thus it is a lightlike ray.
\end{proof}

\begin{lemma} \label{hzo}
Let $[r]$ be a chronology violating class then
$I^{-}(r)=I^{-}([r])=I^{-}(\overline{[r]}\,)$ and the following sets
coincide:
\begin{itemize}
\item[(i)] $\overline{[r]}\, \backslash I^{-}([r])$,
\item[(ii)] $\dot{[r]} \backslash I^{-}([r])$,
\item[(iii)] $R_f([r]) \backslash I^{-}([r])$,
\item[(iv)] $\dot{[r]}\cap \p I^{-}([r])$.
\end{itemize}
\end{lemma}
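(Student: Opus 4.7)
The plan is to establish first the chain of equalities $I^{-}(r)=I^{-}([r])=I^{-}(\overline{[r]})$, and then deduce the coincidence of the four sets by reducing (i), (iii), (iv) to (ii). For the past sets, since every $y\in[r]=I^{+}(r)\cap I^{-}(r)$ satisfies $y\ll r$, one has $I^{-}(y)\subset I^{-}(r)$, hence $I^{-}([r])\subset I^{-}(r)$; the reverse is immediate from $r\in[r]$. For the third equality, given $z\ll y$ with $y\in\overline{[r]}$, openness of $I^{+}(z)$ supplies a point $w\in[r]$ near $y$ with $z\ll w$, so $z\in I^{-}([r])$.

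Now for the equivalences. The class $[r]$ is open (intersection of two open sets), so $\overline{[r]}=[r]\sqcup\dot{[r]}$. Moreover $[r]\subset I^{-}([r])$ because each $y\in[r]$ satisfies $y\ll y$. Removing $I^{-}([r])$ from $\overline{[r]}$ therefore yields $\dot{[r]}\setminus I^{-}([r])$, which establishes (i)$=$(ii). For (iv)$=$(ii), note that $I^{-}([r])$ is open, so $\partial I^{-}([r])=\overline{I^{-}([r])}\setminus I^{-}([r])$; since $[r]\subset I^{-}([r])$, taking closures gives $\dot{[r]}\subset\overline{[r]}\subset\overline{I^{-}([r])}$, and intersecting with $\dot{[r]}$ collapses $\partial I^{-}([r])$ to its part outside $I^{-}([r])$.

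The nontrivial equality is (iii)$=$(ii). The inclusion $R_f([r])\setminus I^{-}([r])\subset\dot{[r]}\setminus I^{-}([r])$ is immediate from $R_f([r])\subset\dot{[r]}$. For the converse, let $p\in\dot{[r]}\setminus I^{-}([r])$ and choose $p_n\in[r]$ with $p_n\to p$. Each $p_n$ lies on a closed timelike curve, which, traversed infinitely many times, furnishes a future-inextendible timelike curve $\sigma_n$ based at $p_n$ and entirely contained in $[r]$. The limit curve theorem (as invoked in the proof of Lemma~\ref{xws}) then produces a future-inextendible continuous causal curve $\sigma$ starting at $p$ and contained in $\overline{[r]}$.

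The main obstacle is to deduce that $\sigma\subset\dot{[r]}$ and that $\sigma$ is achronal, both exploiting $p\notin I^{-}([r])$. If a point $q\in\sigma$ after $p$ were in $[r]$, then by openness of $[r]$ a nearby point $q''\in\sigma\cap[r]$ would satisfy $p\le q''\ll q''$, giving $p\ll q''$ and thus $p\in I^{-}([r])$, a contradiction; hence $\sigma\subset\dot{[r]}$. Similarly, if $\sigma(t_1)\ll\sigma(t_2)$ for some $0\le t_1<t_2$, then $p\le\sigma(t_1)\ll\sigma(t_2)$ gives $p\ll\sigma(t_2)\in\overline{[r]}$, and a nearby point of $[r]$ inside the open set $I^{+}(p)$ again yields $p\in I^{-}([r])$, a contradiction. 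Thus $\sigma$ is an achronal, future-inextendible causal curve at $p$ contained in $\dot{[r]}$, i.e.\ a future lightlike ray witnessing $p\in R_f([r])$.
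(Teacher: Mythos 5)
Your proof is correct and follows essentially the same route as the paper: the only nontrivial inclusion, (ii) $\subset$ (iii), is handled by applying the limit curve theorem to a sequence of causal curves in $[r]$ issuing from points $p_n \to p$ and then showing the limit ray is confined to $\dot{[r]}$ and achronal by exploiting $p\notin I^{-}([r])$. The minor variations---using future-inextendible curves winding around closed timelike curves (which sidesteps the dichotomy of the limit curve theorem that the paper must dispose of as its case (a)), and deriving (ii)$=$(iv) from $\dot{[r]}\subset\overline{[r]}\subset\overline{I^{-}([r])}$ rather than from openness of $I^{+}$---are both sound.
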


\begin{proof}
The inclusion $I^{-}([r])\subset I^{-}(\overline{[r]}\,)$ is
obvious. The other direction follows immediately from the fact that
$I^{+}$ is open.

(i) $\Leftrightarrow$ (ii) $\Leftrightarrow$ (iii). $R_f([r])
\backslash I^{-}([r]) \subset \overline{[r]}\, \backslash
I^{-}([r])$ is trivial, $\overline{[r]}\, \backslash I^{-}([r])
\subset \dot{[r]} \backslash I^{-}([r])$ follows from $[r]\subset
I^{-}([r])$, and it remains to prove $\dot{[r]} \backslash
I^{-}([r]) \subset R_f([r]) \backslash I^{-}([r])$. Let $p\in
\dot{[r]} \backslash I^{-}([r])$, there is a sequence $p_n \in [r]$,
$p_n \to p$. Since $p_n \in [r]$ there are timelike curves
$\sigma_n$ entirely contained in $[r]$ which connect $p_n$ to $r$.
By the limit curve theorem there is either (a) a limit continuous
causal curve connecting $p$ to $r$, in which case as $[r]$ is open,
$p \in I^{-}([r])$, a contradiction, or (b) a limit future
inextendible continuous causal curve $\sigma$ starting from $p$ and
contained in $\overline{[r]}$. Actually $\sigma$ is contained in
$\dot{[r]}$ otherwise $p \in I^{-}([r])$, a contradiction. Moreover,
$\sigma$ is a future lightlike ray, otherwise there would be $q \in
\dot{[r]}\cap \sigma$, $p\ll q$ and as $I^{+}$ is open $p \in
I^{-}([r])$, a contradiction.

(ii) $\Leftrightarrow$ (iv). Let $p\in \dot{[r]}\backslash
I^{-}([r])$ and let $x\ll p$. Since $I^{+}$ is open and $p \in
\overline{[r]}$, $x \in I^{-}([r])$, and taking the limit $x \to p$
we obtain $p \in \overline{I^{-}([r])}$. But $p\notin I^{-}([r])$
thus $p\in \p I^{-}([r])$ and hence $\dot{[r]}\backslash I^{-}([r])
\subset \dot{[r]}\cap  \p I^{-}([r])$. For the converse note that if $p
\in \dot{[r]}\cap  \p I^{-}([r])$ then $p\notin I^{-}([r])$ hence
$p\in \dot{[r]}\backslash I^{-}([r])$.
\end{proof}

Let us define the sets
\[
B_f([r]):=\overline{[r]}\, \backslash
I^{-}([r]), \quad  \textrm{and} \quad B_p([r]):=\overline{[r]}\,\backslash I^{+}([r]).
\]
 Observe that $(iv)$ establishes that $B_f([r])$ is a subset of the achronal boundary $\p I^{-}([r])$ and similarly, $B_p([r])$ is a  subset of the achronal boundary $\p I^{+}([r])$.

\begin{definition}
By {\em generator} of the achronal set $A$ we mean a lightlike ray
contained in $A$.
\end{definition}
We do not impose that the generator be a maximally
extended lightlike ray contained in $A$. In other words, as a matter
of terminology, if $\sigma:[0,b)\to M$ is a generator of $A$ then
$\sigma:[a,b)\to M$, $0\le a<b$, is also a generator of $A$.

\begin{lemma}
The set $B_f([r])$ is closed, achronal, and generated by future
lightlike rays. Analogously, the set $B_p([r])$ is closed, achronal,
and generated by past lightlike rays.
\end{lemma}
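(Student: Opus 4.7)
My plan is to derive all three properties of $B_f([r])$ essentially for free from Lemma \ref{hzo}, reserving the main work for verifying that the future lightlike rays through boundary points actually stay inside $B_f([r])$ (not just inside $\dot{[r]}$). The past case will be entirely symmetric, obtained by reversing the time orientation.

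First, closedness is immediate: $\overline{[r]}$ is closed and $I^{-}([r])$ is open, so $B_f([r])=\overline{[r]}\,\cap(M\setminus I^{-}([r]))$ is closed as an intersection of closed sets. Second, achronality follows from part (iv) of Lemma \ref{hzo}, which identifies $B_f([r])$ with $\dot{[r]}\cap \p I^{-}([r])$; since $\p I^{-}([r])$ is an achronal boundary, it is achronal, and any subset inherits achronality.

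For the generation statement, I would use the equality of (i) and (iii) in Lemma \ref{hzo}: every $p\in B_f([r])$ lies in $R_f([r])$, so there is a future lightlike ray $\sigma$ starting at $p$ and contained in $\dot{[r]}\subset \overline{[r]}$. The one point that needs a short argument is that $\sigma$ avoids $I^{-}([r])$. Suppose for contradiction there is $q\in\sigma$ with $q\in I^{-}([r])$; then $p\le q\ll s$ for some $s\in[r]$, and the standard push-up property gives $p\ll s$, i.e., $p\in I^{-}([r])$, contradicting $p\in B_f([r])$. Hence $\sigma\subset \overline{[r]}\,\setminus I^{-}([r])=B_f([r])$, and $\sigma$ is a generator in the sense of the definition above.

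The only mildly delicate step is this push-up argument, but it is entirely standard; the substantive content has already been packaged into Lemma \ref{hzo}, so the present lemma is really a clean repackaging of its parts (i), (iii), and (iv). The past version $B_p([r])=\overline{[r]}\,\setminus I^{+}([r])$ is handled by the time-dual of every step, using the analogue of Lemma \ref{hzo} for $R_p([r])$ and $\p I^{+}([r])$ (which holds by symmetry of the construction).
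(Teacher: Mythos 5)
Your proof is correct, and two of its three parts (closedness, and the generation by future lightlike rays via Lemma \ref{hzo}(iii) plus the observation that the ray cannot meet $I^{-}([r])$ without dragging $p$ into $I^{-}([r])$) coincide with the paper's argument. Where you genuinely diverge is achronality: the paper proves it directly, by taking a timelike curve $\sigma$ between two points of $B_f([r])$ and running a case analysis on where $\sigma((0,1))$ can lie --- it cannot touch $\dot{[r]}$ (else an interior point would be in $[r]\cap\dot{[r]}$), it cannot lie in $[r]$ (else the initial endpoint would be in $I^{-}([r])$), and it cannot lie in $M\backslash\overline{[r]}$ (else a point $z$ there would satisfy $r\ll z\ll r$). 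You instead read achronality off Lemma \ref{hzo}(iv), i.e.\ the inclusion $B_f([r])=\dot{[r]}\cap\p I^{-}([r])\subset\p I^{-}([r])$, together with the standard fact that the boundary of a past set is achronal --- a fact the paper itself states and uses elsewhere (indeed its remark after Lemma \ref{hzo} all but makes this observation). Your route is shorter and cleanly localizes the content in Lemma \ref{hzo}; the paper's direct argument is self-contained and, as a by-product, exhibits the finer structure of timelike curves near $\dot{[r]}$ (that their interiors lie entirely in $[r]$ or entirely outside $\overline{[r]}$), which is reused in spirit in later propositions. Both are valid; there is no gap in your version.
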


\begin{proof}
Let us give the proof for $B_f([r])$, the proof for the other case
being analogous. The closure of $B_f([r])$ is immediate from the
definition. Let $p \in B_f([r])$, as $p \in R_f([r])$ there is a
future lightlike ray starting from $p$ entirely contained in
$\dot{[r]}$ and hence in $R_f([r])$. Moreover, no point of this ray
can belong to $I^{-}([r])$ otherwise $p$ would belong to
$I^{-}([r])$. We conclude that the whole ray is contained in
$B_f([r])$.

Let us come to the proof of achronality. Assume by contradiction
that there is a timelike curve $\sigma: [0,1] \to M$ whose endpoints
$p=\sigma(0)$ and $q=\sigma(1)$ belong to $B_f([r])$. There cannot
be a value of $t \in (0,1)$ such that $\sigma(t)\in \dot{[r]}$
otherwise as $I^{+}$ is open, and $p,q \in \dot{[r]}$, we would have
$r\ll \sigma(t)\ll r$, that is $\sigma(t) \in [r]$, in contradiction
with $\sigma(t)\in \dot{[r]}$. Thus either $\sigma((0,1))$ is
contained in $[r]$ or it is contained in $M\backslash
\overline{[r]}$. The former case would imply $p \in I^{-}([r])$, a
contradiction. In the latter case it is possible to find $z \in
\sigma((0,1))\cap M\backslash \overline{[r]}$, and as $p\ll z \ll q$
and $I^{+}$ is open, $r \ll z \ll r$, a contradiction.
\end{proof}

%
%\begin{lemma}
%The sets $B_f([r])$ and $B_p([r])$ are achronal.
%\end{lemma}
%
%\begin{proof}
%Let us give the proof for $R_f([r])$, the other case being
%analogous. Assume by contradiction that there is a timelike curve
%$\sigma: [0,1] \to M$ whose endpoints $p=\sigma(0)$ and
%$q=\sigma(1)$ belong to $R_f([r])$. There cannot be a value of $t
%\in (0,1)$ such that $\sigma(t)\in \dot{[r]}$ otherwise as $I^{+}$
%is open $r\ll \sigma(t)\ll r$, that is $\sigma(t) \in [r]$, in
%contradiction with $\sigma(t)\in \dot{[r]}$. Thus either
%$\sigma((0,1))$ is contained in $[r]$ or it is contained in
%$M\backslash \overline{[r]}$. In the latter case it is possible to
%find $z \in \sigma((0,1))\cap M\backslash \overline{[r]}$, and as
%$p\ll z \ll q$ and $I^{+}$ is open, $r \ll z \ll r$, a
%contradiction. In the former case
%
%
%DA TERMINARE
%\end{proof}

\begin{proposition} \label{chd}
Let $[r]$ be a chronology violating class, then $I^{+}([r]) \cap
\dot{[r]} \subset B_f([r])$ and $I^{-}([r]) \cap \dot{[r]} \subset
B_p([r])$. Moreover, if $p \in I^{+}([r]) \cap R_p([r])$ or
$I^{-}([r]) \cap R_f([r])$  then through $p$ passes an inextendible lightlike
geodesic contained in $\dot{[r]}$.
\end{proposition}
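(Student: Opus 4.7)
The plan is to exploit two tools already available: the openness of $[r]$ (so $[r]\cap\dot{[r]}=\emptyset$), and the preceding lemma telling us that $B_f([r])$ and $B_p([r])$ are achronal and generated by future (resp.\ past) lightlike rays.

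For the two inclusions, I would argue as follows. Take $p\in I^{+}([r])\cap\dot{[r]}$. Since $p\in\dot{[r]}\subset\overline{[r]}$, it suffices to check $p\notin I^{-}([r])$. If instead $p\in I^{+}([r])\cap I^{-}([r])$, then $r\ll p\ll r$, so $p\sim r$ and hence $p\in [r]$, contradicting $p\in\dot{[r]}$ because $[r]$ is open. This gives $p\in B_f([r])$; the second inclusion $I^{-}([r])\cap\dot{[r]}\subset B_p([r])$ is obtained by interchanging past and future.

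For the second statement, I would treat $p\in I^{+}([r])\cap R_p([r])$, the other case being symmetric. Fix a past lightlike ray $\alpha$ from $p$ contained in $\dot{[r]}$. The central step is to upgrade $\alpha\subset\dot{[r]}$ to $\alpha\subset B_f([r])$: if some $q\in\alpha$ lay in $I^{-}([r])$, then $q\ll r$ combined with $r\ll p$ (from $p\in I^{+}([r])$) would yield $q\ll p$, contradicting the achronality of $\alpha$. Together with the first part of the proposition this gives $\alpha\subset B_f([r])$ and $p\in B_f([r])$. Since $B_f([r])$ is generated by future lightlike rays, there is a future lightlike ray $\beta$ starting at $p$ with $\beta\subset B_f([r])$. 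The concatenation of $\alpha$ and $\beta$ at $p$ is then an inextendible causal curve contained in the achronal set $B_f([r])$, hence itself achronal, and therefore a lightlike line, i.e.\ an inextendible lightlike geodesic entirely contained in $\dot{[r]}$.

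The only delicate step, and the one around which the whole proof is organized, is the promotion of $\alpha$ from $\dot{[r]}$ into the smaller set $B_f([r])$. Without it one would just have two lightlike rays in $\dot{[r]}$ meeting at $p$, with nothing a priori preventing a ``corner'' there; once both rays are forced into the achronal set $B_f([r])$, the achronality from the preceding lemma turns the concatenation into a single lightlike geodesic automatically.
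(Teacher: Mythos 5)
Your proof is correct, and both inclusions in the first part are handled exactly as in the paper: $p\in I^{+}([r])\cap I^{-}([r])$ would force $r\ll p\ll r$, hence $p\in[r]$, impossible for a point of $\dot{[r]}$ since $[r]$ is open. For the second statement, however, you take a genuinely different route for the key step. The paper produces the same two rays, a past ray $\eta\subset\dot{[r]}$ from $p\in R_p([r])$ and a future ray $\sigma\subset\dot{[r]}$ from $p\in I^{+}([r])\cap\dot{[r]}\subset B_f([r])\subset R_f([r])$, and then rules out a corner at $p$ by a direct local argument: a corner would place $\sigma\backslash\{p\}$ inside $I^{+}(x)$ for a point $x\in\eta$ near $p$ chosen in the open set $I^{+}(r)$, and since $\sigma\subset\dot{[r]}$ and $I^{+}(x)$ is open this yields $x\ll r\ll x$, a chronology violation at a boundary point. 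You instead globalize: you first promote the past ray into $B_f([r])$ (its achronality rules out any $q\in\alpha\cap I^{-}([r])$, since that would give $q\ll r\ll p$ with $q,p\in\alpha$), and then let the achronality of $B_f([r])$ and its generation by future lightlike rays, both established in the preceding lemma, exclude the corner automatically: an inextendible causal curve lying in an achronal set is a lightlike line. The two arguments rest on the same ingredients; yours delegates the corner analysis to the lemma at the cost of the extra (easy) step $\alpha\subset B_f([r])$, which the paper does not need since it only uses $\eta\subset\dot{[r]}$, and in return it yields the marginally stronger conclusion that the resulting inextendible geodesic lies in $B_f([r])$ (respectively $B_p([r])$ in the time-dual case) rather than merely in $\dot{[r]}$.
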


\begin{proof}
Let us prove the former inclusion, the latter being analogous.

Let $q \in I^{+}([r]) \cap \dot{[r]}$, we have only to prove that
$q\notin I^{-}([r])$. If it were $q\in I^{-}([r])$ then $r \ll q \ll
r$, a contradiction.
%
%Note that it suffice to prove the inclusion $I^{+}([r]) \cap
%\dot{[r]} \subset R_f([r])$ indeed, the set $I^{+}([r]) \cap
%\dot{[r]}$ cannot have non empty intersection with $I^{-}([r])$
%otherwise there would be some point $q \in \dot{[r]}$ such that $r
%\ll q \ll r$, a contradiction.
%
%Let $p \in I^{+}([r]) \cap \dot{[r]}$. By assumption  $p \in
%I^{+}(r)$. Recall that $\dot{[r]}=R_p([r]) \cup R_f([r])$, thus we
%have  to prove that if $p \in R_p([r])$ then $p\in R_f([r])$.
%
%Since $p \in \dot{[r]}$ there is a sequence $p_n \in [r]$, $p_n \to
%p$. Let $\sigma_n$ be a sequence of timelike curves connecting $p_n$
%to $r$ entirely contained in $[r]$. As $I^{+}(r)\ni p$ is open we
%may assume  that (pass to a subsequence if necessary) $p_n \in
%I^{+}(r)$. By the limit curve theorem either there is a continuous
%causal curve $\sigma$ joining $p$ to $r$, in which case $p\le r \ll
%p$ thus $p\ll p$, a contradiction, or there is a limit curve
%$\sigma$ which is a future inextendible continuous causal curve
%starting from $p$. This curve must be a future lightlike ray
%otherwise it is possible to move from $p$ along a  timelike curve to
%reach some point of some $\sigma_n$, from which it follows that $p
%\ll r \ll p$, again a contradiction. Since the curves $\sigma_n$ are
%contained in $[r]$ we have that $\sigma$ is contained in
%$\bar{[r]}$. This future lightlike ray $\sigma$ is actually
%contained in $\dot{[r]}$ indeed if there is a point $y \in \sigma
%\cap [r]$ then $p\le y \ll r \ll p$, a contradiction. We conclude
%that $p\in R_f([r])$.

Let us come to the last statement. As $p \in R_p([r])$ there is a
past lightlike ray $\eta$ contained in $\dot{[r]}$ ending at $p$. As
$p \in I^{+}([r]) \cap \dot{[r]} \subset R_f([r])$, there is a
future lightlike ray $\sigma$ passing through $p$ and contained in
$\dot{[r]}$. This ray is the continuation of the past lightlike ray
$\eta$. Indeed, assume that they do not join smoothly at $p$. Take a
point $x \in I^{+}(r)\cap \eta\backslash\{p\}$ (recall that $I^+$ is open), so that, because of
the corner at $p$, $\sigma\backslash\{p\} \subset I^{+}(x)$. Again,
since $I^{+}(x)$ is open and $\sigma\subset \dot{[r]}$ we have $x
\ll r$, thus since $r \ll x$, we conclude $x\ll x$ which is
impossible as $x \in \eta \subset \dot{[r]}$. We have therefore
obtained a lightlike geodesic $\gamma=\sigma \circ \eta$ passing
through $p$ entirely contained in $\dot{[r]}$.
\end{proof}

\begin{corollary}
The following identity holds: $\dot{[r]}=B_p([r]) \cup B_f([r])$.
\end{corollary}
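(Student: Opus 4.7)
The plan is to establish the two inclusions separately, using only the preceding Lemma \ref{hzo} and Proposition \ref{chd}, so no new limit-curve or achronality argument is needed.

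For the inclusion $B_p([r])\cup B_f([r])\subset\dot{[r]}$, I would observe that $[r]$ is open, so in particular $[r]\subset I^{+}([r])$ and $[r]\subset I^{-}([r])$. Therefore
\[
B_f([r])=\overline{[r]}\setminus I^{-}([r])\subset \overline{[r]}\setminus [r]=\dot{[r]},
\]
and analogously $B_p([r])\subset\dot{[r]}$. (This is also implicit in part (i)$\Leftrightarrow$(ii) of Lemma \ref{hzo}.)

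For the reverse inclusion $\dot{[r]}\subset B_p([r])\cup B_f([r])$, I would take $p\in\dot{[r]}$ and split according to whether $p$ lies in $I^{-}([r])$ or not. If $p\notin I^{-}([r])$, then trivially $p\in\overline{[r]}\setminus I^{-}([r])=B_f([r])$. If instead $p\in I^{-}([r])$, then Proposition \ref{chd} (the statement $I^{-}([r])\cap\dot{[r]}\subset B_p([r])$) places $p$ in $B_p([r])$. Either way $p\in B_p([r])\cup B_f([r])$.

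There is no real obstacle: the content of the corollary is essentially a repackaging of the inclusions proved in Proposition \ref{chd}. The only subtle point worth checking is that one genuinely cannot have $p\in\dot{[r]}\cap I^{+}([r])\cap I^{-}([r])$, but this is immediate because it would give $r\ll p\ll r$, hence $p\in [r]$, contradicting $p\in\dot{[r]}$ together with the openness of $[r]$. This observation is the one that forces $p$ to avoid at least one of $I^{\pm}([r])$, and hence to land in at least one of $B_p([r])$ or $B_f([r])$.
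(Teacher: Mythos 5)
Your proof is correct and follows essentially the same route as the paper's: the crux in both is that a point of $\dot{[r]}$ cannot lie in $I^{+}([r])\cap I^{-}([r])$, since that would give $r\ll p\ll r$ and hence $p\in[r]$, contradicting $p\in\dot{[r]}$. Routing one branch of the dichotomy through Proposition \ref{chd} rather than restating this contradiction is only a cosmetic difference.
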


\begin{proof}
In a direction the inclusion is obvious, thus since
$B_p([r])=\dot{[r]}\backslash I^{+}([r])$ and $B_f([r])=\dot{[r]}
\backslash I^{-}([r])$ we have only to prove that if $p \in
\dot{[r]}$ then $p \notin I^{+}([r])$ or  $p \notin I^{-}([r])$.
Indeed, if $p$ belongs to both sets $r \ll p \ll r$, a
contradiction.
%
%Assume by contradiction that there is $p\in \dot{[r]}\backslash
%[B_p([r]) \cup B_f([r])]$ then by lemma \ref{xws}, $p \in R_p([r])$
%or $p\in R_f([r])$. In the former case since $p \notin B_p([r])$, $p
%\in I^{+}([r])$, thus by lemma \ref{chd}, $p \in R_f([r])$. Since $p
%\notin B_f([r])$, we have $p \in I^{-}([r])$, thus as $r \ll p \ll
%r$ we get a contradiction with $p\in \dot{[r]}$. Analogously, a
%contradiction is obtained in the latter case.
\end{proof}
Note that it can be $B_p([r]) \cap B_f([r])\ne \emptyset$ (see
figure \ref{edge}).
The previous results justify the following definition
\begin{definition}
The sets ${B}_f([r])$ and ${B}_p([r])$ are respectively  {\em the  future
and the past boundaries} of the chronology violating class $[r]$.
\end{definition}

The previous and the next results will prove the reasonability and the good behavior of these definitions.

\begin{proposition}
Let $[r]$ be a chronology violating class then  $I^{+}(B_f([r]))\cap
[r]=\emptyset$. Moreover,  if $B_f([r])\ne \emptyset$ then
$I^{-}(B_f([r]))=I^{-}([r])$. Analogous statements hold in the  past
case.
\end{proposition}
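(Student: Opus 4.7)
My proof of the first claim, $I^{+}(B_f([r]))\cap [r]=\emptyset$, would be a one-line contradiction via transitivity. If $p\in B_f([r])$ and $q\in[r]$ satisfied $p\ll q$, then $q\ll r$ (since $q\in[r]$) would force $p\ll r$, placing $p\in I^{-}(r)=I^{-}([r])$ by Lemma \ref{hzo}; this contradicts $B_f([r])=\overline{[r]}\setminus I^{-}([r])$.

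For the identity $I^{-}(B_f([r]))=I^{-}([r])$, the inclusion $\subset$ is immediate: $B_f([r])\subset\overline{[r]}$ combined with Lemma \ref{hzo} gives $I^{-}(B_f([r]))\subset I^{-}(\overline{[r]})=I^{-}([r])$. For the reverse inclusion, given $q\in I^{-}([r])=I^{-}(r)$, my plan is to interpolate $q\ll q'\ll r$ by choosing $q'$ as an interior point of any timelike curve from $q$ to $r$; transitivity through $r$ then yields $q'\ll s$ for every $s\in[r]$, so $[r]\subset I^{+}(q')$ and $\overline{[r]}\subset\overline{I^{+}(q')}$. Picking any $p\in B_f([r])\subset\overline{[r]}$, there are $s_n\in[r]$ with $s_n\to p$ and $q'\ll s_n$. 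Applying the limit curve theorem \cite{minguzzi07c} to the family of timelike curves from $q'$ to $s_n$, in the favorable case one obtains a continuous causal curve from $q'$ to $p$; then the pushup $I^{+}\circ J^{+}=I^{+}$ applied to $q\ll q'\le p$ delivers $q\ll p$, hence $q\in I^{-}(B_f([r]))$.

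The hard step is the remaining case of the limit curve theorem, in which no causal curve from $q'$ to $p$ is produced but a past-inextendible causal limit $\eta$ ending at $p$ appears. Here I would slide along the future lightlike ray $\sigma\subset B_f([r])$ emanating from $p$ (which exists since $p\in R_f([r])$) and concatenate $\eta$ with $\sigma$ at $p$. A corner at $p$ allows the standard timelike deformation, giving $q\ll\sigma(t)\in B_f([r])$ for small $t>0$; a smooth concatenation forces $\eta\cup\sigma$ to be a single null geodesic lying in $\partial I^{+}(q')$, and one then argues via the structure of generators of the achronal boundary $\partial I^{+}(q')$---either the generator reaches $q'$ in the past (producing $q'\le p$ after all), or it leaves $\partial I^{+}(q')$ at some future parameter, which suffices to place $\sigma(t)\in I^{+}(q')$ and close the argument by pushup. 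The past case is obtained by the time-reversed analysis.
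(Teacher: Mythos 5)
Your treatment of the first claim and of the inclusion $I^{-}(B_f([r]))\subset I^{-}([r])$ is correct and coincides with the paper's. The reverse inclusion, however, contains a genuine gap in its last sub-case. You fix an \emph{arbitrary} $p\in B_f([r])$ and try to reach, from $q'$, either $p$ itself or a point of the future generator $\sigma\subset B_f([r])$ issuing from $p$. The corner case is fine, but when the past-inextendible limit $\eta$ joins $\sigma$ smoothly, your dichotomy for the resulting null geodesic $\gamma=\eta\cup\sigma\subset \overline{I^{+}(q')}$ --- ``either the generator reaches $q'$ in the past, or it leaves $\p I^{+}(q')$ at some future parameter'' --- is false: a generator of $\p I^{+}(q')$ followed into the past may be past inextendible without ever reaching $q'$, and followed into the future it may remain in $\p I^{+}(q')$ forever, i.e.\ $\gamma$ may be a lightlike line contained in $\p I^{+}(q')$. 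In that branch you obtain neither $q'\le p$ nor $\sigma(t)\in I^{+}(q')$, and the argument stops. This is not a removable technicality: if $p\in B_f([r])\cap B_p([r])$ (a set which is nonempty in general, see Fig.~\ref{edge}) then $p\notin I^{+}([r])$, so for $q\in [r]$ there is no $q'$ with $q\ll q'\le p$ at all; for such $p$ the ``favorable case'' can never occur, the proof is forced into the degenerate branch, and it cannot be closed there without further input.

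The fix is to choose $p$ wisely rather than arbitrarily, which is what the paper does, with no limit-curve machinery. One first shows $B_f([r])\cap I^{+}([r])\ne\emptyset$: were it empty, then since $I^{+}([r])\cap\dot{[r]}\subset B_f([r])$ (Prop.~\ref{chd}) no point of $I^{+}([r])$ could leave $[r]$, so $[r]=I^{+}([r])$ would be an open future set; any $x\in B_f([r])\subset\dot{[r]}$ would then satisfy $I^{+}(x)\subset[r]$, i.e.\ $x\in I^{-}([r])$, contradicting the definition of $B_f([r])$. Taking now $p\in B_f([r])\cap I^{+}([r])$ gives some $q_0\in[r]$ with $q_0\ll p$, and since $I^{-}(q_0)=I^{-}([r])$ one concludes $I^{-}([r])\subset I^{-}(p)\subset I^{-}(B_f([r]))$ in one line. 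You may want to see whether your lightlike-line branch can be excluded a posteriori, but as written the proposal does not prove the statement.
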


\begin{proof}
If there were a $p \in B_f([r])$ such that $I^{+}(p)\cap [r] \ne
\emptyset$ then $p \in I^{-}([r])$, a contradiction.

In a direction, $I^{-}(B_f([r]))\subset
I^{-}(\overline{[r]})=I^{-}([r])$. In the other direction, assume
$I^{+}([r])\cap B_f([r]) \ne \emptyset$, then there is $q \in
I^{-}(B_f([r])) \cap [r]$, hence $I^{-}([r])=I^{-}(q) \subset
I^{-}(B_f([r]))$.

The alternative $I^{+}([r])\cap B_f([r]) = \emptyset$ cannot hold,
indeed under this assumption no point of $I^{+}([r])$ would stay
outside $[r]$ as this would imply that $I^{+}([r])\cap \dot{[r]} \ne
\emptyset$ and hence because of $I^{+}([r]) \cap \dot{[r]} \subset
B_f([r])$, $I^{+}([r]) \cap B_f([r]) \ne \emptyset$. Thus the case
$I^{+}([r])\cap B_f([r]) = \emptyset$ leads to $I^{+}([r]) \subset
[r]$ and hence $I^{+}([r]) = [r]$, i.e. $[r]$ is a future set. As
$B_f([r]) \subset \dot{[r]}$, and $B_f([r])\ne \emptyset$ taken $x
\in B_f([r])$ by the property of future sets \cite[Prop.
3.7]{beem96}, $I^{+}(x) \subset [r]$ hence $x \in I^{-}([r])$ in
contradiction with the definition of $B_f([r])$.

\end{proof}

\begin{proposition}
Let $[r]$ be a chronology violating class then $B_f([r])=\dot{[r]}$
if and only if $B_p([r])=\emptyset$. Analogously,
$B_p([r])=\dot{[r]}$ if and only if $B_f([r])=\emptyset$.
\end{proposition}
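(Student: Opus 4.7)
The two stated equivalences are interchanged by time reversal, so the plan is to focus on the first one, $B_f([r])=\dot{[r]}\Leftrightarrow B_p([r])=\emptyset$, and deduce the analog by swapping past and future. The direction $B_p([r])=\emptyset\Rightarrow B_f([r])=\dot{[r]}$ is immediate from the preceding corollary $\dot{[r]}=B_p([r])\cup B_f([r])$: vanishing of $B_p([r])$ forces $B_f([r])=\dot{[r]}$.

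For the reverse direction I would first rewrite the hypothesis. Since $[r]\subset I^-([r])$, the definition yields $B_f([r])=\overline{[r]}\setminus I^-([r])=\dot{[r]}\setminus I^-([r])$, so $B_f([r])=\dot{[r]}$ is equivalent to $\dot{[r]}\cap I^-([r])=\emptyset$; call this condition $(*)$. The plan is then to proceed in two steps.

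In the first step I would show that $(*)$ forces $I^-([r])\subset[r]$, so that $[r]$ is itself a past set. Given $y\in I^-([r])$ I would pick $z\in[r]$ with $y\ll z$ and join them by a timelike curve $\gamma:[0,1]\to M$. The case $y\in\dot{[r]}$ is excluded by $(*)$ directly; if instead $y\notin\overline{[r]}$, the smallest time $t^{*}$ at which $\gamma$ enters $\overline{[r]}$ lies in $(0,1)$ and produces $w=\gamma(t^{*})\in\dot{[r]}$ (openness of $[r]$ rules out $w\in[r]$ without violating minimality), together with $w\ll z$; hence $w\in\dot{[r]}\cap I^-([r])$, again contradicting $(*)$.

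In the second step I would assume $B_p([r])\neq\emptyset$ for contradiction, pick $x\in B_p([r])\subset\dot{[r]}$, and choose $x_n\in[r]$ with $x_n\to x$. Because $I^-(x)\neq\emptyset$ at every point of a spacetime (as seen in a convex normal neighborhood of $x$), I could fix $y\in I^-(x)$; openness of $\ll$ then yields $y\ll x_n$ for $n$ large, so $y\in I^-([r])\subset[r]$ by the first step, whence $x\in I^+([r])$, contradicting $x\in B_p([r])=\dot{[r]}\setminus I^+([r])$. I expect the first step to be the main obstacle, since it requires the boundary-crossing argument to rule out points of $I^-([r])$ lying outside $\overline{[r]}$; once $[r]$ has been identified as a past set, the second step is a routine application of openness of $\ll$.
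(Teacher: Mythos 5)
Your proof is correct and rests on the same key mechanism as the paper's: a timelike curve entering $[r]$ from outside $\overline{[r]}$ must cross $\dot{[r]}$ at a point of $I^{-}([r])$, which the hypothesis $B_f([r])=\dot{[r]}$ (equivalently $\dot{[r]}\cap I^{-}([r])=\emptyset$) forbids. The only difference is organizational: you factor this into the intermediate statement that $[r]$ is a past set ($I^{-}([r])=[r]$), whereas the paper applies the crossing argument directly to a timelike curve from a point $q\ll p$, with $p\in B_p([r])$, to $r$, after first noting that $I^{-}(p)\cap[r]=\emptyset$.
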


\begin{proof}
The direction $B_p([r])=\emptyset \Rightarrow B_f([r])=\dot{[r]}$
follows from $\dot{[r]}=B_p([r]) \cup B_f([r])$. For the converse,
assume $B_f([r])=\dot{[r]}$ and that, by contradiction, $p\in
B_p([r])$ (hence $p \in B_p([r])\cap B_f([r])$), then $I^{-}(p)$ has
no point in $[r]$ otherwise $p \in  I^{+}([r])$ and hence $p\notin
B_p([r])$, a contradiction. Thus if $p\in B_p([r])$ then
$I^{-}(p)\cap [r]=\emptyset$. Take $q \ll p$, as $I^{+}$ is open and
$p \in \dot{[r]}$ there is a timelike curve joining $q$ to $r$. This
curve intersects $\dot{[r]}$ at some point $x$, thus $x \in
\dot{[r]}\cap I^{-}([r])$, and $x \notin B_f([r])$, a contradiction.
We conclude that $B_p([r])=\emptyset$. The proof of the time
reversed case is analogous.
\end{proof}

The definition of the edge of an achronal set can be found in
\cite[Sect. 6.5]{hawking73} or \cite[Def. 14.27]{beem96}.

\begin{definition}
Given an achronal set $S$ the edge of $S$, $\textrm{edge}(S)$, is
the set of points $q\in \bar{S}$ such that for every open set $U\ni
q$ there are $p\in I^{-}(q ,U)$, $r\in I^{+}(q,U)$, necessarily not
belonging to $S$, such that there is a timelike curve in $U$
connecting $p$ to $r$ which does not intersect $S$.
\end{definition}

%The points $p$ and $r$ cannot belong to $S$, otherwise, since
%$I^{+}$ is open and $q \in \bar{S}$, $S$ would not be achronal.

It is useful to recall that $\textrm{edge}(S)$ is closed and
$\bar{S}\backslash S \subset \textrm{edge}(S)\subset \bar{S}$.

\begin{proposition}
$\textrm{edge}(B_f([r]))=\textrm{edge}(B_p([r]))$.
\end{proposition}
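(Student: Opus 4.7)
The plan is to exploit the symmetry between $B_f$ and $B_p$: I will show $\textrm{edge}(B_f)\subset\textrm{edge}(B_p)$, with the reverse inclusion following by time-reversal of the same argument. For $q\in\textrm{edge}(B_f)$ I will verify (i) that $q\in B_p$ (since $\textrm{edge}(B_p)\subset B_p$), and (ii) the edge condition for $B_p$ at $q$; in both steps I plan to use only the edge condition for $B_f$, the decomposition $\dot{[r]}=B_f\cup B_p$, and the inclusions from Proposition~\ref{chd}.

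For (i), I would argue by contradiction: supposing $q\in I^{+}([r])$, I choose an open $V\ni q$ with $V\subset I^{+}([r])$ and apply the edge condition of $B_f$ in $V$ to produce $p\in I^-(q,V)$, $s\in I^+(q,V)$ and a timelike curve $\gamma\subset V$ from $p$ to $s$ with $\gamma\cap B_f=\emptyset$. Since $\dot{[r]}\cap I^{+}([r])\subset B_f$ (Proposition~\ref{chd}), $\gamma$ would also avoid $\dot{[r]}$, and by connectedness would lie entirely in $V\cap[r]$ or in $V\cap(M\setminus\overline{[r]})$. The conditions $p\in V\subset I^{+}([r])$ and $p\ll q\in\overline{[r]}$ force $p\in I^-([r])$ (because $I^+(p)$ is open and meets $[r]$), hence $p\in[r]$; so $\gamma\subset[r]$, $s\in[r]\subset I^-([r])$, and $q\ll s$ gives the contradiction $q\in I^-([r])$ against $q\in B_f$.

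For (ii), I would fix an arbitrary open $U\ni q$ and apply the edge condition of $B_f$ to obtain $p,s,\gamma\subset U$ with $\gamma\cap B_f=\emptyset$. The crucial observation is that $\gamma$ being timelike and $B_p$ being achronal force $\gamma\cap B_p$ to contain at most one point, and since $\dot{[r]}=B_f\cup B_p$, this coincides with $\gamma\cap\dot{[r]}$. If the intersection is empty, $\gamma$ directly witnesses the $B_p$-edge condition; otherwise there is a unique crossing point $y\in B_p\setminus B_f=\dot{[r]}\cap I^-([r])$. As in (i), each of the two portions of $\gamma$ on either side of $y$ lies either in $[r]$ or in $M\setminus\overline{[r]}$; but ``before $y$ in $[r]$'' would force $p\in[r]$ and thus $q\in I^{+}([r])$, against $q\in B_p$, while ``after $y$ in $[r]$'' would force $s\in[r]\subset I^-([r])$ and thus $q\in I^-([r])$, against $q\in B_f$. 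Hence both portions sit in $M\setminus\overline{[r]}$, and in particular $y\in\overline{M\setminus\overline{[r]}}$.

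The concluding step is to re-route $\gamma$ around $y$ through $M\setminus\overline{[r]}$. In a convex normal neighborhood $W\subset U$ of $y$ I pick $a,b\in\gamma\cap W$ on either side of $y$, both lying in $M\setminus\overline{[r]}$; the open chronological diamond $I^+(a,W)\cap I^-(b,W)$ contains $y$ and, by $y\in\overline{M\setminus\overline{[r]}}$, meets $M\setminus\overline{[r]}$. Choosing a point $c$ in this intersection sufficiently close to $y$, I form the piecewise-timelike curve $a\to c\to b$ and, after corner-smoothing, splice it into $\gamma$ in place of the arc from $a$ to $b$ to obtain the desired $\gamma'\subset U$ avoiding $\dot{[r]}$ and hence $B_p$. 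I expect this detour to be the main obstacle: guaranteeing that the two short geodesic segments $a\to c$ and $c\to b$ remain in the open set $M\setminus\overline{[r]}$ requires a small-perturbation argument exploiting continuity of geodesics with respect to endpoints in the convex chart, whereas all the earlier steps reduce to purely combinatorial manipulations of the sets $[r]$, $\dot{[r]}$, $B_f$, $B_p$ and the chronological relations from Proposition~\ref{chd}.
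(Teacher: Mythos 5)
Your step (i) is correct, and it is in fact a cleaner route to $q\in B_p([r])$ than the paper's own sequence argument: choosing the neighbourhood $V\subset I^{+}([r])$ and using $\dot{[r]}\cap I^{+}([r])\subset B_f([r])$ (Proposition \ref{chd}) to force the witnessing curve into $[r]$ is a tidy contradiction. The reduction in step (ii) to ``at most one crossing point $y\in B_p([r])\backslash B_f([r])$'' and the exclusion of the two sub-cases in which a portion of $\gamma$ lies in $[r]$ are also correct.

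The genuine gap is the final detour construction, and you have correctly identified where it sits --- but the right move is not to fill it, because the case it addresses cannot occur, and you already hold every ingredient needed to see this. A crossing point $y\in B_p([r])\backslash B_f([r])$ lies in $\dot{[r]}\cap I^{-}([r])=\dot{[r]}\cap I^{-}(r)$. Since $I^{-}(r)$ is open, the points of $\gamma$ immediately to the future of $y$ lie in $I^{-}(r)$; since they also lie in $I^{+}(y)$ with $y\in\overline{[r]}$ and $I^{-}$ of each such point is open, they lie in $I^{+}(r)$ as well, hence in $I^{+}(r)\cap I^{-}(r)=[r]$. So the portion of $\gamma$ after $y$ \emph{begins inside} $[r]$ --- precisely the alternative you have just excluded (it forces $s\in[r]$ and $q\in I^{-}([r])$, against $q\in B_f([r])$). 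The single-crossing case is therefore contradictory, $\gamma$ (after possibly trimming its past endpoint) already lies in $M\backslash\overline{[r]}\supset M\backslash B_p([r])$, and no rerouting is needed; this is exactly how the paper concludes. As written, your argument instead asserts that the segments $a\to c$ and $c\to b$ can be kept in $M\backslash\overline{[r]}$ by continuity of geodesics with respect to endpoints; that does not follow (closeness of endpoints to $y\in\overline{M\backslash\overline{[r]}}$ gives no control over the interiors of the segments), and the same openness argument above shows that any point $b\gg y$ sufficiently close to $y$ is itself in $[r]$, so the target of your second segment cannot even be chosen in $M\backslash\overline{[r]}$. The obstacle you flagged is real; the repair is to recognise the case as empty rather than to route around it.
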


\begin{proof}
Let $q \in \textrm{edge}(B_f([r]))$ then for every neighborhood
$U\ni q$ there are $x,y \in U$, $x \ll q \ll y$ and a timelike curve
$\sigma$ not intersecting $B_f([r])$ connecting $x$ to $y$ entirely
contained in $U$. The point $y$ cannot belong to $\overline{[r]}$
for otherwise $q \in I^{-}([r])$ and hence $q \notin B_f([r])$
(recall that the edge of an achronal closed set belongs to the same
set), a contradiction. Every intersection point of $\sigma$ with
$\dot{[r]}$ does not belong to $B_f([r])$, and hence belongs to
$B_p([r])$. There cannot be more than one intersection point
otherwise if $z_1 \ll z_2$ are any two intersection points, $z_2 \in
I^{+}(\dot{[r]})\subset I^{+}([r])$ thus $z_2$ cannot belong to
$B_p([r])$, a contradiction. Moreover, $\sigma$ cannot enter $[r]$
otherwise, by the same argument, the next intersection point with
$\dot{[r]}$ would not belong to $B_p([r])$, a contradiction.
Let us exclude the possibility of just one intersection point between $\sigma \backslash \{x\}$ and $ \dot{[r]}$. The intersection point would belong to $B_p([r])\subset \p I^+([r])$ but not to $B_f([r])=\overline{[r]}\backslash I^{-}([r])$, thus it would belong to $\dot{[r]}\cap I^{-}([r])\subset I^{-}(r)$. Thus $\sigma$ enters $[r]$ after the intersection point, a case that we have already excluded.
We
conclude that $\sigma \backslash \{x\} \subset M\backslash
\overline{[r]}$ with possibly $x \in B_p([r])$. However, we can
redefine $x$ by slightly shortening $\sigma$ so that we can assume
$\sigma   \subset M\backslash \overline{[r]}$. It remains to prove
that $q \in B_p([r])$, from which it follows, as $\sigma$ does not
intersect $B_p([r])$, $q \in \textrm{edge}(B_p([r]))$. Assume by
contradiction, $q \notin B_p([r])$, so that $q \in
I^{+}([r])=I^{+}(r)$. Since the previous analysis can be repeated
for every $U\ni q$, we can find a sequence $x_n \notin
\overline{[r]}$, $x_n \to q$, $x_n \ll q$. As $I^{+}(r)$ is open we
can assume $x_n\gg r$, but since $x_n \ll q$ and $q \in \dot{[r]}$,
we have also $x_n \ll r$, thus $x_n \in [r]$, a contradiction. We
conclude that $\textrm{edge}(B_f([r])) \subset
\textrm{edge}(B_p([r]))$ and the other inclusion is proved
similarly.
\end{proof}

From the previous proposition it follows that
$\textrm{edge}(B_f([r])) \subset  B_f([r]) \cap B_p([r])$, however,
the reverse inclusion does not hold in general (see figure
\ref{edge}). Contrary to what happens with Cauchy horizons the
generators of the boundary do not need to reach its edge.

%
%\begin{proposition}
%$B_f([r])$ and $B_p([r])$ are closed subsets of achronal boundaries.
%\end{proposition}
%
%\begin{proof}
%We proved that if $B_f([r]) \ne \emptyset$ we have
%$I^{-}(B_f([r]))=I^{-}([r])$. This past set
%\end{proof}

\begin{figure}[t]
\begin{center}
%\psfrag{C}{{\small $\sigma$}} \psfrag{a}{{\small $A$}}
%\psfrag{b}{{\small $B$}} \psfrag{d}{{\small $x$}} \psfrag{e}{{\small
%$y$}} \psfrag{f}{{\small $\gamma$}} \psfrag{g}{{\small $z$}}
%\psfrag{h}{{\small $\!\!\!\!\!\!B_p([r])$}}
 \includegraphics[width=12cm]{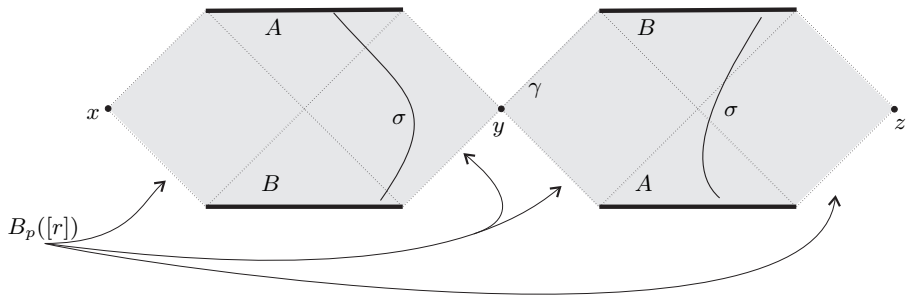}
\end{center}
\caption{Minkowski 1+1 spacetime with four spacelike segments
removed. The interior of the sides with the same label, $A$ or $B$,
have been identified. The shaded region is the only chronology
violating class $[r]$ and $\sigma$ is an example of closed timelike
curve. The points $x,y,z$ belong to $B_f([r])\cap B_p([r])$ but $x,z
\in \textrm{edge}(B_f([r]))$ while $y\notin
\textrm{edge}(B_f([r]))$. In particular, not all the generators of
$B_f([r])$ have past endpoint in $\textrm{edge}(B_f([r]))$ even if
they leave $B_f([r])$ in the past direction. The inextendible
geodesic $\gamma$ is contained in the boundary $\dot{[r]}$ but is
not achronal.} \label{edge}
\end{figure}

\begin{proposition}
The identities $B_f([r])\cap I^{+}([r])=B_f([r])\backslash
B_p([r])=\p I^{-}([r])\cap I^{+}([r])$ hold. If $p\in
B_f([r])\backslash B_p([r])$ and $\gamma:[-1,0] \to M$ is a timelike
curve such that $\gamma(0)=p$ then there is $\epsilon$,
$0<\epsilon<1$, such that $\gamma((-\epsilon,0))\subset [r]$. An
analogous past version also holds.
\end{proposition}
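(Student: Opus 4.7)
The plan is to unpack both identities from the definitions of $B_f$ and $B_p$ and then deduce the curve property by a short openness argument, the whole proof reducing to careful use of the facts that $I^{\pm}$ are open and that (by Lemma~\ref{hzo}) $I^{-}([r])=I^{-}(r)$.

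For the first equality I would argue that since $B_f([r])\subset\overline{[r]}$ and $B_p([r])=\overline{[r]}\backslash I^{+}([r])$, a point of $B_f([r])$ fails to lie in $B_p([r])$ precisely when it lies in $I^{+}([r])$; this already gives $B_f([r])\backslash B_p([r])=B_f([r])\cap I^{+}([r])$. For the second equality, Lemma~\ref{hzo}(iv) supplies $B_f([r])=\dot{[r]}\cap \p I^{-}([r])$, so only the inclusion $\p I^{-}([r])\cap I^{+}([r])\subset \dot{[r]}$ needs real work. Given $p$ in this intersection, I would pick $q_n\in I^{-}(r)$ with $q_n\to p$; as $I^{+}(r)$ is open and contains $p$, eventually $q_n\in I^{+}(r)\cap I^{-}(r)=[r]$, hence $p\in\overline{[r]}$. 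Since $p\in\p I^{-}(r)$ forces $p\notin I^{-}(r)\supset[r]$, we conclude $p\in\dot{[r]}$, so $p\in\dot{[r]}\cap \p I^{-}([r])=B_f([r])$.

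For the curve statement, given $p\in B_f([r])\backslash B_p([r])=B_f([r])\cap I^{+}(r)$ and $\gamma$ as in the hypothesis, the openness of $I^{+}(r)$ produces $\epsilon>0$ with $\gamma((-\epsilon,0])\subset I^{+}(r)$, so it only remains to show $\gamma(t)\in I^{-}(r)$ for $t\in(-\epsilon,0)$. Since $\gamma$ is timelike future directed, $\gamma(t)\ll p$; using $p\in\overline{[r]}\subset\overline{I^{-}(r)}$, the open set $I^{+}(\gamma(t))$ meets $I^{-}(r)$ at some $q$, yielding $\gamma(t)\ll q\ll r$, hence $\gamma(t)\in I^{-}(r)$. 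Combined with $\gamma(t)\in I^{+}(r)$ this gives $\gamma(t)\in [r]$, as wanted; the past version is obtained by time reversal. I do not anticipate any real obstacle: the argument is essentially bookkeeping with the definitions of $B_f,B_p$ together with the openness of $I^{\pm}$, the only mildly nontrivial step being the sequence argument establishing $\p I^{-}([r])\cap I^{+}([r])\subset\dot{[r]}$.
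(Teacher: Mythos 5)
Your proof is correct and follows essentially the same route as the paper's: the first identity by unpacking the definitions of $B_f$ and $B_p$, the second via Lemma~\ref{hzo}(iv) together with an openness argument showing $p\in\overline{[r]}$ but $p\notin[r]$, and the curve statement by combining openness of $I^{+}(r)$ with $I^{-}(p)\subset I^{-}(\overline{[r]})=I^{-}([r])$. The only difference is cosmetic (a sequence $q_n\to p$ in place of a single nearby point $q\ll p$), so no further comment is needed.
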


\begin{proof}
%If $p\in B_f([r])\cap I^{+}([r])$ then  $p\in I^+([r])$ thus
%$p\notin B_p([r])=\dot{[r]}\backslash I^+([r])$. We have proved the
%inclusion $B_f([r])\cap I^{+}([r])\subset B_f([r])\backslash
%B_p([r])$, for the converse let
The first identity follows from the chain of equalities,
$B_f([r])\backslash B_p([r])=[\dot{[r]}\backslash I^{-}([r])]\cap
I^{+}([r])=B_f([r])\cap I^{+}([r])$. For the second identity, the
inclusion $B_f([r])\cap I^{+}([r])\subset \p I^{-}([r])\cap
I^{+}([r])$ is obvious. For the converse, let $p\in \p
I^{-}([r])\cap I^{+}([r])$ and let $q\ll p$ sufficiently close to
$p$ that $q\in I^{+}([r])$. Since $q\ll p$, we have $q\in
I^{-}([r])$ thus $q\in [r]$ and taking the limit $q\to p$ we obtain
$p\in \overline{[r]}$, but $p\notin I^{-}([r])\supset [r]$ thus
$p\in \dot{[r]}\cap \p I^{-}([r])\cap I^{+}([r])=B_{f}([r])\cap
I^{+}([r])$.

Let us come to the last statement. Since $I^{+}([r])$ is open and
$p\in I^{+}([r])$ there is some $\epsilon>0$ such that
$\gamma((-\epsilon,0))\subset I^{+}([r])$. But
$\gamma((-\epsilon,0))\subset I^{-}(p)\subset I^{-}([r])$ because $p
\in \overline{[r]}$, thus $\gamma((-\epsilon,0))\subset [r]$.
\end{proof}

Since $\textrm{edge}(B_f([r]))\subset B_f([r])\cap B_p([r])$ the
previous result implies the inclusion $B_f([r])\cap
I^{+}([r])\subset B_f([r])\backslash \textrm{edge}(B_f([r]))$.

\begin{proposition} \label{kqx}
$B_f([r])\backslash \textrm{edge}(B_f([r]))$ is an open set (in the
induced topology) of the achronal boundary $\p I^{-}([r])$.
%,
%moreover the inclusion $B_f([r])\cap I^{+}([r])\subset
%B_f([r])\backslash \textrm{edge}(B_f([r]))$ holds.
An analogous past
version also holds.
\end{proposition}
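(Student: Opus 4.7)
The plan is to establish relative openness by producing, for each $p \in B_f([r])\setminus \textrm{edge}(B_f([r]))$, an open set $V \subset M$ with $p \in V$ such that $V \cap \p I^{-}([r]) \subset B_f([r])\setminus \textrm{edge}(B_f([r]))$. Lemma \ref{hzo}(iv) provides the inclusion $B_f([r]) \subset \p I^{-}([r])$, which makes this formulation meaningful, and since $\textrm{edge}(B_f([r]))$ is closed and does not contain $p$, the ``$\setminus \textrm{edge}$'' factor can be arranged at the end by shrinking $V$ to the open complement of that closed set. So the real task is to exhibit $V$ with $V\cap \p I^{-}([r])\subset B_f([r])$.

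The construction unfolds in two stages. First, I would unpack $p \notin \textrm{edge}(B_f([r]))$ to obtain a neighborhood $U$ of $p$ in which every timelike curve from a point of $I^{-}(p,U)$ to a point of $I^{+}(p,U)$ intersects $B_f([r])$. Second, I would take a convex normal neighborhood $W \subset U$ of $p$, pick auxiliary points $a \in I^{-}(p,W)$ and $b \in I^{+}(p,W)$, and set $V := W \cap I^{+}(a,W) \cap I^{-}(b,W)$. For any $q \in V$ the broken geodesic $a \to q \to b$ built inside $W$ is a piecewise timelike curve in $U$ whose endpoints satisfy $a \in I^{-}(p,U)$ and $b \in I^{+}(p,U)$, so the edge condition forces it to meet $B_f([r])$ at some point $z$.

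The decisive step, and the only point I expect to require care, is to identify $z$ with $q$ itself whenever $q \in \p I^{-}([r])$. Here I would invoke the fact that $\p I^{-}([r])$ is an achronal boundary, hence achronal: since $B_f([r]) \subset \p I^{-}([r])$ and the timelike curve $a \to q \to b$ already passes through $q \in \p I^{-}([r])$, it cannot meet $\p I^{-}([r])$ at any other point, so $z = q$ and $q \in B_f([r])$. After intersecting $V$ with the open complement of the closed set $\textrm{edge}(B_f([r]))$ as noted above, this gives $V \cap \p I^{-}([r]) \subset B_f([r]) \setminus \textrm{edge}(B_f([r]))$, establishing the claimed relative openness. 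The past version follows from the time-dual construction applied to $B_p([r])$ inside $\p I^{+}([r])$.
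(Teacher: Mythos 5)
Your proof is correct and is essentially the paper's argument run in the contrapositive direction: the paper assumes relative openness fails at $q$ and uses achronality of $\p I^{-}([r])$ to produce a timelike curve avoiding $B_f([r])$, thereby placing $q$ in the edge, whereas you start from $q\notin \textrm{edge}(B_f([r]))$ and use the same achronality to force the guaranteed intersection point to coincide with the given boundary point. The ingredients (the edge definition, $B_f([r])\subset\p I^{-}([r])$ from Lemma \ref{hzo}(iv), achronality of the achronal boundary, and closedness of the edge) are the same in both versions.
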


\begin{proof}
Let $B=\p I^{-}([r])$ and let $q \in B_f([r])\backslash
\textrm{edge}(B_f([r]))$. We want to prove that there is a
neighborhood $U\ni q$ such that $U\cap B_f([r])=U\cap B$. By
contradiction assume not, then for every causally convex
neighborhood $U\ni q$ and $x,y \in U$, $x\ll q\ll y$, we consider
the neighborhood of $q$, $I^{+}(x)\cap I^{-}(y)$. By assumption this
neighborhood contains some point $z \in B\backslash B_f([r])$. The
timelike curve $\eta\subset U$ joining $x$ to $z$ and then $z$ to
$y$ does not intersect $B_f([r])$. Indeed, $x,y \notin B_f([r])$ as
$q\in B_f([r])$ and $B_f([r])$ is achronal. The curve $\eta$ cannot
intersect $B_f([r])$ between $x$ and $z$ because, as $z \in B$, and
$B_f([r])\subset B$ it would imply that $B$ is not achronal.
Analogously, $\eta$ cannot intersect $B_f([r])$ between $z$ and $y$
because, as $z \in B$, and $B_f([r])\subset B$ it would imply that
$B$ is not achronal. Since every point admits arbitrarily small
causally convex neighborhoods we have proved $q\in
\textrm{edge}(B_f([r]))$ a contradiction.
\end{proof}
Figure \ref{edge} shows that $B_f([r])\backslash
\textrm{edge}(B_f([r]))$ can be different from $B_f([r])\cap
I^{+}([r])$. A non-trivial problem consists in  establishing if  $B_f([r])$ can be
defined as $\overline{\p [r]\cap I^{+}(r)}$. The answer is
affirmative and shows in particular that no point of
$\textrm{edge}(B_f([r]))$ is isolated from $B_f([r])\backslash
\textrm{edge}(B_f([r]))$ or from $B_f([r])\cap I^{+}([r])$.

In the next theorem $\textrm{Int}_{\p I^{-}([r])} $ denotes the
interior with respect to the topology induced on the achronal
boundary $\p I^{-}([r])$.

\begin{theorem} \label{nzv}
The identities $B_f([r])=\overline{ \dot{[r]}\cap
I^{+}(r)}=\overline{B_f([r])\backslash \textrm{edge}(B_f([r]))}$ and
\[\textrm{Int}_{\p I^{-}([r])} B_f([r])=B_f([r])\backslash
\textrm{edge}(B_f([r]))\]  hold. Analogous past versions also hold.
\end{theorem}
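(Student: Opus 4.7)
The plan is to reduce the two closure identities to a single inclusion, and handle the interior identity independently.

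\emph{Reduction.} I first note the auxiliary identity $\dot{[r]}\cap I^{+}(r)=B_f([r])\cap I^{+}(r)$: the inclusion $\subset$ holds since a point $x\in \dot{[r]}\cap I^{+}(r)\cap I^{-}(r)$ would give $r\ll x\ll r$, hence $x\in [r]$, contradicting $x\in \dot{[r]}$, while $\supset$ is trivial from $B_f([r])\subset \dot{[r]}$. Combining with the inclusion $B_f([r])\cap I^{+}([r])\subset B_f([r])\setminus \textrm{edge}(B_f([r]))$ recorded right after Proposition~\ref{kqx}, and with $I^{+}([r])=I^{+}(r)$ (time-reversed Lemma~\ref{hzo}), one gets
\[
\overline{\dot{[r]}\cap I^{+}(r)}=\overline{B_f([r])\cap I^{+}(r)}\subset \overline{B_f([r])\setminus \textrm{edge}(B_f([r]))}\subset B_f([r]),
\]
the last inclusion because $B_f([r])$ is closed. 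Hence both closure identities will follow once the reverse inclusion $B_f([r])\subset \overline{\dot{[r]}\cap I^{+}(r)}$ is established.

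\emph{Key inclusion.} Fix $p\in B_f([r])$ and an arbitrary convex normal open neighborhood $V$ of $p$; the goal is to produce a point of $\dot{[r]}\cap I^{+}(r)\cap V$. If $p\in I^{+}(r)$ then $p$ itself works, so I can assume $p\in B_p([r])$ as well. The crucial sub-claim is that there exists $y\in V$ with $p\ll y$ and $y\notin \overline{[r]}$. Otherwise the nonempty open set $I^{+}(p)\cap V$ would lie in $\overline{[r]}$; but $\dot{[r]}$, being the topological boundary of the open set $[r]$, is nowhere dense, so $\textrm{Int}\,\overline{[r]}=[r]$ and $I^{+}(p)\cap V\subset [r]$, and any $y$ in this set gives $p\ll y\ll r$, i.e.\ $p\in I^{-}(r)=I^{-}([r])$, contradicting $p\in B_f([r])$. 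With $y$ as in the sub-claim, pick $p'\in [r]\cap V\cap I^{-}(y)$ (nonempty since $p\in \overline{[r]}\cap I^{-}(y)$ and $I^{-}(y)$ is open); the timelike segment in $V$ from $p'$ to $y$ starts in the open set $[r]$ and ends outside $\overline{[r]}$, hence crosses $\dot{[r]}$ at some $z\in V$, and $z\in I^{+}(p')\subset I^{+}(r)$ since $r\ll p'$. Thus $z\in \dot{[r]}\cap I^{+}(r)\cap V$, as required.

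\emph{Interior identity.} The inclusion $B_f([r])\setminus \textrm{edge}(B_f([r]))\subset \textrm{Int}_{\p I^{-}([r])} B_f([r])$ is Proposition~\ref{kqx}. For the reverse, given $q\in \textrm{edge}(B_f([r]))$ and an open $U\ni q$, the construction used in the proof of $\textrm{edge}(B_f)=\textrm{edge}(B_p)$ furnishes a timelike curve $\mu\subset U$ from some $x\ll q$ to some $y\gg q$ avoiding $B_f([r])$ and satisfying $\mu\setminus\{x\}\subset M\setminus\overline{[r]}$. Since $I^{\pm}(\overline{[r]})=I^{\pm}(r)$ and $q\in \overline{[r]}$, one gets $x\in I^{-}(r)$ and $y\in I^{+}(r)$; moreover $y\in M\setminus\overline{[r]}$ gives $y\notin [r]$, so $y\notin I^{-}(r)$. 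Openness of $I^{-}([r])$ then forces $\mu$ to meet $\p I^{-}([r])$ at some $z$ strictly past $x$, so $z\in \mu\setminus\{x\}\subset M\setminus\overline{[r]}$ and $z\notin B_f([r])$, showing $q\notin \textrm{Int}_{\p I^{-}([r])} B_f([r])$.

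The main obstacle I anticipate is the hard case $p\in B_f\cap B_p$ of the key inclusion: a direct attack via the future lightlike generator of $B_f$ through $p$ runs into the difficulty that this generator may stay in $B_p\cap B_f$ for arbitrarily long, and local corner arguments yield nothing when the generator smoothly continues the past generator of $B_p$ ending at $p$. The topological observation that $\dot{[r]}$ is nowhere dense neatly bypasses this issue.
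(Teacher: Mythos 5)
Your reduction of the two closure identities to the single inclusion $B_f([r])\subset \overline{\dot{[r]}\cap I^{+}(r)}$ is correct, and your treatment of the interior identity is sound (the paper argues via edgelessness of $\p I^{-}([r])$; your intermediate-value crossing of the open set $I^{-}([r])$ is the same mechanism in different clothing). For the key inclusion your route is genuinely different from the paper's: the paper represents $\p I^{-}([r])$ locally as a Lipschitz graph over suitable coordinates, takes $p_n\in[r]$ with $p_n\to p$, pushes each $p_n$ up the vertical timelike curve to a point $q_n\in\dot{[r]}\cap I^{+}(r)$, and gets $q_n\to p$ from continuity of the graphing function. Your version, however, contains two flaws. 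The first is minor: the inference ``$\dot{[r]}$ is nowhere dense, hence $\textrm{Int}\,\overline{[r]}=[r]$'' is invalid as a piece of general topology --- the boundary of an open set is always nowhere dense, yet $\textrm{Int}\,\overline{U}=U$ can fail (take $U=(0,1)\cup(1,2)$ in $\mathbb{R}$). The identity $\textrm{Int}\,\overline{[r]}=[r]$ does hold here, but via the chronological convexity of $\overline{[r]}$ (Proposition \ref{jec}), not for the reason you give; in any case the detour is unnecessary, since if $I^{+}(p)\cap V\subset\overline{[r]}$ then any $y$ in this nonempty set gives $p\in I^{-}(\overline{[r]}\,)=I^{-}([r])$ directly by Lemma \ref{hzo}, contradicting $p\in B_f([r])$.

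The second flaw is the real gap. Choosing $p'\in[r]\cap V\cap I^{-}(y)$ only guarantees a timelike curve from $p'$ to $y$ somewhere in $M$; there need not be one inside $V$, and in a convex normal neighborhood the connecting geodesic need not be timelike even when $p'\ll y$ holds in $M$. This is not a pedantic worry precisely in the present setting: near the boundary of a chronology violating class, global timelike connections that exit any small neighborhood are exactly what one expects, so $I^{-}(y)\cap V$ and $I^{-}(y,V)$ can genuinely differ. As written, the crossing point $z$ of $\dot{[r]}$ may lie outside $V$, which defeats the construction. The fix is standard: run the argument with the intrinsic chronology of $V$, picking $y\in I^{+}(p,V)\backslash\overline{[r]}$ (the same contradiction applies, since $I^{+}(p,V)$ is nonempty, open and contained in $I^{+}(p)$) and then $p'\in[r]\cap I^{-}(y,V)$, nonempty because $I^{-}(y,V)$ is an open neighborhood of $p\in\overline{[r]}$; the connecting timelike curve then lies in $V$ by construction. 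With these repairs your proof is correct and somewhat more elementary than the paper's, avoiding the Lipschitz-graph machinery; what the paper's argument buys in exchange is that the approximating points converge to $p$ automatically, with no need to localize the chronology.
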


\begin{proof}
Let us prove the  identity $B_f([r])=\overline{ \dot{[r]}\cap
I^{+}(r)}$. Since $\dot{[r]}\cap I^{+}(r)\subset B_f([r])$ one
direction  is obvious. For the other direction, let $p\in B_f([r])$. By lemma \ref{hzo} (iv)
$B_f([r])\subset \p I^{-}([r])$. Since $\p I^{-}([r])$ is an
achronal boundary it is possible to introduce in a neighborhood $O$
of $p$ coordinates $\{x^0,x^1,\ldots, x^{n}\}$ such that $\p/\p
x^0$ is timelike and the timelike `vertical' curves $\{x^i=cnst.\
(i=1, \ldots n)\}$ intersect $\p I^{-}([r])$ exactly once.
Furthermore in these coordinates the achronal boundary $O\cap \p
I^{-}([r])$ is expressed as the graph of a function $x^0(\{x^i, i\ne 0\})$ which
is Lipschitz \cite[Prop. 6.3.1]{hawking73}. Let $p_n\in [r]\cap O$
be a sequence such that $p_n\to p$. The timelike vertical curve
$\sigma$ passing through $p_n$ intersects $\p I^{-}([r])$ at some
point $q_n$ different from $p_n$ because $p_n\in I^{-}(r)$. It
cannot be $q_n \ll p_n$ otherwise $q_n \in I^{-}([r])$ while $q_n
\in \p I^{-}([r])$, a contradiction. Thus we have just $p_n \ll q_n$. Since $q_n
\in \p I^{-}([r])$ and $I^{+}$ is open for every  $U\ni q_n$ there
is some point $q_n'\in U\cap I^{+}(p_n)\cap I^{-}(r)\subset U\cap
I^{+}(r)\cap I^{-}(r)$ which implies $q_n'\in [r]$ and since $U$ is
arbitrary $q_n \in \dot{r}$. Furthermore, we have  $q_n\in
I^{+}(p_n)=I^{+}([r])$, $q_n \in \dot{r}  \cap I^{+}([r])$, and the
continuity of the graphing function $x^0({\bm x})$ of the achronal boundary implies $q_n \to
p$, that is $p\in \overline{\dot{r} \cap I^{+}([r])}$.

The identity $B_f([r])=\overline{B_f([r])\backslash
\textrm{edge}(B_f([r]))}$ follows from $B_f([r])=\overline{
\dot{[r]}\cap I^{+}(r)}$ using the inclusion $\dot{[r]}\cap
I^{+}(r)\subset B_f([r])\backslash \textrm{edge}(B_f([r]))\subset
B_f([r])$ proved in Prop. \ref{kqx}.

Coming to the last identity, the inclusion
\[\textrm{Int}_{\p I^{-}([r])} B_f([r])\supset B_f([r])\backslash
\textrm{edge}(B_f([r]))\] is a rephrasing of proposition \ref{kqx}.
Suppose that the reverse inclusion does not hold, then there is $p
\in  \textrm{edge}(B_f([r]))$ and an open neighborhood $U\ni p$,
such that $U\cap \p I^{-}([r])\subset B_f([r])$. However, this is
impossible because taking $r \ll p \ll q$, $q,r\in U$, they must be
connected by a timelike curve contained in $U$ which does not
intersect $B_f([r])$, but since $\p I^{-}([r])$ is edgeless and
$p\in \p I^{-}([r])$, this curve intersects $\p I^{-}([r])$ at some
point inside $U$ thus belonging to $B_f([r])$, a contradiction.

\end{proof}

\begin{corollary}
If $\textrm{edge} (B_f([r]))=\emptyset$ then  $B_f([r])$ is a
connected component of $\p I^{-}([r])$.
\end{corollary}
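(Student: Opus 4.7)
The plan is to show that under the hypothesis $\textrm{edge}(B_f([r]))=\emptyset$ the set $B_f([r])$ is simultaneously open and closed inside the achronal boundary $\p I^{-}([r])$ (with the induced topology). Once this clopen property is established, $B_f([r])$ is necessarily a union of connected components of $\p I^{-}([r])$, and in particular coincides with a connected component whenever it is connected.

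First I would invoke Theorem \ref{nzv}, whose last identity reads
\[
\textrm{Int}_{\p I^{-}([r])} B_f([r]) = B_f([r])\setminus \textrm{edge}(B_f([r])).
\]
Under the assumption $\textrm{edge}(B_f([r]))=\emptyset$ the right-hand side collapses to $B_f([r])$ itself, so $B_f([r])$ equals its own interior in $\p I^{-}([r])$ and is therefore \emph{open} in the induced topology. On the other hand, $B_f([r])=\overline{[r]}\setminus I^{-}([r])$ was shown to be closed in $M$ (it is the intersection of the closed set $\overline{[r]}$ with the closed set $M\setminus I^{-}([r])$, and we already proved it is closed and achronal), hence also \emph{closed} in $\p I^{-}([r])$ with the induced topology.

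A non-empty clopen subset of a topological space is a union of connected components; since $B_f([r])\subset \p I^{-}([r])$ is clopen in $\p I^{-}([r])$, it consists of whole connected components of the achronal boundary. The only remaining point is therefore the identification of $B_f([r])$ with a \emph{single} connected component, which in the present setting follows from the fact that chronology violating classes are connected ($[r]$ is path-connected because any two of its points are joined by a timelike curve, hence $\overline{[r]}$ is connected and so is the trace of $B_f([r])=\overline{[r]}\cap \p I^{-}([r])$, at least in the cases of interest).

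I expect no serious obstacle: the entire argument is an immediate consequence of the interior characterization established in Theorem \ref{nzv} and of the closedness of $B_f([r])$ shown earlier in the paper. The only mild subtlety is the connectedness claim in the last step, which is why the statement is most naturally read as saying that $B_f([r])$ is a union of connected components of $\p I^{-}([r])$; under this reading the proof is entirely contained in the clopen observation above.
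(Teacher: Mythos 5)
Your argument is essentially the paper's own: Theorem \ref{nzv} shows that when $\textrm{edge}(B_f([r]))=\emptyset$ the set $B_f([r])$ equals its interior in $\p I^{-}([r])$, hence is open there, and being closed in $M$ it is clopen in the induced topology, which is exactly how the paper concludes. The single-component-versus-union-of-components subtlety you flag is genuine but is not addressed by the paper either (its proof stops at the clopen observation), and your proposed patch is not rigorous as written --- connectedness of $\overline{[r]}$ does not pass to the intersection $\overline{[r]}\cap\p I^{-}([r])$ --- so the safe reading in both your proof and the paper's is that $B_f([r])$ is a union of connected components of $\p I^{-}([r])$.
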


\begin{proof}
By theorem \ref{nzv} $\textrm{Int}_{\p I^{-}([r])}
B_f([r])=B_f([r])$, thus $B_f([r])$ is an open and closed subset of
$\p I^{-}([r])$ in the induced topology from which the thesis
follows.
\end{proof}

Theorem \ref{nzv} proves that $[r]$ is like a shell, the boundary
$\dot{[r]}$ is obtained by gluing the two $n$-dimensional
topological submanifolds $B_f([r])\backslash
\textrm{edge}(B_f([r]))$ and $B_p([r])\backslash
\textrm{edge}(B_f([r]))$ along their rims. Furthermore, these
submanifolds can touch in some points in their interior.
Nevertheless, as the next result proves, this touching region has
vanishing interior.

\begin{proposition}
The following identity holds
\[
\textrm{Int}_{\p I^{-}([r])} (B_f([r])\cap B_p([r]))=\emptyset.
\]
\end{proposition}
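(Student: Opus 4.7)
The plan is to argue by contradiction, exploiting the density statement in Theorem \ref{nzv}. Suppose there exists a point $p$ lying in $\textrm{Int}_{\p I^{-}([r])}(B_f([r])\cap B_p([r]))$, so that some neighborhood $V$ of $p$ in the induced topology of $\p I^{-}([r])$ is contained in $B_f([r])\cap B_p([r])$.

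Next I would use the identity $B_f([r])=\overline{\dot{[r]}\cap I^{+}(r)}$ from Theorem \ref{nzv}. Since $p\in B_f([r])$, this yields a sequence $q_n\in \dot{[r]}\cap I^{+}(r)$ with $q_n\to p$. The crucial observation is that $\dot{[r]}\cap I^{+}(r)\subset B_f([r])\subset \p I^{-}([r])$ (the last inclusion being Lemma \ref{hzo}\,(iv)), so each $q_n$ belongs to the achronal boundary $\p I^{-}([r])$; hence the convergence $q_n\to p$ takes place in $\p I^{-}([r])$ and eventually $q_n\in V$.

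Now I derive the contradiction from the definition of $B_p([r])$. Each $q_n$ satisfies $q_n\in I^{+}(r)=I^{+}([r])$, so $q_n\notin \overline{[r]}\setminus I^{+}([r])=B_p([r])$. But $V\subset B_p([r])$ would force $q_n\in B_p([r])$ for large $n$, a contradiction. This gives $\textrm{Int}_{\p I^{-}([r])}(B_f([r])\cap B_p([r]))=\emptyset$.

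The argument is essentially a one-step application of Theorem \ref{nzv}; the main (small) subtlety is making sure the convergence $q_n\to p$ is registered \emph{within} $\p I^{-}([r])$, so that the neighborhood $V$, which is open only in the induced topology, actually captures the sequence. This is automatic because $B_f([r])\subset \p I^{-}([r])$, and the whole approximating sequence lies in $B_f([r])$. No delicate geometric construction is required beyond invoking the previously established result.
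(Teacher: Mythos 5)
Your argument is correct and is essentially the paper's own proof: both rest on the identity $B_f([r])=\overline{\dot{[r]}\cap I^{+}(r)}$ from Theorem \ref{nzv}, producing a sequence in $\dot{[r]}\cap I^{+}([r])\subset B_f([r])\setminus B_p([r])\subset \p I^{-}([r])$ that converges to the given point, which rules out an interior point of $B_f([r])\cap B_p([r])$. The only difference is presentational (you phrase it as a contradiction with a neighborhood $V$, the paper states the density directly), and your remark about the convergence being registered inside $\p I^{-}([r])$ is a valid, if minor, clarification.
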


\begin{proof}

Let  $p\in B_f([r])\cap B_p([r])$, since $p\in
B_p([r])=\overline{\dot{[r]}\cap I^{+}([r])}$ there is a sequence
$p_n\in \dot{[r]}\cap I^{+}([r])$ such that $p_n \to p$, but $p_n
\in B_f([r])\subset\dot{[r]}$ and $p_n \in I^{+}([r])$ thus
$p_n\notin B_p([r])$ hence $p_n\in B_f([r])\backslash B_p([r])$
which proves the thesis.
\end{proof}

The next example proves that $\textrm{edge}(B_f([r]))$ is not
necessarily acausal and that in fact $\textrm{edge}(B_f([r]))$ could
be generated by inextendible lightlike lines (see figure
\ref{hawk}).
\begin{example}
Let $M=\mathbb{R}\times \mathbb{R}^2$ be endowed with the metric
\[
\dd s^2=-2\big(\cos \alpha(r) \dd t-\sin\!\alpha(r) \, r\dd \varphi
\big)\big(\sin \alpha(r)\dd t +\cos\!\alpha(r) \,r\dd \varphi\big) + \dd r^2
\]
where $(r,\varphi)$ are polar coordinates on $\mathbb{R}^2$, and
$\alpha\colon [0,+\infty)\to [0, \pi/4]$ is such that $\alpha(0)=\pi/4$
and $\alpha=0$ (only) for $r=1$, an $\dd \alpha/\dd r(0)=0$. This
metric can be obtained from the usual Minkowski 1+2 metric by
tilting the cones of an angle $\pi/4-\alpha(r)$ in the positive
$\varphi$ direction. The cones become tangent to the slices
$t=const$ at $r=1$ and then begin to tilt up again. As a result $t$
is a semi-time function, in the sense that $x\ll y \Rightarrow t(x)
< t(y)$. The curves $t=const. $, $r=1$, are closed lightlike curves
and since they are achronal they are lightlike lines.

The metric can be written in the Kaluza-Klein reduction form
\[
\dd s^2= r^2 \sin 2\alpha \Big(\dd \varphi-\frac{1}{r \tan 2\alpha}
\,\dd t\Big)^2+ \Big[-\frac{1}{\sin 2 \alpha} \, \dd t^2+  \dd r^2\Big].
\]
If we focus on sets that are rotationally invariant the causal sets
corresponding to those are obtained just considering the metric in
square brackets rather than the full metric. This is a general
feature of spacelike dimensional reduction, and rests on the fact
that the horizontal lift of a causal curve on the base is a causal
curve in the full spacetime and the projection of a causal curve of
the full spacetime is a causal curve  on the base. Furthermore, for
what concerns causality the metric in square brackets can be
multiplied by a conformal factor so that in the end the casuality is
determined by the metric $-\dd t^2+\sin 2 \alpha \dd r^2$.

The idea is  to consider the disk $S=\{x\colon t(x)=0, r(x)\le 1\}$,
represented in the reduced spacetime by the segment $[0,1]$ and
define $C^{\pm}=\{y: t(y)=\pm k\}\cap D^{\pm}(S)$. For reasons of
symmetry $C^{\pm}$ is a, possibly empty, disk but for $k$
sufficiently small $C^{\pm}$ has non-vanishing radius. The fact that
the causality can be reduced to that of a 2-dimensional spacetime,
and the fact that in 2-dimensional spacetime the geodesics do not
have conjugate points \cite[Lemma 10.45]{beem96} implies the
identity $J^{-}(C^+)\cap J^{+}(S)=D^+(S)$. Indeed both rotationally
invariant sets have a boundary described by the equation
$t(r)=\int_r^1\sqrt{\sin 2 \alpha}\, \dd r'$. In particular the
radius $R$ of $C^{\pm}$ satisfies $k=\int_R^1\sqrt{\sin 2 \alpha}\,
\dd r'$.

Our spacetime is constructed by removing $C^+$ and $C^-$ and by
identifying the interior of the lower side of the former set with the
interior of the upper side of the latter set. In this way we get a
chronology violating class $[r]$ such that $\textrm{edge}(B_f([r]))$
is the rim $\gamma$ of $S$, hence a closed achronal geodesic. In
this example the generators of $B_f([r])$ are past inextendible
lightlike geodesic which accumulate on $\textrm{edge}(B_f([r]))$
without reaching it.
\end{example}

\begin{figure}[t]
\begin{center}
%\psfrag{f}{{\small $t=\int_r^1\sqrt{\sin 2\alpha} \dd r'$}}
%\psfrag{a}{{\small $C^+$}} \psfrag{b}{{\small $C^-$}}
%\psfrag{c}{{\small $t=0$}} \psfrag{d}{{\small $\gamma \ (r=1)$}}
%\psfrag{e}{{\small $t$}} \psfrag{g}{{\small $r$}} \psfrag{h}{{\small
%Identify}}
 \includegraphics[width=12cm]{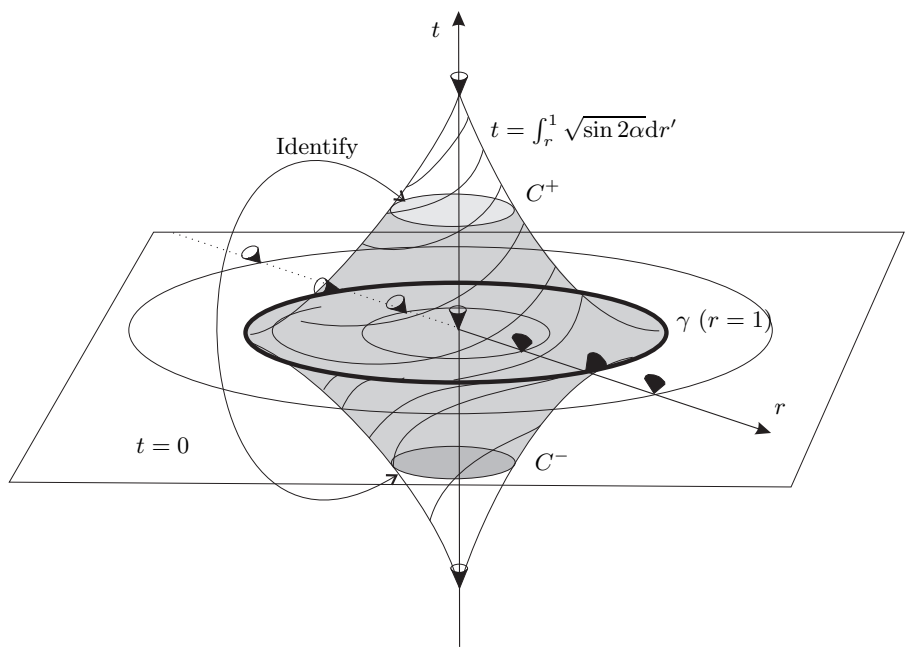}
\end{center}
\caption{The sets $C^\pm$ are removed and their sides are suitably
identified. The shaded region is the chronology violating set. The
edge of its future (or past) boundary is  the closed achronal
geodesic $\gamma$. The figure is similar to \cite[Fig.
31]{hawking73} but the cones tilt in a different way and the
generators running over the future (or past) boundary do not reach
$\gamma$.} \label{hawk}
\end{figure}

Let us investigate the causal convexity of the chronology violating set and its boundaries.

\begin{proposition} \label{jec}
Let $r\in \mathcal{C}$, the set $[r]$ is causally convex and the set
$\overline{[r]}$ is chronologically convex. Moreover, the sets
$B_{f}([r])\backslash B_p([r])$ and $B_{p}([r])\backslash B_f([r])$
are causally convex. Finally, if $\overline{[r]}$ is not causally
convex then there is an inextendible lightlike geodesic without
conjugate points which intersects both $B_p([r])$ and $B_f([r])$, in
fact it is a generator for these sets for a suitably restricted
domain of definition.
\end{proposition}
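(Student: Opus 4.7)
The plan is to handle the four assertions separately, using throughout the definition $[r]=I^{+}(r)\cap I^{-}(r)$, the openness of $\ll$, the push-up rule $\ll\circ\le\subset\ll$ (and its dual), and the structural results on $B_f([r])$ and $B_p([r])$ already established in the paper.

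For the causal convexity of $[r]$: if $x\le z\le y$ on a causal curve with $x,y\in[r]$, then $r\ll x\le z\le y\ll r$ gives $r\ll z\ll r$ by push-up, so $z\in[r]$. For the chronological convexity of $\overline{[r]}$: if $x\ll z\ll y$ with $x,y\in\overline{[r]}$, then $z\in I^{+}(x)\subset I^{+}(\overline{[r]})=I^{+}([r])$ and dually $z\in I^{-}([r])$, so $z\in[r]\subset\overline{[r]}$.

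For the causal convexity of $B_f([r])\backslash B_p([r])=B_f([r])\cap I^{+}([r])$: since $B_f([r])\subset\p I^{-}([r])$, the earlier proposition on causal curves joining two points of the achronal boundary $\p I^{-}([r])$ forces any causal curve from $x$ to $y$ in $B_f([r])\cap I^{+}([r])$ to be a segment of a future lightlike ray lying in $\p I^{-}([r])$. An intermediate point $z$ then satisfies $z\notin I^{-}([r])$ automatically; picking $a\in[r]$ with $a\ll x$ yields $a\ll x\le z$, so $z\in I^{+}([r])$. To place $z$ in $\overline{[r]}$ I would take $w_n\to z$ with $w_n\ll z\le y$; the density of $[r]$ in $\overline{[r]}$ produces $b_n\in I^{+}(w_n)\cap[r]$, so $w_n\ll b_n\ll r$ gives $w_n\in I^{-}(r)$, and openness of $I^{+}(r)$ around $z$ yields $w_n\in I^{+}(r)$ eventually, so $w_n\in[r]$ and $z\in\overline{[r]}$. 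Hence $z\in B_f([r])\backslash B_p([r])$; the case of $B_p\backslash B_f$ is symmetric.

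For the final claim, assume $\overline{[r]}$ is not causally convex and pick a causal curve $\gamma$ from $x\in\overline{[r]}$ to $y\in\overline{[r]}$ passing through a point outside $\overline{[r]}$; after restricting $\gamma$ between its last exit from $\overline{[r]}$ and its first re-entry I may assume $\gamma(t)\notin\overline{[r]}$ for every interior $t$, and openness of $[r]$ then forces $x,y\in\dot{[r]}$. The causal convexity of $B_f\backslash B_p$ just proved rules out $x\in B_f\backslash B_p$ (otherwise $r\ll x\le\gamma(t)$ throughout drives all of $\gamma$ into $I^{+}([r])$, yielding $y\in B_f\backslash B_p$ and then $\gamma\subset B_f\backslash B_p\subset\overline{[r]}$, a contradiction), so $x\in B_p([r])$; symmetrically $y\in B_f([r])$. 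Using $B_p\subset R_p$ and $B_f\subset R_f$ I would attach a past lightlike ray $\eta$ ending at $x$ and a future lightlike ray $\sigma$ starting at $y$; the same push-up forces $\eta\subset B_p$ and $\sigma\subset B_f$ (any point of $\eta$ in $I^{+}([r])$ would, via $r\ll\eta(t)\le x$, place $x$ in $I^{+}([r])$), so the concatenation $\mu=\eta\cdot\gamma\cdot\sigma$ is an inextendible causal curve meeting both boundaries. The main obstacle is to show that $\mu$ is a single achronal lightlike geodesic without conjugate points. I would argue first that $\gamma$ must depart $x$ and arrive at $y$ as a lightlike geodesic: a timelike arrival at a point $y\in B_f\backslash B_p$ would, by the earlier proposition on timelike curves ending in $B_f\backslash B_p$, place the final portion of $\gamma$ inside $[r]$, in contradiction with $\gamma(t)\notin\overline{[r]}$, and symmetrically at $x$; then combine the achronality of $B_p$ and $B_f$, the chronological convexity of $\overline{[r]}$, and the push-up estimates to exclude any chronological chord of $\mu$ linking $\eta$ with $\gamma$, $\gamma$ with $\sigma$, or $\eta$ with $\sigma$. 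The resulting inextendible achronal causal curve is a lightlike line, hence a lightlike geodesic without conjugate points, whose past and future restrictions are the desired generators of $B_p([r])$ and $B_f([r])$.
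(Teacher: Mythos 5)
Your first three assertions are handled correctly: the causal convexity of $[r]$ and the chronological convexity of $\overline{[r]}$ match the paper's argument, and your proof that $B_f([r])\backslash B_p([r])=B_f([r])\cap I^{+}([r])$ is causally convex via the proposition on causal curves joining points of the achronal boundary $\p I^{-}([r])$ is a clean variant of what the paper does (the paper extracts this as a parenthetical inside its case analysis). The reduction of the last claim to a segment $\gamma$ whose interior lies in $M\backslash\overline{[r]}$, the placement $x\in B_p([r])$, $y\in B_f([r])$, and the exclusion of corners and of chronological chords meeting the \emph{interior} of $\gamma$ (each such chord yields, by push-up, a timelike curve between two points of $\overline{[r]}$ passing arbitrarily close to a point outside $\overline{[r]}$, whose interior must lie in $[r]$ --- a contradiction) are all sound.

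The genuine gap is in the last step: the concatenation $\mu=\eta\cdot\gamma\cdot\sigma$ is \emph{not} achronal in general, so it is not a lightlike line, and your route to ``no conjugate points'' collapses. The chord you cannot exclude is the one linking the past ray $\eta\subset B_p([r])$ to the future ray $\sigma\subset B_f([r])$: if $a\in\eta$ lies in $B_p([r])\backslash B_f([r])$ then $a\in I^{-}([r])=I^{-}(r)$, and if $b\in\sigma$ lies in $B_f([r])\backslash B_p([r])$ then $b\in I^{+}(r)$, so $a\ll r\ll b$ automatically; the resulting timelike curve from $a$ to $b$ has interior in $[r]$ and need not pass near any point outside $\overline{[r]}$, so no contradiction arises. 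This is exactly the situation of the paper's Figure \ref{ode}, where the constructed inextendible geodesic wraps around and is visibly non-achronal --- which is why the proposition claims only ``without conjugate points'' rather than ``lightlike line''. The absence of conjugate points must be proved directly, as the paper does: if the inextendible geodesic had a conjugate pair, one could choose $a'$ on the $B_p$ portion and $b'$ on the $B_f$ portion bracketing the pair (hence bracketing $y$), and a small timelike variation would produce timelike curves from $a'$ to $b'$ passing arbitrarily close to $y$; since $a',b'\in\overline{[r]}$ and $I^{\pm}$ are open, the interiors of these curves lie in $[r]$, forcing $y\in\overline{[r]}$, contrary to hypothesis. You need to replace your achronality claim with this variation argument.
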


\begin{proof}
Let $x\le y \le z$ with $x,z\in [r]$. We know that $r\ll x$ and $z
\ll r$ thus $r\ll y\ll r$, that is $y \in [r]$, which proves that
$[r]$ is causally convex.

Let $x\ll y \ll z$ with $x,z\in \overline{[r]}$. Since $I^{+}$ is
open there are $x',z'\in [r]$ such that $x'\ll y\ll z'$ thus $y \in
[r]$, which proves that $\overline{[r]}$ is chronologically convex.

Let us come to the last statement. Let $x\le y \le z$ with $x,z\in
\overline{[r]}$. If $x\ll y$ or $y \ll z$ then it is easy to
construct a timelike curve connecting $x$ to $z$ which passes
arbitrarily close to $y$. Since this timelike curve is necessarily
contained in $[r]$ (because $x,z\in \overline{[r]}$ and $I^+$ is open) we get $y\in
\overline{[r]}$. We can therefore assume that $x$ is connected to
$y$ by an achronal lightlike geodesic and analogously for the pair
$y$, $z$. If the two geodesic segments do not join smoothly it is
possible again to construct, using the smoothing of the corner
argument, a timelike curve which connects $x$ to $z$ which passes
arbitrarily close to $y$. We can therefore consider the case in
which $x$ and $z$ are connected by a lightlike geodesic segment
$\gamma$ passing through $y$.

Let us consider the case $x,z\in B_{f}([r])\subset \p P$ where
$P=I^{-}([r])$. Since for every past set $J^{-}(\bar P)\subset
\bar{P}$ and $\gamma$ does not enter $P$ (otherwise $x\in
I^{-}([r])$ a contradiction) we have $\gamma\subset \p P=\p
I^{-}([r])$. Let $y'\in \gamma\backslash\{x,z\}$.

If $x\in B_{f}([r])\backslash B_{p}([r])$ then $x \in I^{+}([r])$
and it is possible to find a timelike curve $\sigma$ connecting $r$
to $z$ passing arbitrarily close to $y'$. Since $I^{+}$ is open,
$\sigma\backslash \{z\}\subset I^{-}([r])$ thus $\sigma\backslash
\{z\}\subset [r]$ and $y' \in \overline{[r]}$. Together with
$\gamma\subset \p I^{-}([r])$ this fact implies $\gamma \subset
B_{f}([r])$ and in particular $y\in B_{f}([r])$. (This case proves
also that $B_{f}([r])\backslash B_p([r])$ is causally convex indeed
it cannot be $y \in B_p([r])$ as $x\in I^{+}([r])$ and thus $y\in
 I^{+}([r])$.)

If $x\in B_{f}([r])\cap B_p([r])$ let $\sigma$ be the past lightlike
ray contained in $B_p([r])$ ending at $x$. If $\sigma$ does not join
smoothly with $\gamma$ then $\gamma\backslash\{x\}\subset
I^{+}([r])$ thus $\gamma\backslash\{x\}\subset \p I^{-}([r])\cap
I^{+}([r])=B_{f}([r])\backslash B_{p}([r])$, in particular $y\in
B_{f}([r])$. If $\sigma$ joins smoothly with $\gamma$ let us
consider a future inextendible lightlike ray $\eta$ starting from
$z$ and contained in $B_f([r])$. If $\eta$ does not join smoothly
$\gamma$ then $\gamma\backslash\{z\}\subset I^{-}([r])$ which is
impossible since $x\in B_{f}([r])$. Thus we are left with the case
in which $\gamma$ can be extended to an inextendible lightlike
geodesic which in the past direction becomes a generator of
$B_{p}([r])$ (coincident with $\sigma$) and in the future direction
becomes a generator of $B_f([r])$ (coincident with $\eta$).

The case $x,z\in B_{p}([r])$ leads to time dual results and we are
left only with the cases (i) $x\in B_{p}([r])\backslash B_{f}([r])$,
$z\in B_{f}([r])\backslash B_{p}([r])$, and (ii) $x\in
B_{f}([r])\backslash B_{p}([r])$, $z\in B_{p}([r])\backslash
B_{f}([r])$. The case (ii) cannot apply because $z\in I^{-}([r])$
which would imply $x\in I^{-}([r])$ a contradiction with $x\in
B_{f}([r])$. In case (i) let $\sigma$ be the past inextendible
lightlike ray contained in $B_p([r])$ ending at $x$ and let $\eta$
be the future inextendible lightlike ray contained in $B_f([r])$
ending at $z$. If $\sigma$ does not join smoothly with $\gamma$ then
$\gamma\backslash\{x\} \subset I^{+}([r])$ and it is possible to
find a timelike curve $\alpha$ connecting $r$ to $z$ passing
arbitrarily close to $y$. Since $I^{+}$ is open, $\alpha\backslash
\{z\}\subset I^{-}([r])$ thus $\alpha\backslash \{z\}\subset [r]$
and $y \in \overline{[r]}$. Analogously, if $\eta$ does not join
smoothly with $\gamma$ then $y'\in \overline{[r]}$. Thus also in
case (ii) we get that $\gamma$ can be extended to an inextendible
lightlike geodesic which in the past direction becomes a generator
of $B_{p}([r])$ (coincident with $\sigma$) and in the future
direction becomes a generator of $B_f([r])$ (coincident with
$\eta$).

If this geodesic contains a pair of conjugate points then by taking
a small timelike variation \cite[Prop. 4.5.12]{hawking73}, every
curve of the variation belongs to the chronology violating set and
hence $y$ belongs to the closure of the chronology violating set.
Thus if $y\notin \overline{[r]}$ the  constructed inextendible
geodesic has no pair of conjugate points.
\end{proof}
The set $B_f([r])\cap B_p([r])$  is not necessarily causally convex,
see Figure \ref{ode}.

\begin{figure}[t]
\begin{center}
%\psfrag{d}{{\small $\gamma$}}  \psfrag{g}{{\small Translate
%vertically and identify}} \psfrag{h}{{\small Identify}}
%\psfrag{a}{{\small Remove}} \psfrag{b}{{\small Remove}}
 \includegraphics[width=9cm]{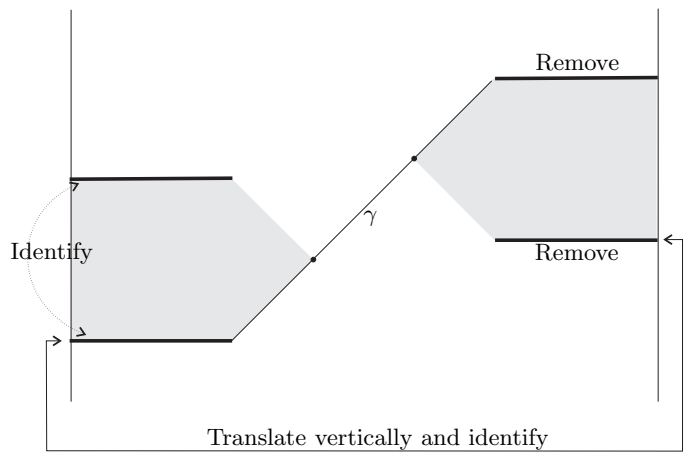}
\end{center}
\caption{The region of Minkowski 1+1 spacetime between two parallel
timelike geodesics. These timelike geodesics are identified after a suitable
vertical translation. Two spacelike segments  are removed and their
interior suitably identified (so the two horizontal segments on the right are the continuation of those on the left). The shaded region is the only
chronology violating class $[r]$. The boundary $\dot{[r]}$ is not
causally convex and there is a inextendible geodesic $\gamma$
without conjugate points which is a generator for both $B_f([r])$
and $B_p([r])$ (see Prop. \ref{jec}). The two displayed points
belong to $B_f([r])\cap B_p([r])$.} \label{ode}
\end{figure}

If we follow a generator of $B_f([r])$ in the past direction we may
suspect that  as long as the geodesic stays in $\dot{[r]}$ its
points belong to $B_f([r])$. This is false as Figure \ref{edge}
shows, however, if the geodesic does not enter $B_p([r])\backslash
B_{f}([r])$ then it is true as the next proposition proves.

\begin{proposition} \label{hzw}
Let $\gamma:[0,b) \to M$, $0<b$, be a causal curve which is a generator of $B_f([r])$ if
restricted to the domain $[a,b)$, $0\le a<b$. If
$\gamma([0,a))\subset \dot{[r]}$ and $\gamma(0)\in B_f([r])$ then
$\gamma:[0,b) \to M$, is a generator of $B_f([r])$.
\end{proposition}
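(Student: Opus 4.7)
The plan is to exploit the proposition, proved earlier in the paper, which asserts that any causal curve connecting two distinct points of the achronal boundary $\p I^{-}([r])$ must be contained in $\p I^{-}([r])$ and coincide with a segment of a future lightlike ray. With this tool in hand, once the endpoints $\gamma(0)$ and $\gamma(a)$ are recognized as lying on $\p I^{-}([r])$, the initial portion $\gamma|_{[0,a]}$ is automatically forced onto that boundary and onto a lightlike geodesic.

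The case $a=0$ being trivial, I assume $a>0$. By hypothesis $\gamma(0)\in B_f([r])$, and since $\gamma|_{[a,b)}$ is a generator of $B_f([r])$ its starting point $\gamma(a)$ lies in $B_f([r])$ as well. By Lemma \ref{hzo}(iv) we have $B_f([r])=\dot{[r]}\cap \p I^{-}([r])$, so both $\gamma(0)$ and $\gamma(a)$ lie on $\p I^{-}([r])$ and are distinct. The cited proposition then gives $\gamma([0,a])\subset \p I^{-}([r])$, with this portion coinciding with a segment of a future lightlike ray. Combining this with the hypothesis $\gamma([0,a))\subset \dot{[r]}$ and invoking Lemma \ref{hzo}(iv) once more yields $\gamma([0,a))\subset \dot{[r]}\cap \p I^{-}([r])=B_f([r])$, which together with $\gamma([a,b))\subset B_f([r])$ gives $\gamma([0,b))\subset B_f([r])$.

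It remains to check that $\gamma|_{[0,b)}$ is a single unbroken lightlike geodesic. Both $\gamma|_{[0,a]}$ and $\gamma|_{[a,b)}$ are lightlike geodesic segments; were they to fail to join smoothly at $\gamma(a)$, the standard corner-rounding argument would produce a timelike curve between nearby points of $\gamma$, contradicting the achronality of $B_f([r])$ established in the lemma on the structure of $B_f([r])$. Since $\gamma|_{[a,b)}$ is future inextendible, so is $\gamma|_{[0,b)}$, and as a subset of the achronal set $B_f([r])$ it is itself achronal. Hence $\gamma|_{[0,b)}$ is a future lightlike ray contained in $B_f([r])$, i.e.\ a generator. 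The only nontrivial step is really the initial invocation of the achronal-boundary proposition, which matches $\gamma(0)$ and $\gamma(a)$ as a pair on $\p I^{-}([r])$; the rest is essentially routine verification.
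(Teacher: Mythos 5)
Your argument is correct, but it takes a different (and somewhat heavier) route than the paper's. You reduce everything to the earlier proposition on causal curves joining two points of the achronal boundary $\p I^{-}([r])$, applied to the pair $\gamma(0),\gamma(a)$, and then recover $\gamma([0,a))\subset B_f([r])$ via the characterization $B_f([r])=\dot{[r]}\cap \p I^{-}([r])$ of Lemma \ref{hzo}(iv). The paper instead argues pointwise with the characterization $B_f([r])=\dot{[r]}\backslash I^{-}([r])$: for $t\in(0,b)$ the point $\gamma(t)$ lies in $\dot{[r]}=B_p([r])\cup B_f([r])$, and it cannot lie in $B_p([r])\backslash B_f([r])\subset I^{-}([r])$ because $\gamma(0)\le \gamma(t)$ would then force $\gamma(0)\in I^{-}([r])$, contradicting $\gamma(0)\in B_f([r])$. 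That version needs only the hypothesis on $\gamma(0)$ and no auxiliary proposition; yours additionally exploits $\gamma(a)\in B_f([r])$ and the structure theory of $\p I^{-}([r])$. Two small remarks on your write-up: (a) you assert that $\gamma(0)$ and $\gamma(a)$ are \emph{distinct} without justification --- the cited proposition requires distinctness, and the degenerate possibility $\gamma(0)=\gamma(a)$ (a closed causal loop in $\p I^{-}([r])$) is not excluded a priori, although it is easily handled by the paper's pointwise argument; (b) the corner-smoothing discussion at the end is redundant, since in this paper a future lightlike ray is \emph{defined} as a future inextendible achronal causal curve, so once $\gamma\subset B_f([r])$ gives achronality you are done.
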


\begin{proof}
Let $t\in (0,b)$, it cannot be $\gamma(t)\in B_p([r])\backslash
B_{f}([r])\subset I^{-}([r])$ for in this case $\gamma(0)\in
I^{-}([r])$, a contradiction. Thus $\gamma\subset B_{f}([r])$ and
$\gamma$ is necessarily achronal as $B_{f}([r])$ is achronal, i.e.
$\gamma$ is a generator.
\end{proof}

If we follow a generator of $B_f([r])$ in the past direction we may
suspect that the first exit point from $\dot{[r]}$ (if there is any)
should belong to $\textrm{edge}(B_f([r]))$. This is not generically
so as as the next proposition and example show. Ultimately the generators do not end on the edge as it happens for Cauchy horizons because in the present  case the inextendible direction of the generators moves away from the edge while in the Cauchy horizon case it moves towards the edge.

\begin{proposition}
Let $\gamma:(-a,b) \to M$, $0<a,b$, be a causal curve which is a generator of $B_f([r])$
if restricted to the domain $[0,b)$,  and assume that
 for every $\epsilon>0$,
$\gamma((-\epsilon,0))$ contains some point in $M\backslash
\overline{[r]}$, then $\gamma(0)\in B_f([r])\cap B_p([r])$.
\end{proposition}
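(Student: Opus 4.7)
The plan is to argue by contradiction on the only non-trivial half of the conclusion. Since $\gamma|_{[0,b)}$ is a generator of $B_f([r])$ we already have $\gamma(0)\in B_f([r])\subset \overline{[r]}$, so the claim $\gamma(0)\in B_p([r])$ reduces to showing $\gamma(0)\notin I^+([r])$. I will assume $\gamma(0)\in I^+([r])$ and derive a contradiction with the hypothesis that every past neighborhood of $\gamma(0)$ along $\gamma$ meets $M\setminus\overline{[r]}$.

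By openness of $I^{+}([r])$ there is $\delta>0$ with $\gamma((-\delta,0])\subset I^{+}([r])$, and by hypothesis I can pick $t_0\in(-\delta,0)$ with $\gamma(t_0)\notin \overline{[r]}$; in particular $\gamma(t_0)\in I^{+}([r])\setminus \overline{[r]}$, so some $r_0\in[r]$ satisfies $r_0\ll\gamma(t_0)$. The core step is to slip just inside the chronological past of $\gamma(t_0)$: let $\eta$ be any past-directed timelike curve starting at $\gamma(t_0)$ and set $z=\eta(\tau)$ for small $\tau>0$. Then $z\ll\gamma(t_0)$ by construction, and for $\tau$ sufficiently small $z\in I^{+}(r_0)\subset I^{+}([r])$ because $I^{+}(r_0)$ is open and contains $\gamma(t_0)$. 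Fix any $s\in(0,b)$; since $\gamma$ is causal and $t_0<s$ one has $\gamma(t_0)\le\gamma(s)$, and $\gamma(s)\in B_f([r])\subset \overline{[r]}$ because $\gamma|_{[0,b)}$ generates $B_f([r])$. The standard push-up (a chronological relation followed by a causal one is chronological) yields $z\ll \gamma(s)$, hence $z\in I^{-}(\gamma(s))\subset I^{-}(\overline{[r]})=I^{-}([r])$, where the last equality is the content of Lemma \ref{hzo}. Combining, $z\in I^{+}([r])\cap I^{-}([r])=[r]\subset \overline{[r]}$, and letting $\tau\to 0^{+}$ gives $z\to\gamma(t_0)$ with $z\in\overline{[r]}$, whence $\gamma(t_0)\in\overline{[r]}$, contradicting the choice of $t_0$.

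The subtlety that forces the detour through $z$ is that in the hardest situation $\gamma|_{[t_0,0]}$ may be a single achronal lightlike segment smoothly extending the generator $\gamma|_{[0,b)}$; then $\gamma(t_0)\not\ll\gamma(s)$ and one cannot directly deduce $\gamma(t_0)\in I^{-}([r])$. Perturbing to a timelike-past point $z$ restores the chronological relation via push-up and activates the identity $I^{-}(\overline{[r]})=I^{-}([r])$, which is what delivers the contradiction.
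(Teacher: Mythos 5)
Your proof is correct. You reduce the claim to $\gamma(0)\notin I^{+}([r])$ exactly as the paper does (since $\gamma(0)\in B_f([r])\subset\overline{[r]}$ is immediate and $B_p([r])=\overline{[r]}\backslash I^{+}([r])$), and you run the same contradiction: openness of $I^{+}([r])$ puts a whole past segment $\gamma((-\delta,0])$ inside $I^{+}([r])$, and the hypothesis supplies a point $\gamma(t_0)$ of that segment outside $\overline{[r]}$. Where you diverge is in how the contradiction is extracted. The paper argues segment-wise: it notes $\gamma\subset\overline{I^{-}([r])}$ (causality plus $I^{-}([r])$ being a past set) and then splits into the two cases $\gamma((-\epsilon,0))\subset I^{-}([r])$ (hence in $[r]$) or, for smaller $\epsilon$, $\gamma((-\epsilon,0))\subset\partial I^{-}([r])$ (hence in $B_f([r])\backslash B_p([r])\subset\overline{[r]}$); this dichotomy tacitly uses that $\{t:\gamma(t)\in I^{-}([r])\}$ is down-closed along the curve, a point the paper leaves implicit. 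You instead work pointwise: pushing a point $z\ll\gamma(t_0)$ up past the causal relation $\gamma(t_0)\le\gamma(s)$ lands $z$ in $I^{-}(\overline{[r]})=I^{-}([r])$ (Lemma \ref{hzo}), hence in $I^{+}([r])\cap I^{-}([r])=[r]$, and letting $z\to\gamma(t_0)$ forces $\gamma(t_0)\in\overline{[r]}$, contradicting its choice. Your route buys a cleaner argument that bypasses the case split and the identity $\partial I^{-}([r])\cap I^{+}([r])=B_f([r])\backslash B_p([r])$, at the modest cost of invoking Lemma \ref{hzo} and the push-up property explicitly; both proofs are sound.
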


\begin{proof}
As $\gamma$ is causal, $\gamma \subset \overline{I^{-}([r])}$. It cannot be $\gamma(0)\in B_f([r])\backslash B_{p}([r])$ for in
this case for sufficiently small $\epsilon$,
$\gamma((-\epsilon,0])\subset I^{+}([r])$, so that either
$\gamma((-\epsilon,0)) \subset I^{-}([r])$ and hence
$\gamma((-\epsilon,0)) \subset [r]$ a contradiction, or $\epsilon$
could have been chosen so small that $\gamma((-\epsilon,0))\subset
\p I^{-}([r])$ in which case $\gamma((-\epsilon,0))\subset
B_f([r])\backslash B_{p}([r])$ which would contradict the assumption
that $\gamma(0)$ is  the first exit point in the past direction. We
conclude that $\gamma(0)\in B_f([r])\cap B_p([r])$.
\end{proof}

\begin{example}
We construct an example which proves that a generator of  $B_f([r])$
can have starting point belonging to $(B_f([r])\cap
B_p([r]))\backslash \textrm{edge}(B_f([r]))$ and immediately escape $B_f([r])$ if
prolonged in the past direction.

Consider 1+2 Minkowski spacetime of coordinates $(t,x,y)$ and
identify the hyperplanes $t=-2$ and $t=2$, so that the spacetime $N$
between the two slices becomes totally vicious. Next remove from
$t=-1$ an ellipse (including the interior) whose minor axis is $2$
and whose major axis is $4$. Do the same on the slice $t=1$ but let
the new ellipse be rotated of $\pi/2$  radians with respect to the
former. A point belonging to both $B_f([r])\cap B_p([r])$ is
$q=(0,0,0)$; $q$ does not belong to $\textrm{edge}(B_f([r]))$  and
it is easy to check that the two generators starting from $q$ of
$B_f([r])$ escape $B_f([r])$ if prolonged in the past direction.
\end{example}

\subsection{Differentiability of $\dot{[r]}$}

Let us consider the issue of  the differentiability of $\dot{[r]}$.
We regard this set as the union of the two $n$-dimensional topological
submanifolds $B_f([r])\backslash \textrm{edge}(B_f([r]))$ and
$B_p([r])\backslash \textrm{edge}(B_f([r]))$, thus we focus first on
the differentiability of $B_p([r])\backslash
\textrm{edge}(B_p([r]))$.

The differentiability of topological hypersurfaces generated by past inextendible
lightlike geodesics has been studied in
\cite{beem98,chrusciel98,chrusciel98b,minguzzi14d}. This analysis was carried
out having in mind Cauchy horizons but, as it is clarified with
\cite[Theor.\ 2.3]{chrusciel98b}, the results hold in  general. Points at which the generators leave the hypersurface in the
future direction are called future endpoints. The quoted works prove
that at non-future endpoints the hypersurface is $C^1$, at future
endpoints at which ends only one generator the hypersurface is still
$C^1$ and at future endpoints at which ends more than one generator
the hypersurface is non differentiable.  Therefore these results hold  unchanged for
$B_p([r])\backslash \textrm{edge}(B_p([r]))$, and a time dual version holds
 for $B_f([r])\backslash \textrm{edge}(B_f([r]))$. A better way
to apply them is by considering $B_p([r])$ as a subset of $\p
I^{+}([r])$ which is also generated by past inextendible lightlike geodesics. From
that we can infer that $B_p([r])$ is non-differentiable at $p\in
\textrm{edge}(B_p([r]))$ if and only if $p$ admits more that one
generator of $B_p([r])$ ending at it.
%In particular, $\dot{[r]}$ is
%differentiable at $p$ only if all the generators of $B_f([r])$
%starting from $p$ enter immediately $I^{+}([r])$ and all the
%generators of $B_p([r])$ ending at $p$ enter immediately
%$I^{-}([r])$.

Furthermore, Chru\'sciel and Galloway \cite{chrusciel98} have given
an example of Cauchy horizon which is non-differentiable in a dense
set. They first constructed \cite[Theor.\ 1.1]{chrusciel98} a compact
set $C=\mathbb{R}^2\backslash K\subset \mathbb{R}^2$ having a
connected Lipschitz boundary such that on the spacetime $M=(-1,1)
\times \mathbb{R}^2$, endowed with the usual Minkowski metric,
$E^{+}(\{0\}\times C)$ was non-differentiable on a dense set.

We construct an example of spacetime in which $\dot{[r]}$ is
non-differentiable on a dense set as follows. We remove from the
just constructed spacetime the sets $\{0\}\times C$ and
$\{1/2\}\times C$ and we identify the interior of the  upper-side of
$\{0\}\times C$ with the interior of lower-side of $\{1/2\}\times
C$. This operation introduces closed timelike curves and the
boundary of the chronology violating region is a subset of what,
before the removal of the sets, was $E^{+}(\{0\}\times C)\cup
E^{-}(\{1/2\}\times C)$. As such $\dot{[r]}$ is non-differentiable
on a dense set.

We say that  $B_f$ is {\em compactly generated} if there is a compact set $K$ such that its (future inextendible) generators enter  $K$. For the notions of the next theorem not previous introduced we refer the reader to \cite{minguzzi14d}. Observe that for the study of the development of time machines one is interested in the time dual version involving $B_p$.

\begin{theorem} \label{mai}
Assume that the null convergence condition holds. If $B_f$  is compactly
generated and its generators are future complete then  it  is compact,  $C^{3}$, and generated by inextendible lightlike geodesics. Actually smooth if the metric is smooth, and analytic if the metric is analytic. Moreover,  $B_f$ has zero Euler characteristic, it  is generated by future complete lightlike lines and on $B_f$
\[\theta=\sigma^2=Ric(n,n)=0,\qquad b=\overline{\sigma}=\overline{R}=\overline{C}=0.\]
In other words, denoting with $n$ a lightlike tangent field to $B_f$, for every $X\in T B_f$, $\nabla_X n \propto n$ and $R(X, n)n\propto n$, that is, the second fundamental form vanishes on $B_f$ and the null genericity condition is violated everywhere on $B_f$. In 2+1 spacetime dimensions either $B_f$ is a torus or a Klein bottle where the latter case is excluded if the spacetime is time orientable.
 %it contains a closed causal curve (fountain) which is a minimal set for the future (and past)  lightlike flow on the horizon.
\end{theorem}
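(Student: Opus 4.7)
The plan is to transfer the compactly generated Cauchy horizon theory developed in \cite{minguzzi14d} to the present setting, exploiting the structural analogy: by the previous results, $B_f$ is closed, achronal, and generated by future lightlike rays lying on the achronal boundary $\partial I^{-}([r])$, which are exactly the properties used for Cauchy horizons in that reference. Consequently most of the arguments of \cite{minguzzi14d} carry over essentially verbatim.

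First, I would establish compactness together with the lightlike line property. Compact generation supplies a compact set $K$ met by every future inextendible generator of $B_f$, and together with future completeness this enables the fountain construction of \cite{minguzzi14d}: repeated application of the limit curve theorem to tails of generators accumulating in $K$ yields an invariant set of complete achronal null geodesics contained in $B_f$. Since $B_f$ is closed and every generator accumulates on this invariant set, one concludes that $B_f$ coincides with it, so $B_f$ is compact and generated by past- and future-inextendible achronal lightlike geodesics, i.e.\ by future-complete lightlike lines.

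Second, I would derive the geometric identities by applying the Raychaudhuri equation along a generator $\gamma$ with affinely parametrized tangent $n$:
\[
\frac{d\theta}{d\lambda}=-\frac{1}{n-1}\theta^{2}-\sigma^{2}-\mathrm{Ric}(n,n).
\]
The null convergence condition $\mathrm{Ric}(n,n)\ge 0$, combined with future completeness and the absence of conjugate points on any lightlike line, forces $\theta\equiv 0$ everywhere; Raychaudhuri then gives $\sigma\equiv 0$ and $\mathrm{Ric}(n,n)\equiv 0$. Hence the null second fundamental form $b=\sigma+\frac{\theta}{n-1}g^{\perp}$ vanishes, which is the statement $\nabla_X n\propto n$. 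A parallel analysis of the Jacobi/Riccati equation for the tidal operator, with $\sigma\equiv 0$ and $\theta\equiv 0$, forces the tangential Riemann and Weyl contributions $\overline{R}$ and $\overline{C}$ to vanish as well, so that $R(X,n)n\propto n$ on $B_f$ and the null genericity condition is violated identically.

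Third, I would upgrade the regularity of $B_f$ to $C^{3}$ (and to $C^{\infty}$ or $C^{\omega}$ according to the regularity of $g$) by invoking the differentiability results of \cite{minguzzi14d}. This is the step I expect to be the main obstacle: promoting a topological hypersurface generated by inextendible achronal null geodesics with vanishing null second fundamental form to a smooth submanifold is the technically most delicate ingredient, and it is precisely the point where the old gaps in Hawking's argument have been closed. Once smoothness is in hand, the nowhere-vanishing null tangent field on $B_f$ yields $\chi(B_f)=0$ by Poincar\'e--Hopf. In dimension $2{+}1$ the only closed surfaces with vanishing Euler characteristic are the torus and the Klein bottle; the orientability argument of \cite{minguzzi14d}, which uses time orientability of the spacetime together with the canonical future direction of $n$ to trivialize the orientation bundle of $T B_f$, then excludes the Klein bottle.
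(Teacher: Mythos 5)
Your proposal follows essentially the same route as the paper: the paper's proof consists precisely of the reduction to \cite[Theor.\ 18]{minguzzi14d}, with the only modification being that future completeness of the generators is now an explicit hypothesis rather than a consequence of the Cauchy-horizon structure, followed by the Poincar\'e--Hopf/surface-classification argument for the $2+1$ case. Your expanded sketch of the imported machinery (limit-curve/fountain construction, Raychaudhuri with the null convergence condition, the differentiability results, and the orientability argument excluding the Klein bottle) is consistent with what that reference provides, so there is nothing substantively different to compare.
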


If it is know that if $B_f=H^{-}(S)$ for some partial Cauchy surface $S$ (e.g.\ see next section) then the condition on the geodesic completeness of $B_f$ can be dropped, for in this case one can use directly \cite[Theor.\ 18]{minguzzi14d}. We stress once again \cite{minguzzi14d} that physically speaking it is incorrect to demand the validity of the null genericity condition on a compact set as done by some authors \cite{tipler77}, thus its violation does not imply that the spacetime is unphysical.

Observe that $B_f$ belongs to the boundary of the chronology violating region, so if it is not compactly generated then the chronology violating region propagates to the boundary of spacetime.
Thus this theorem establishes that either the formation of closed timelike curves happens as in the theorem, with  a compact smooth $B_p$ with all its mentioned nice properties, or such CTC formation either violates energy conditions,  extends to the boundary, or generates (geodesic) singularities.

\begin{proof}
The proof coincides with that of \cite[Theor.\ 18]{minguzzi14d} where this time the completeness of the future hypersurface must be assumed while there it was a consequence of it being a Cauchy horizon. If the spacetime dimension is three $B_f$ is two dimensional and the only compact closed surfaces with Euler characteristic zero are the Klein bottle and the torus. If the spacetime is orientable then $B_f$ is orientable thus the Klein bottle can be excluded.
%The last statement is a consequence of the  Schwartz-{P}oincar\'e-{B}endixson theorem \cite{schwartz63}. Observe that $B_f$ is at least $C^3$ thus its tangent vector field is $C^2$ as required by the theorem.
\end{proof}

In three spacetime dimensions one could obtain other interesting results by applying the Schwartz-{P}oincar\'e-{B}endixson theorem to $B_f$. In fact, observe that $B_f$ is at least $C^3$ thus its tangent vector field is $C^2$ as required by SPB's theorem. We conclude that under the assumption of Theorem \ref{mai} $B_f$ contains a closed causal curve (fountain) which is a minimal invariant set for the future (and past)  lightlike flow on the horizon or the whole torus is itself a minimal invariant set.
We recall that a minimal invariant sets is a closed minimal set which is left invariant by the future flow on $B_f$. The concept makes sense on any imprisoned causal curve. Any minimal invariant set is generated by lightlike lines \cite{kay97,minguzzi07f}.

Any lightlike geodesic is  no more lightlike if we open the light cones. However, in some cases the lightlike geodesic is stable in the sense that it gets simply moved aside, while in other cases it is unstable as it  disappears completely.
The next result does not assume null completeness but, rather, it relates it to the concept of stability.

\begin{theorem}
Suppose that $B_f([r])$ is compactly generated, then any geodesic on $B_f\backslash B_p$ belonging to one of its minimal invariant sets is future complete or it is unstable.
\end{theorem}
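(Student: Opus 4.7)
The plan is to argue by contrapositive: suppose $\gamma$ is a geodesic on $B_f([r])\backslash B_p([r])$ which belongs to a minimal invariant set $M_0$ and which is stable; I show $\gamma$ must be future complete. First I would use the compact generation hypothesis to force $\gamma$ into a compact set for large affine parameter. Since $B_f$ is compactly generated there is a compact $K$ into which the future inextendible generators of $B_f$ enter. Because $M_0$ is closed and invariant under the future lightlike flow on $B_f$ and is generated by lightlike lines, the future half of $\gamma$ remains inside $M_0$; combined with compact generation and minimality of $M_0$, the tail $\gamma|_{[t_0,\omega)}$ is imprisoned in the compact set $\overline{M_0\cap K}$.

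Next I would invoke the standard dichotomy for future imprisoned null geodesics: such a $\gamma$ is either future complete, or, if $\omega<\infty$, the normalized tangent and base point $(\gamma(t_n),\gamma'(t_n))$ admit limit pairs $(p,v)$ in $M_0$ along some sequence $t_n\to \omega^{-}$. By minimality of $M_0$ and invariance of the lightlike flow, the orbit of the pair $(\gamma,\gamma')$ is dense in the lift of $M_0$ to the relevant bundle of null directions, so $\gamma$ returns $C^{1}$-close to every previously traversed segment of itself; concretely, for every $\delta>0$ one finds $s<t$ approaching $\omega$ with $\gamma(t)$ within $\delta$ of $\gamma(s)$ in a fixed background Riemannian metric and with the two null directions correspondingly close in the unit bundle.

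The instability argument then proceeds as follows. Consider a cone-widening family $g_\epsilon$ with $g_\epsilon\succeq g$ and $g_\epsilon\to g$ as $\epsilon\to 0^{+}$; stability of $\gamma$ would provide, for each small $\epsilon>0$, a $g_\epsilon$-null generator $\gamma_\epsilon$ of the $g_\epsilon$-analogue of $B_f$ lying in a minimal invariant set and converging to $\gamma$. I would show this is impossible under the assumption of incompleteness. Taking $s<t$ as above with $\delta$ chosen small in terms of $\epsilon$, the $g$-null segment $\gamma|_{[s,t]}$ remains $g_\epsilon$-causal, while the strict widening of cones permits joining $\gamma(t)$ back to $\gamma(s)$ by a short $g_\epsilon$-timelike bridge; concatenation gives a closed $g_\epsilon$-timelike curve through a neighborhood of every point of $\gamma|_{[s,t]}$. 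Hence a whole open set about $\gamma|_{[s,t]}$ sits inside the $g_\epsilon$-chronology violating class, so no $g_\epsilon$-null curve close to $\gamma$ can lie on the boundary of that class, contradicting stability.

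The main obstacle is calibrating the widening parameter $\epsilon$ to the recurrence gap $\delta$ provided by the imprisoned-incomplete dichotomy: the timelike bridge must be executable within the cones of $g_\epsilon$, which requires $\delta$ to be small compared with the opening amplitude of $g_\epsilon$. Incompleteness is precisely what delivers recurrences with $\delta \to 0$ as $t,s\to\omega^{-}$, so arbitrarily small $\epsilon$ already suffice; in the complete case no such shrinking-bridge construction is available and the argument correctly breaks down, in agreement with the statement that complete geodesics need not be unstable.
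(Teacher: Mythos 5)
There is a genuine gap, and it is located exactly where your argument claims to use incompleteness. The recurrence you extract in your second step --- that the future tail of $\gamma$ is imprisoned in a compact set and returns $\delta$-close to itself with $\delta\to 0$ --- is a consequence of imprisonment and minimality alone; it holds for \emph{every} generator in a minimal invariant set, complete or not. Consequently your shrinking-bridge construction (widen the cones by $\epsilon$, close up a recurrence gap $\delta\ll\epsilon$ by a short $g_\epsilon$-timelike arc, obtain closed $g_\epsilon$-timelike curves in any neighbourhood of $\gamma$) applies verbatim to a future \emph{complete} generator, e.g.\ to a closed null geodesic fountain. Your closing remark that ``in the complete case no such shrinking-bridge construction is available'' is therefore false, and the argument proves too much: it would declare every minimal-invariant-set generator unstable. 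The underlying conceptual error is that ``after opening the cones there are CTCs near $\gamma$'' does not imply instability in the intended sense; a closed null geodesic that becomes a closed timelike curve is merely \emph{moved aside} (the boundary of the enlarged chronology violating class persists nearby), which is precisely the stable alternative the theorem allows.

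What incompleteness actually buys --- and what your proof never uses --- is the existence of a future-directed \emph{timelike} variation of $\gamma$ in the \emph{unperturbed} metric $g$ (this is the content of the cited completeness-of-horizon-generators theorem, i.e.\ the corrected Hawking--Ellis Lemma 8.5.5 argument: an incomplete imprisoned generator admits a variation whose acceleration term, controlled by the divergence of the affine parameter, dominates and renders the varied curve timelike). This produces a timelike curve accumulating on $\gamma$ from the future, hence closed timelike curves in $I^{+}(\gamma)\cap U$ for every neighbourhood $U$, so that $\gamma$ lies on the past boundary of a second chronology violating class $[q]$ distinct from $[r]$. Instability then follows because opening the cones merges $[q]$ with $[r]$ into a single class, leaving no room for the generator to be displaced: it disappears and cannot be recreated. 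Without the step that manufactures CTCs to the future of $\gamma$ in the metric $g$ itself, there is no second class, no merging, and no way to separate the incomplete case from the complete one.
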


A better understanding of this theorem can be obtained from the proof.

\begin{proof}
Suppose that this geodesic $\gamma$ is future incomplete, then there is a small timelike variation towards the future of $\gamma$ which brings this curve to a timelike curve $\eta$ such that $\eta$ accumulates in the future to the same points to which accumulates $\gamma$, and hence accumulates on $\gamma$ itself \cite[Theor.\ 2.1]{minguzzi14} \cite{krasnikov14}. This fact implies that it is possible to construct a closed timelike curve $\sigma$ in $I^{+}(\gamma)\cap U$ where $U$ is any neighborhood of $\gamma$. That is, $\gamma$ is in the past boundary of a chronology violating class $[q]$ and on the future boundary of another chronology violating class $[r]$ (the two classes are different otherwise $\gamma$ would belong to a chronology violating class, just take a timelike curve moving from $[r]$ to $[q]=[r]$ passing through a point of $\gamma$). Thus by opening slightly the cones, $\gamma$ disappears but it cannot be recreated anywhere else since the two distinct classes join in a single class, thus showing that the previous configuration was unstable.
\end{proof}

On might ask whether the violation of chronology near a point of $B_f([r])$ is a local or global phenomenon. The next result shows that if a minimal invariant set generator is incomplete in the past direction, then closed timelike curves can be found in any neighborhood of the generator.

\begin{theorem}
Let $B_f([r])$ be compactly generated and let $\gamma$ be a lightlike geodesic on $B_f\backslash B_p$ belonging to one of its minimal invariant sets. Then either $\gamma$ is past complete or for every neighborhood $U \supset \gamma$ there is a closed timelike curve in $U\cap I^{-}(\gamma)\cap [r]$.
\end{theorem}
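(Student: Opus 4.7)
The plan is to argue by contraposition. Assuming $\gamma$ is past incomplete, we show that for every open neighborhood $U\supset \gamma$ there is a closed timelike curve contained in $U\cap I^{-}(\gamma)\cap [r]$. The construction is the time-dual of the one used in the preceding theorem, with the additional task of verifying membership in $[r]$.

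The first step is a preliminary reduction on $U$. Since $\gamma$ is a generator of $B_f([r])$ lying in $B_f([r])\backslash B_p([r])$, by the earlier identity $B_f([r])\backslash B_p([r])=B_f([r])\cap I^{+}([r])$ we have $\gamma\subset I^{+}([r])$, an open set. Replacing $U$ by $U\cap I^{+}([r])$, which is still a neighborhood of $\gamma$, we may assume $U\subset I^{+}([r])$; this will automatically guarantee $r\ll p$ for every $p\in U$ and hence for every point of the eventual CTC.

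Next, because $\gamma$ belongs to a minimal invariant set of the lightlike flow on the compactly generated $B_f([r])$, the curve is totally imprisoned in a compact subset $K\subset B_f([r])$ and accumulates on itself in the past direction. Combined with past incompleteness, this is exactly the setting in which the time-dual of \cite[Theor.\ 2.1]{minguzzi14} (see also \cite{krasnikov14}) applies, yielding a past-directed small timelike variation of $\gamma$ whose endpoint curve $\eta\subset U\cap I^{-}(\gamma)$ is past inextendible and accumulates in the past on the same compact set as $\gamma$, and hence on $\gamma$ itself. From such an $\eta$ a closed timelike curve $\sigma\subset U\cap I^{-}(\gamma)$ is produced by the same mechanism used in the preceding theorem (now applied in the past direction): the past-accumulation of $\eta$ onto some $q\in \gamma$ allows one to close a portion of $\eta$ by a short timelike arc in a causally convex neighborhood of $q$.

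It remains to check $\sigma\subset [r]$. For any $p\in \sigma$, from $p\in U\subset I^{+}([r])$ we obtain $r\ll p$, while Lemma \ref{hzo} gives $I^{-}(\gamma)\subset I^{-}(\overline{[r]}\,)=I^{-}([r])$, so from $p\in I^{-}(\gamma)$ it follows that $p\ll r$. Hence $r\ll p\ll r$, i.e.\ $p\in [r]$, and the closed timelike curve lies in $U\cap I^{-}(\gamma)\cap [r]$ as required. The main obstacle is the justification that the past-directed variation can be arranged so that $\eta$ remains globally inside $I^{-}(\gamma)\cap U$ rather than merely locally near each accumulation point; this relies on the imprisonment of $\gamma$ and on the smallness of the variation preventing $\eta$ from re-entering the causal future of $\gamma$ within $U$, and is precisely the time-dual of the corresponding step already handled in the proof of the preceding theorem.
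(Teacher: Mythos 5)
Your proposal is correct and follows essentially the same route as the paper, whose entire proof is the one-line remark that the argument of the preceding theorem is to be reworked in the past direction: you carry out exactly that time-dual variation construction. The only genuinely new ingredient you supply --- shrinking $U$ into the open set $I^{+}([r])$ (legitimate since $B_f([r])\backslash B_p([r])=B_f([r])\cap I^{+}([r])$) and combining $r\ll p$ with $p\in I^{-}(\gamma)\subset I^{-}(\overline{[r]}\,)=I^{-}(r)$ to place the closed timelike curve inside $[r]$ --- is precisely the verification the paper leaves implicit, and it is sound.
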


The proofs goes similarly to that of the previous theorem but reworked in the past direction.

\subsection{The coincidence with previous definitions of boundary}

The next result shows that, provided the chronal region is globally
hyperbolic, the past Cauchy horizon of a suitable hypersurface is
the future boundary of the chronology violating set. This  result
relates our definition of boundary with the more restrictive one
given in some other papers \cite{thorne93}.

\begin{proposition}
Let $[r]$ be a chronology violating class and assume that the
manifold $N=M\backslash \overline{I^{-}([r])}$ with the induced
metric is globally hyperbolic, then for every Cauchy hypersurface
$S$ of $N$, $S$ is edgeless in $M$ and $H^{-}(S)=\partial
{I^{-}([r])}$. Moreover, if $M=I^{+}([r])$ then
$H^{-}(S)=\dot{[r]}=B_f([r])$.
\end{proposition}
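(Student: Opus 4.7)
The plan is to establish the three assertions---(a) $S$ edgeless in $M$, (b) $H^{-}(S)=\partial I^{-}([r])$, and (c) the refinement under $M=I^{+}([r])$---by transferring the Cauchy data of $S$ from the open submanifold $N=M\setminus\overline{I^{-}([r])}$ onto all of $M$ using the null-generator structure of the achronal boundary $\partial I^{-}([r])$.

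For (a), I first verify $S$ is achronal in $M$: any timelike curve $\gamma$ joining $s_{1},s_{2}\in S$ cannot meet $\overline{I^{-}([r])}$, for $m\in\gamma\cap\overline{I^{-}([r])}$ would give $s_{1}\ll m$ and $I^{-}(m)\subset I^{-}([r])$ (standard for points on an achronal boundary), forcing $s_{1}\in I^{-}([r])$ and contradicting $s_{1}\in S\subset N$; hence $\gamma\subset N$, contradicting the achronality of $S$ in $N$. To show $S$ is closed in $M$ I argue by contradiction: a limit point $q\in\bar S\setminus S$ must lie on $\partial I^{-}([r])$, and the proposition preceding Lemma~\ref{hzo} furnishes a future lightlike ray from $q$ inside $\partial I^{-}([r])$. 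A limit-curve argument applied to sequences $q_{n}\to q$, $q_{n}\in S$, combined with the achronality of $\partial I^{-}([r])$ and the Cauchy property of $S$ in $N$, produces the needed contradiction. Using the standard inclusion $\bar S\setminus S\subset\mathrm{edge}(S)$ for achronal sets, and noting that for $q\in S\subset N$ the edge condition is tested in a neighbourhood $U\subset N$ where $S$ is edgeless as a Cauchy hypersurface, one concludes $\mathrm{edge}_{M}(S)=\emptyset$.

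For (b), I compute $D^{-}(S)$. A future-directed causal curve from any $p\in N$ cannot enter $I^{-}([r])$ (else $p\ll r$ forces $p\in I^{-}([r])$, contradicting $p\in N$) and can therefore exit $N$ only across $\partial I^{-}([r])$; consequently any future-inextendible causal curve in $M$ from $p\in N^{-}:=I^{-}_{N}(S)$ is future-inextendible in $N$ up to the possible exit, and the Cauchy property forces it to meet $S$ beforehand. Conversely, any $p\in\overline{I^{-}([r])}$ admits a future-inextendible causal curve that never meets $S$: a curve cycling inside $[r]$ if $p\in I^{-}([r])$, or a generator of $\partial I^{-}([r])$ if $p\in\partial I^{-}([r])$. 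Hence $D^{-}(S)=D^{-}_{N}(S)=S\cup N^{-}$. Since every point of $N$ is chronologically above some point of $N^{-}$, a direct computation gives $I^{+}(D^{-}(S))=N$, and therefore
\[
H^{-}(S)=\overline{D^{-}(S)}\setminus N=\overline{D^{-}(S)}\cap\partial I^{-}([r])\subset\partial I^{-}([r]).
\]
For the reverse inclusion, given $q\in\partial I^{-}([r])$ I exploit $I^{+}(q)\subset N$ (same past-set argument) and $q\notin I^{+}(S)\subset N$; invoking the limit curve theorem inside the globally hyperbolic $N$ together with the Cauchy property of $S$, one shows every neighbourhood of $q$ contains points of $N^{-}\cup S$, whence $q\in\overline{D^{-}(S)}=H^{-}(S)$.

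Finally, for (c), under $M=I^{+}([r])$ the identity $I^{-}([r])=[r]$ holds: any $p\ll r$ is also in $I^{+}([r])$, supplying $r'\in[r]$ with $r'\ll p$, and the cycle $r\ll r'\ll p\ll r$ forces $p\in[r]$. Hence $\overline{I^{-}([r])}=\overline{[r]}$ and $\partial I^{-}([r])=\dot{[r]}$. Combined with Proposition~\ref{chd}, which yields $\dot{[r]}=I^{+}([r])\cap\dot{[r]}\subset B_{f}([r])\subset\dot{[r]}$ under this hypothesis, we conclude $H^{-}(S)=\dot{[r]}=B_{f}([r])$. I expect the main obstacle to be the closedness of $S$ in $M$ in step (a) and the approximation of $\partial I^{-}([r])$ by points of $N^{-}\cup S$ in step (b); both require careful control over how the Cauchy structure of $S$ degenerates near $\partial I^{-}([r])$, for which the null-generator descriptions supplied by the earlier propositions are the key tool.
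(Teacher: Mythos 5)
Your overall route coincides with the paper's: edgelessness via the past-set property of $I^{-}([r])$, the computation of $H^{-}(S)$ through $D^{-}(S)$ with $I^{+}(D^{-}(S))=N$, and part (c) via $I^{-}([r])=[r]$ together with Proposition \ref{chd}. However, the two steps you yourself flag as ``the main obstacle'' are exactly where the argument is missing, and the sketch you give for the first of them points at the wrong tool. For the closedness of $S$ in $M$ you propose to use a future lightlike ray from $q\in\bar S\setminus S\subset\partial I^{-}([r])$ contained in $\partial I^{-}([r])$, plus a limit-curve argument on $q_{n}\in S$; but such a ray never enters $N$ and never interacts with $S$, so no contradiction with the Cauchy property of $S$ can come from it. The device that actually closes both gaps is this: for $q\in\partial I^{-}([r])$ one has $I^{+}(q)\subset N$ (as you note), so the portion strictly to the future of $q$ of \emph{any} inextendible timelike curve through $q$ lies in $N$ and has no endpoint there (its past endpoint would be $q\notin N$), hence it is an inextendible timelike curve of $N$ and must meet $S$ by the Cauchy property. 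Applied to $q\in\mathrm{edge}(S)\supset\bar S\setminus S$ this contradicts achronality of $S$, since $q\in\bar S$ and points of $S$ near $q$ would then chronologically precede the intersection point; this yields $\mathrm{edge}(S)=\emptyset$. Applied to $q\in\partial I^{-}([r])$ in step (b) it yields $q\in I^{-}(S)$, whence the points of $I^{+}(q)\cap I^{-}(S)\subset N^{-}$ accumulate at $q$ and $q\in\overline{D^{-}(S)}$, which is precisely the reverse inclusion you leave unproved. No limit curve theorem is needed anywhere.

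The remaining parts of your proposal are correct and essentially the paper's argument: the achronality of $S$ in $M$ and the fact that future causal curves from $N$ stay in $N$ (both via $J^{-}(\overline{I^{-}([r])})\subset\overline{I^{-}([r])}$), the identification $D^{-}(S)=S\cup N^{-}$ and $I^{+}(D^{-}(S))=N$, and part (c). The only cosmetic difference is that the paper obtains the forward inclusion from $\mathrm{Int}\,D(S)=N$ and $\partial D(S)\subset\partial I^{-}([r])$ rather than by computing $D^{-}(S)$ explicitly; both are fine.
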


\begin{proof}
Since $S$ is a (acausal) Cauchy hypersurface for $N$, $\textrm{Int}
D(S)=N$, thus $\partial D(S) \subset \dot{N}=\partial {I^{-}([r])}$.
The set $S$ has no edge in $N$, moreover, it has no edge also in $M$.
Indeed, let $q \in \textrm{edge}(S)$, then as $S$ is closed in $N$,
$q \in \partial {I^{-}([r])}$. But $I^{+}(q)$ is an open set that
cannot intersect the past set  $I^{-}([r])$, thus $I^{+}(q) \subset
N$, moreover no inextendible timelike curve starting from $q$ (e.g.\
a geodesic) can intersect $S$ for otherwise $S$ would not be
achronal. But since such curve would be inextendible in $N$ this
would contradict the fact that $S$ is a Cauchy hypersurface. Thus
$\textrm{edge}(S)=\emptyset$.

Note that $\partial D(S)=H^{+}(S)\cup H^{-}(S)$, thus $H^{-}(S)
\subset \p{I^{-}([r])}$. For the converse note that if $p \in
\partial {I^{-}([r])}$, $I^{+}(p)$ is an open set that cannot intersect
$I^{-}([r])$, thus $I^{+}(p) \subset N$. Note that $p \in I^{-}(S)$
for otherwise a future inextendible timelike curve issued from $p$
would not intersect $S$, still when regarded as an inextendible
curve in $N$ this empty intersection would contradict the fact that
$S$ is a Cauchy hypersurface. Since $p \in I^{-}(S)$  the points in
$I^{+}(p) \cap I^{-}(S)$ necessarily belong to $D^{-}(S)$ thus $p
\in \overline{D^{-}(S)}$ and moreover $p$ does not belong to
$\textrm{Int} D^{-}(S)$ because the points in $I^{-}([r])$ clearly
do not belong to $D^{-}(S)$, as the future inextendible timelike
curves issued from there may enter the chronology violating set
$[r]$ and remain there confined. Thus $p \in H^{-}(S)$.

By the previous result if $M=I^{+}([r])$ then $I^{-}([r])=[r]$ and
$B_f([r])=\dot{[r]}$.
\end{proof}

\section{Relationship between compact generation and compact construction} \label{cop}

We have recalled that theorems on the non-existence of time machines are based on the observation that any creation of a region of chronology violation would lead to a Cauchy horizon which is compactly generated, namely, such that the generators followed in the past direction enter and get imprisoned in a compact set $K$. The idea is that the information on the production of closed timelike curves would propagate on spacetime along the generators of the horizon, so followed in the backward direction those generators have to enter the compact space region were the advanced civilization produced the timelike curves in the first place.

This is the argument which is used to justify the assumption of `compact generation of the horizon'  in connection to the creation of time machines.  It must be remarked that the generators being confined to the horizon cannot reach the Cauchy hypersurface, however, they do intersect the world tube of the compact region in which they are past imprisoned. In this sense the term `space region' used in the previous paragraph is appropriate. Nevertheless, Amos Ori in a series of papers \cite{ori05,ori07}  has criticized the previous argument maintaining that the
assumption of local time machine creation would have to be expressed by the
%argument would
%, rather than Hawking's argument,  the conclusion that it leads to the assumption of compact generation. He maintains that  the argument would
%rather lead to the
following concept, which he terms compact construction.

\begin{definition}
The Cauchy horizon $H^+(S)$ is {\em compactly constructed} if there is a compact set $ S_0\subset S$ such that $H^+(S_0)\cap H^+(S)$ contains almost closed causal curves.
\end{definition}
Here $S_0$ represents the region were the actions of the advanced civilization leading to the formation of closed timelike curves took place.

Actually, Ori uses ``closed causal curves'' in place of `almost closed causal curves' in the above definition. The difference does not seem to be important: the idea is that almost  closed (and possibly closed) causal curves would signal the development of closed timelike curves just behind the horizon. Ori shows that
a compactly constructed time machine can be initiated with no violation of energy conditions \cite{ori07}.
%the assumption of compact construction does not censor time machines as Hawking's compact generation does.

The relative strength of compact generation and compact construction has remained open so far.
One could suspect `compact construction' to be a weaker property than `compact generation', for the latter with its strength  prevents the formation of time machines. In fact we are able to prove

\begin{theorem}
 Let $S$ be a closed acausal hypersurface without edge (partial Cauchy hypersurface). If $H^+(S)$ is compactly generated then it is compactly constructed.
\end{theorem}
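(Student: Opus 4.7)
The plan is to use compact generation to produce almost closed causal curves lying on $H^+(S)$, and then to build a compact $S_0\subset S$ whose future Cauchy horizon contains those same curves. By hypothesis there is a compact set $K$ such that every past-inextendible generator of $H^+(S)$ eventually enters and remains in $K$; without loss of generality $K\subset H^+(S)$. The lightlike flow along the generators is thereby imprisoned in $K$, and the theory of minimal invariant sets for imprisoned causal curves recalled in the paper (Kay, Minguzzi) produces a lightlike line $\gamma$ lying in a minimal invariant subset of $K$. Such a $\gamma$ is almost periodic: for every point $p\in\gamma$ and every open neighborhood $U$ of $p$ there are parameter values $s_1<s_2$ with $\gamma(s_1),\gamma(s_2)\in U$, and the segment $\gamma|_{[s_1,s_2]}$ is an almost closed causal curve contained in $K\subset H^+(S)$.

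Next I would construct a compact set $K_0\subset\mathrm{Int}\,D^+(S)$ with $K\subset\overline{K_0}$, by pushing $K$ slightly into $\mathrm{Int}\,D^+(S)$ in a past-directed timelike way: for each $p\in K$ and each past-directed timelike vector $v_p$ at $p$, the point $\exp_p(\epsilon v_p)$ lies in $I^-(p)\cap\mathrm{Int}\,D^+(S)$ once $\epsilon=\epsilon(p)>0$ is small enough, since $p\in H^+(S)\subset\overline{\mathrm{Int}\,D^+(S)}$ and both $I^-(p)$ and $\mathrm{Int}\,D^+(S)$ are open. A finite-cover argument on the compact $K$ produces such a $K_0$. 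Because $S$ is a closed acausal edgeless hypersurface, $\mathrm{Int}\,D(S)$ with the induced metric is globally hyperbolic with Cauchy hypersurface $S$; therefore $J^-(K_0)\cap J^+(S)$ is compact, and in particular
\[
S_0:=J^-(K_0)\cap S
\]
is a compact subset of $S$.

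It remains to check that $K\subset H^+(S_0)$, which combined with $K\subset H^+(S)$ gives $\gamma\subset K\subset H^+(S_0)\cap H^+(S)$ and hence compact construction. For any $q\in K_0\subset D^+(S)$, every past-inextendible causal curve issued from $q$ meets $S$ at a point of $J^-(q)\cap S\subset J^-(K_0)\cap S=S_0$, so $q\in D^+(S_0)$; therefore $K\subset\overline{K_0}\subset\overline{D^+(S_0)}$. On the other hand $D^+(S_0)\subset D^+(S)$ implies $I^-(D^+(S_0))\subset I^-(D^+(S))$, and $K\subset H^+(S)=\overline{D^+(S)}\setminus I^-(D^+(S))$ is disjoint from the latter, hence from the former. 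Thus $K\subset\overline{D^+(S_0)}\setminus I^-(D^+(S_0))=H^+(S_0)$, and the almost closed segments of $\gamma$ witness compact construction.

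The main obstacle is the joint construction of $K_0$ and $S_0$: one must shift the horizon piece $K$ strictly inside $D^+(S)$ while preserving both compactness and the approximation $K\subset\overline{K_0}$, and then invoke global hyperbolicity of $\mathrm{Int}\,D(S)$ to control the past reach of $J^-(K_0)$ on $S$. Once that is in place the remaining inclusions follow directly from the definitions of $D^+$ and $H^+$.
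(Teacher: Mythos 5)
Your opening step (extracting almost closed causal curves from a minimal invariant set of the imprisoned generators in $K$) and your closing step (verifying $K\subset H^+(S_0)$ from $K_0\subset D^+(S_0)$ and $I^{-}(D^+(S_0))\subset I^{-}(D^+(S))$) both match the paper. The gap is in the middle, and it is fatal as stated: a set $K_0$ that is compact, contained in $\textrm{Int}\,D^+(S)$, and satisfies $K\subset\overline{K_0}$ cannot exist, because a compact set is closed, so $\overline{K_0}=K_0\subset \textrm{Int}\,D^+(S)$, while $K\subset H^+(S)$ and $H^+(S)\cap \textrm{Int}\,D^+(S)=\emptyset$ (every point of the open set $\textrm{Int}\,D^+(S)$ lies in $I^{-}(D^+(S))$). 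If you instead let $K_0$ accumulate on $K$ without being closed, you lose exactly the compactness you need to invoke global hyperbolicity of $\textrm{Int}\,D(S)$, and if you keep $K_0$ compact and strictly below the horizon, you lose $K\subset\overline{D^+(S_0)}$.

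The deeper problem is that your compactness argument for $S_0$ never uses the compact generation hypothesis, yet the intermediate statement it would prove --- that \emph{any} compact $K\subset H^+(S)$ is contained in $H^+(S_0)$ for some compact $S_0\subset S$ --- is false. Take $S=\{t=-\sqrt{1+x^2}\}$ in $1+1$ Minkowski spacetime, so that $H^+(S)=\{t=-|x|\}$ (not compactly generated: its generators run off to infinity), and let $K=\{(x,t)=(1,-1)\}$. Points $p_\epsilon=(1,-1-\epsilon)\in D^+(S)$ approaching $K$ have right-going past null rays meeting $S$ only at $x\approx 1/(2\epsilon)\to\infty$, so no compact $S_0\subset S$ can have $K\subset\overline{D^+(S_0)}$. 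This is precisely the failure mode that the paper's proof must, and does, exclude using compact generation: it sets $S_0=J^{-}(C\cap\overline{D^+(S)})\cap S$ for a compact neighbourhood $C$ of $K$ in $I^+(S)$, assumes $S_0$ non-compact, and uses the limit curve theorem to manufacture a past-inextendible generator of $H^+(S)$ that never enters $K$, contradicting compact generation. Your route discards the one hypothesis that controls how far the past shadows of near-horizon points reach on $S$, so it cannot be repaired by tightening the $K_0$ construction alone; the limit-curve (or an equivalent) argument tied to compact generation has to reappear.
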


\begin{proof}
Let $K$ be the imprisoning compact, we can assume that $K\subset H^+(S)$, otherwise replace $K$ with $K\cap H^+(S)$. Let $C\subset I^+(S)$  be another compact set, chosen so that $K\subset \textrm{Int}\, C$.
Let us prove  that $S_0:=J^{-}(C\cap \overline{ D^{+}(S)})\cap S$ is compact.  Suppose  not, then there is a sequence of past inextendible casual curves $\gamma_n$ with future endpoint $p_n\in C\cap \overline{ D^{+}(S)}$ which intersect $S$ at $q_n$ with $q_n \to \infty$, meaning by this that the sequence $q_n$ escapes every compact subset of $S$. Following $\gamma_n$ in the future direction let  $r_n \in \p C \cap \overline{ D^{+}(S)} $ be the first point in $C$ and let $\eta_n:=\gamma_n\vert_{q_n \to r_n}$ be the portion of $\gamma_n$ not in $C$ saved for $r_n$. Let $r\in \p C \cap \overline{ D^{+}(S)}$ be an accumulation point of $r_n$. By the limit curve theorem \cite{minguzzi07c} there is a past inextendible causal curve $\eta$ with future endpoint $r$ which does not intersect $S$ (if it were to intersect it at some $y\in S$ then a subsequence $q_{n_s}$ would converge to $y$ which is impossible since every subsequence escapes all compact sets). Being $\eta$ the limit of curves contained in the closed set $M\backslash \textrm{Int}\, C$ it is also contained in this closed set and so does not intersect $K$. Observe that it is a causal curve which cannot enter $D^+(S)$ for otherwise it would be forced to reach $S$, thus it is entirely contained in $H^+(S)$. This fact proves that $r\in H^+(S)$.  Since the horizon is achronal $\eta$ is a lightlike geodesic, that is a generator (lightlike geodesics on the horizon cannot cross for it is easy to see that it would contradict achronality).
This is a contradiction with compact generation since we have shown that $\eta$ does not intersect $K\subset \textrm{Int} C$ where every generator should enter. The contradiction proves that $S_0$ is compact. Let $x\in K$, and consider a sequence $ x_k \to x$, $x_k \in I^{-}(x)$. As a consequence, $x_k \in D^+(S)$. For sufficiently large $k$, $x_k \in C$ which implies that $x_k \in D^+(S_0)$ and consequently, $x\in \overline{D^+(S_0)}$ which implies $x\in H^+(S_0)$. We have shown  that $K\subset H^+(S_0) $ where $K$ contains almost closed causal curves since it contains a minimal invariant set \cite{minguzzi07f}.

%
%Let $V$ be a $C^2$ future directed timelike vector field, which exists by the definition of time orientability. Any point $p\in M$ admits small cylindrical neighborhood such that the lateral hypersurface is generated by $V$ while the two basis are spacelike. We can choose it so that the lateral hypersurface is $C^2$. The compact set $\tilde K:=K\cap H^{+}(S)$ can be covered by a finite number of such cylinders. The flow map of $V$ and the basis of the cylinders can be used to place a differentiable structure on a neighborhood $\tilde K$ on $H^+(S)$ (argument similar to that used in \cite{minguzzi14d}). As a consequence, we can find a neighborhood of $\tilde K$  on $H^+(S)$ having $C^1$ boundary  according to this structure hence represented by a neighborhood of $\tilde K$ on $M$ consisting of a topological cylinder with $C^1$ lateral hypersurface generated by $V$.
\end{proof}

\section{The case $I^{+}([r])=M$ and a singularity theorem}

S. Hawking has suggested  that the laws of physics
%are such as to
prevent the formation of closed timelike curves in spacetime
\cite{hawking92} (the chronology protection conjecture). According
to this conjecture the effects preventing the formation of closed
timelike curves could be quantistic in nature, in fact Hawking
claims that the divergence of the stress energy tensor at the
boundary of the chronology violating set would be a feature of this
prevention mechanism.

Despite some work aimed at proving the chronology protection
conjecture its present status remains quite unclear with some papers
supporting it and other papers suggesting its failure
\cite{tipler77,visser96,li96,li98,krasnikov02}. Some people think
that in order to solve the problem of the chronology protection
conjecture a full theory of quantum gravity would be required
\cite{hawking92,gott98}.

A weak form of chronology protection would forbid the formation of
closed timelike curves without denying the possibility that closed
timelike curves could have been present since the very beginning of
the universe. For this reason it is important to study spacetimes
that originate causally from a chronology violating region $[r]$,
namely $I^{+}([r])=M$.
%
% prevent a spacetime to form closed timelike curves, in fact he
%raised this expectation to the status of conjecture, now called {\em
%chronology protection conjecture} \cite{thorne93}. According to it
%the effects preventing the formation of CTCs may also be quantistic
%in nature, in fact Hawking claims that the divergence of the stress
%energy tensor at the chronology horizon (i.e. the boundary of the
%{\em chronology violating set}, the latter being the region over
%which CTCs pass) would be the principal candidate for a mechanism
%preventing the formation of CTCs.

\begin{proposition}
There is at most one  chronology violating class $[r]$ with the
property $I^{+}([r])=M$.
\end{proposition}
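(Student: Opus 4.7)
The plan is to argue directly by contradiction, exploiting the fact, recalled in the introduction, that all points belonging to the same chronology violating class share the same chronological past and future. In particular, for any class $[r]$ one has $I^{+}([r])=I^{+}(r)$.

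Suppose, for contradiction, that $[r]$ and $[s]$ are two \emph{distinct} chronology violating classes, both satisfying $I^{+}([r])=M$ and $I^{+}([s])=M$. The first step is to observe that since $s\in M=I^{+}([r])=I^{+}(r)$, we have $r\ll s$; symmetrically, since $r\in M=I^{+}([s])=I^{+}(s)$, we have $s\ll r$. The second step is to conclude $r\sim s$ from the very definition of the equivalence relation used to partition $\mathcal{C}$ into classes, so $[r]=[s]$, contradicting the assumption that the two classes are distinct.

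There is really no obstacle here; the only thing to be careful about is to invoke the equality $I^{+}([r])=I^{+}(r)$ (stated in the introduction as ``two points belonging to the same class have the same chronological future and the same chronological past'') to ensure that the hypothesis $I^{+}([r])=M$ immediately gives $r\ll p$ for every $p\in M$, rather than only providing some (possibly $r$-dependent) representative chronologically preceding $p$.
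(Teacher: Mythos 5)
Your proof is correct and is essentially identical to the paper's own argument: both derive $r\ll s$ and $s\ll r$ from the two hypotheses $I^{+}([r])=M$ and $I^{+}([s])=M$ and conclude $[r]=[s]$. The only difference is that you make explicit the identification $I^{+}([r])=I^{+}(r)$, which the paper uses tacitly.
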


\begin{proof}
Let $[x]$ be a second chronology violating class such that
$I^{+}([x])=M$ then $x\ll r$ and, since $I^{+}([r])=M$, $r \ll x$
thus $[x]=[r]$.
\end{proof}

\begin{proposition} \label{kjh}
Let $[r]$ be a chronology violating class such that $I^{+}([r])=M$,
then $\dot{[r]}=B_f([r])$, $J^{-}(\overline{[r]}\,)=\overline{[r]}$
and $I^{-}(\overline{[r]}\,)=[r]$.
\end{proposition}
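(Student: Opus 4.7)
The plan is to establish the three identities in the order $\dot{[r]}=B_f([r])$, then $I^{-}(\overline{[r]}\,)=[r]$, then $J^{-}(\overline{[r]}\,)=\overline{[r]}$, since the second feeds directly into the third. For the first identity, the hypothesis $I^{+}([r])=M$ immediately forces
\[
B_p([r])=\overline{[r]}\,\backslash I^{+}([r])=\overline{[r]}\,\backslash M=\emptyset.
\]
Combined with the earlier corollary $\dot{[r]}=B_p([r])\cup B_f([r])$ (or equivalently the proposition that $B_f([r])=\dot{[r]}$ iff $B_p([r])=\emptyset$) this yields $\dot{[r]}=B_f([r])$.

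For $I^{-}(\overline{[r]}\,)=[r]$, I would invoke Lemma \ref{hzo}, which gives the chain $I^{-}(\overline{[r]}\,)=I^{-}([r])=I^{-}(r)$. Since $I^{+}([r])=M$ is the same as $I^{+}(r)=M$, the definition $[r]=I^{+}(r)\cap I^{-}(r)$ collapses to $[r]=M\cap I^{-}(r)=I^{-}(r)$, so $I^{-}(\overline{[r]}\,)=[r]$ as desired.

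Finally, for $J^{-}(\overline{[r]}\,)=\overline{[r]}$ the inclusion $\overline{[r]}\subset J^{-}(\overline{[r]}\,)$ is trivial. For the reverse, take $p\in J^{-}(\overline{[r]}\,)$, so $p\leq q$ for some $q\in \overline{[r]}$, and pick a chronological sequence $p_n\ll p$ with $p_n\to p$. Transitivity yields $p_n\ll q$, hence $p_n\in I^{-}(\overline{[r]}\,)=[r]$ by the identity just proved, and passing to the limit places $p$ in $\overline{[r]}$. No step presents a real obstacle; the only subtlety is the correct use of the chain $I^{-}(r)=I^{-}([r])=I^{-}(\overline{[r]}\,)$ from Lemma \ref{hzo}, after which the remaining steps reduce to bookkeeping.
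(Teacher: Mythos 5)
Your proof is correct, but it routes the second and third identities differently from the paper. For $\dot{[r]}=B_f([r])$ the two arguments are essentially the same (you use $B_p([r])=\emptyset$ together with the decomposition $\dot{[r]}=B_p([r])\cup B_f([r])$, the paper uses $I^{+}([r])\cap\dot{[r]}\subset B_f([r])$ from Proposition \ref{chd}; same content). The real divergence is in the order of the remaining two claims: the paper proves $J^{-}(\overline{[r]}\,)=\overline{[r]}$ \emph{first}, by a geometric contradiction argument --- given $p\in J^{-}(\overline{[r]}\,)\setminus\overline{[r]}$ it concatenates a timelike curve from $r$ to $p$ (using $M=I^{+}(r)$) with the causal curve from $p$ to $\overline{[r]}$ and smooths the corner to produce a closed timelike curve through points arbitrarily close to $p$ --- and then obtains $I^{-}(\overline{[r]}\,)=[r]$ by taking interiors. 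You invert the dependency: you first get $I^{-}(\overline{[r]}\,)=I^{-}(r)=[r]$ purely from Lemma \ref{hzo} and the collapse $[r]=I^{+}(r)\cap I^{-}(r)=I^{-}(r)$, and then deduce the $J^{-}$ identity from the standard push-up property $\ll\circ\le\;\subset\;\ll$, which gives $J^{-}(\overline{[r]}\,)\subset\overline{I^{-}(\overline{[r]}\,)}=\overline{[r]}$. Your route is the more elementary one: it replaces the corner-smoothing variation argument with order-theoretic bookkeeping and makes explicit the pleasant fact that under $I^{+}([r])=M$ the class itself is a past set, $[r]=I^{-}(r)$. The paper's variation argument, on the other hand, is self-contained at the level of curves and does not need the identity $[r]=I^{-}(r)$ as an intermediate step. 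Both are valid; no gap.
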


\begin{proof}
Since $I^{+}([r]) \cap \dot{[r]} \subset B_f([r])$ we have
$\dot{[r]} \subset B_f([r])$ and hence the first equality. For the
second equality the inclusion  $\overline{[r]} \subset
J^{-}(\overline{[r]})$ is obvious. For the other direction  assume
by contradiction, $p \in
J^{-}(\overline{[r]})\backslash\overline{[r]}$. Since $p \in
M=I^{+}(r)$ there is a timelike curve joining $r$ to $p$ and a
causal curve joining $p$ to $\overline{[r]}$. By making a small
variation starting near $p$ we get a timelike curve from $r$ to
$\overline{[r]}$, and hence equivalently, from $r$ to $r$ passing
arbitrarily close to $p$, thus $p \in \overline{[r]}$, a
contradiction.

For the last equality it suffices to take the interior of the second
one.
\end{proof}

\begin{proposition} \label{kfh}
Let $[r]$ be a chronology violating class such that $I^{+}([r])=M$.
A past or future inextendible achronal causal curve on $M$ is either
entirely contained in $M\backslash \overline{[r]}$ or in
$\dot{[r]}$.
\end{proposition}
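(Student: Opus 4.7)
The plan is to partition the parameter domain of $\gamma$ into $A=\gamma^{-1}(\dot{[r]})$ and $B=\gamma^{-1}(M\backslash\overline{[r]})$ and prove one of them is empty. First, $\gamma$ cannot meet the open class $[r]$: if $\gamma(t_0)\in[r]$, then an open parameter interval around $t_0$ maps into $[r]$, and any two distinct parameter values $t_1<t_2$ in that interval give $q_1,q_2\in[r]$ which are chronologically related (both lie in the same chronology violating class, so $q_1\ll q_2$), contradicting the achronality of $\gamma$. Hence $A\cup B$ covers the whole domain, with $A$ closed and $B$ open.

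Assume for contradiction that both $A$ and $B$ are nonempty. I claim $\gamma$ cannot enter $\dot{[r]}$ from $M\backslash\overline{[r]}$ in the forward direction: if $\gamma(t^*)\in\dot{[r]}$ and $t_n\nearrow t^*$ with $\gamma(t_n)\in M\backslash\overline{[r]}$, then $\gamma(t_n)\le\gamma(t^*)\in\overline{[r]}$, and the identity $J^{-}(\overline{[r]})=\overline{[r]}$ from Proposition \ref{kjh} gives $\gamma(t_n)\in\overline{[r]}$, a contradiction. Consequently each connected component of the closed set $A$ must have its left endpoint at the past end of the parameter domain; disjointness of components then forces $A$ to be a single interval, and past inextendibility of $\gamma$ makes it $(-\infty,t^*]$ for some finite $t^*$, with $B$ its right complement.

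Setting $p=\gamma(t^*)$, I would use that $\eta:=\gamma|_{(-\infty,t^*]}$ is a past inextendible achronal causal curve inside the achronal boundary $\dot{[r]}$, hence a past lightlike ray in $\dot{[r]}$ ending at $p$. This gives $p\in R_p([r])$, and combined with $p\in I^{+}([r])=M$ Proposition \ref{chd} is in force; since moreover $p\in\dot{[r]}=B_f([r])\subset R_f([r])$, there is a future lightlike ray $\sigma\subset\dot{[r]}$ starting at $p$. The smooth-joining argument from Proposition \ref{chd}'s proof (which relies precisely on $p\in I^{+}(r)$ to produce a corner-smoothing contradiction if the tangents disagreed) then forces $\eta$ and $\sigma$ to be continuations of a single null geodesic at $p$; uniqueness of geodesic extension now implies $\gamma(t)\in\sigma\subset\dot{[r]}$ for $t$ slightly greater than $t^*$, contradicting $(t^*,t^*+\delta)\subset B$. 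The main obstacle is this last step, since as Figure \ref{edge} illustrates, generators of $\dot{[r]}$ can in general leave $\dot{[r]}$ when extended; it is the hypothesis $I^{+}([r])=M$ that enables the Proposition \ref{chd} machinery and prevents this pathology at $p$.
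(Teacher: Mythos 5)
Your treatment of the past inextendible case is correct and follows essentially the paper's route: the key inputs are the identities $J^{-}(\overline{[r]}\,)=\overline{[r]}$ and $I^{-}(\overline{[r]}\,)=[r]$ from Proposition \ref{kjh}, a tangent-versus-corner dichotomy against a generator of $B_f([r])$ at the point where the curve would leave $\overline{[r]}$, and uniqueness of lightlike geodesics. The only structural difference is where the corner argument is run: the paper picks an arbitrary intersection point $q\in\dot{[r]}$ to the past of $p\notin\overline{[r]}$ and plays the incoming tail of $\gamma$ below $q$ against the outgoing generator at $q$, whereas you first show that $A=\gamma^{-1}(\dot{[r]})$ is a downward-closed initial segment and then run the argument once, at $\sup A$, outsourcing it to Proposition \ref{chd}. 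That is a legitimate economy; note only that you must cite the \emph{proof} of Proposition \ref{chd} rather than its statement (the statement does not assert that the inextendible geodesic in $\dot{[r]}$ contains the given $\eta$), which you do acknowledge.

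The gap is that the proposition also covers future inextendible curves that are not past inextendible, and your argument does not reach them. For such a curve with past endpoint $\gamma(c)$ one gets $A=[c,t^*]$, the incoming piece $\gamma|_{[c,t^*]}$ is a compact segment rather than a past lightlike ray, so $p=\gamma(t^*)$ need not belong to $R_p([r])$ and Proposition \ref{chd} is not available. Your step with $J^{-}(\overline{[r]}\,)=\overline{[r]}$ already kills the sub-case in which the curve meets $\overline{[r]}$ to the future of a point outside it (this is exactly the paper's one-line disposal of the future case), and when $c<t^*$ the corner computation can be redone by hand: take $x\in\gamma|_{[c,t^*)}\subset\dot{[r]}$; the corner at $p$ with the generator $\sigma\subset\dot{[r]}$ gives $x\ll\sigma\backslash\{p\}\subset\overline{[r]}$, hence $x\in I^{-}(\overline{[r]}\,)=[r]$, contradicting $x\in\dot{[r]}$ (here $I^{+}([r])=M$ supplies $r\ll x$ exactly as in Proposition \ref{chd}). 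You should add this case explicitly, since the reduction is short but not literally the same invocation. The residual sub-case $A=\{c\}$, a future ray whose past endpoint alone lies on $\dot{[r]}$, is not covered by your argument; the paper's proof is equally terse there, so this is a shared loose end rather than a defect specific to your write-up.
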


\begin{proof}
Let $\gamma$ be a past inextendible achronal causal curve  which
passes through a point $p\in M\backslash \overline{[r]}$. Let us
follow it to the past of $p$. If it intersects $\dot{[r]}$ at some
point $q$ then it cannot be tangent to a generator $\eta$ of
$B_f([r])$ at $q$, for otherwise it would coincide with that
generator  to the future of $q$ and hence would be entirely
contained in $B_f([r])\subset \overline{[r]}$, a contradiction with
$p\in M\backslash \overline{[r]}$. However, if it makes a corner
with $\eta$ then any point $q' \in \gamma$ to the past of $q$ would
belong to $I^{-}(\overline{[r]})=[r]$, which is impossible since a
lightlike line cannot intersect the chronology violating region.

Let $\gamma$ be a future inextendible achronal causal curve  which
passes through a point $p\in M\backslash \overline{[r]}$. Then it
cannot intersect $\overline{[r]}$ because
$J^{-}(\overline{[r]})=\overline{[r]}$.
\end{proof}

The interesting fact is that  $M\backslash \overline{[r]}$ must
admit a time function, provided null geodesic completeness and other
reasonable physical conditions are satisfied (see Theorem
\ref{vgs}). For more details on these conditions see
\cite{hawking73}. It can be read as a singularity
theorem: under fairly reasonable physical conditions if the
spacetime outside the chronology violating region does not admit a
time function then the spacetime is geodesically singular.

Theorem \ref{vgs} is a non-trivial generalization over the main
theorem contained in \cite{minguzzi07d}. Note that null geodesic
completeness is required only on those geodesics intersecting
$M\backslash\overline{[r]}$. These geodesics cannot be tangent to
some geodesic generating the boundary $\dot{[r]}$, because since
this boundary is generated by future lightlike rays contained in
$\dot{[r]}$ (Prop. \ref{kjh}) the geodesic would have to be
contained in $\overline{[r]}$, a contradiction.

\begin{theorem} \label{vgs}
Let $(M,g)$ be a spacetime which admits  no chronology violating
class but possibly for the one, denoted $[r]$, which generates the
whole universe, i.e.\ $I^{+}([r])=M$. Assume that the spacetime
satisfies the null convergence condition and the null genericity
condition on the lightlike inextendible geodesics which are entirely
contained in $M\backslash\overline{[r]}$, and suppose that these
lightlike geodesics are complete. Then the spacetime $M\backslash
\overline{[r]}$ is  stably causal and hence admits a time function.
\end{theorem}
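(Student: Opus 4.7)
The plan is to reduce Theorem~\ref{vgs} to the analogous result of \cite{minguzzi07d} applied to the open submanifold $N := M\backslash \overline{[r]}$ equipped with the restriction of $g$. Two points need to be verified for this reduction: that $N$ is chronological, and that its intrinsic causal structure contains no lightlike line. The first is immediate since the only chronology violating class of $M$ is $[r]\subset\overline{[r]}$, so $N \cap \mathcal{C}=\emptyset$.

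For the absence of lightlike lines I would first rule out the lightlike lines of $M$ that meet $N$. By Proposition~\ref{kfh}, any past- or future-inextendible achronal causal curve in $M$ that passes through a point of $N$ is entirely contained in $N$. Such a curve is therefore a complete lightlike geodesic on which the null convergence and null genericity conditions hold by hypothesis, and the standard conjugate-point argument (\cite[Prop.\ 4.4.5]{hawking73}, \cite[Ch.\ 10]{beem96}) produces a pair of conjugate points, contradicting achronality. Hence no lightlike line of $M$ passes through any point of $N$.

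It remains to pass from the absence of lightlike lines of $M$ meeting $N$ to the absence of lightlike lines of $N$ regarded as a spacetime in its own right. Given an inextendible achronal causal curve $\eta$ of $N$, either $\eta$ is already inextendible in $M$ (in which case the previous paragraph concludes) or it accumulates on the boundary $\dot{[r]}=B_f([r])$. In the latter case the generating structure described in Lemma~\ref{vhq} and Proposition~\ref{kjh} rules out both possibilities for how $\eta$ could meet $\dot{[r]}$: a tangency to a generator of $B_f([r])$ would force $\eta\subset\overline{[r]}$, as already observed in the paragraph preceding the theorem statement, while a transverse contact would place past points of $\eta$ inside $I^{-}(\overline{[r]})=[r]$ by Proposition~\ref{kjh}, again contradicting achronality. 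With chronology and absence of lightlike lines of $N$ both established, the main theorem of \cite{minguzzi07d} applies to $N$ and yields stable causality; Hawking's theorem then supplies the desired time function.

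The main obstacle will be the boundary analysis just sketched. A would-be lightlike line of $N$ could in principle approach $\dot{[r]}$ in a delicate way, for instance by accumulating on a generator of $B_f([r])$ from the chronological side without actually touching it, and the control needed here must come from the fine structural results on $B_f([r])$ developed earlier in Section~2 rather than from any soft limit-curve argument. Any remaining subtlety in adapting \cite{minguzzi07d}, which operated under global chronology, to the present setting in which $N$ carries the non-trivial achronal boundary $\dot{[r]}$, will be concentrated at precisely this step.
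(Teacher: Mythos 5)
Your reduction to the main theorem of \cite{minguzzi07d} (``chronological spacetimes without lightlike lines are stably causal'') has a genuine gap at exactly the point you flag as ``the main obstacle'': you would need to show that $N=M\backslash\overline{[r]}$, \emph{as a spacetime in its own right}, has no lightlike lines, and the hypotheses of the theorem do not give you this. The completeness and genericity assumptions apply only to inextendible geodesics \emph{of $M$} entirely contained in $N$. A lightlike line of $N$ need not be one of these: an achronal causal curve that is inextendible in $N$ can, when viewed in $M$, be a future lightlike ray whose past endpoint $q$ lies on $\dot{[r]}=B_f([r])$ and which immediately enters $N$. (The future direction is safe, since a future endpoint in $\overline{[r]}$ would contradict $J^{-}(\overline{[r]})=\overline{[r]}$ from Proposition \ref{kjh}; the past direction is not.) Your tangency/transversality dichotomy does not dispose of this configuration: the tangency case only forces the curve into $\overline{[r]}$ to the \emph{future} of the contact point, and the corner argument placing points in $I^{-}(\overline{[r]})=[r]$ needs the curve to continue to the \emph{past} of $q$, which it does not when $q$ is its past endpoint. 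For such a ray no conjugate-point argument is available, so ``no lightlike lines of $N$'' cannot be established from the stated hypotheses, and the black-box application of \cite{minguzzi07d} fails. The paper itself signals this by calling the theorem a \emph{non-trivial generalization} of that result rather than a corollary.

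The paper's actual proof avoids the issue by never invoking the statement of the \cite{minguzzi07d} theorem on $N$. It first proves that $N$ is \emph{strongly causal} by a limit-curve argument carried out in $M$: there the limit curve is inextendible in $M$, achronal, hence a lightlike line of $M$ through a point of $N$, which Proposition \ref{kfh} forces to lie entirely in $N$, where the completeness, genericity and convergence hypotheses do apply and yield conjugate points. It then establishes that $\overline{J^{+}_N}$ is transitive by reworking the \emph{proof} of \cite[Theorem 5]{minguzzi07d}, treating separately the case in which the limit curve through the intermediate point is non-achronal (using $\overline{J^+}=\overline{I^+}$ and $J^{-}(\overline{[r]})\subset\overline{[r]}$ to push the chronological chain back into $N$), and concludes stable causality via causal easiness \cite{minguzzi08b}. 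Your opening steps (chronology of $N$, exclusion of $M$-lightlike lines meeting $N$ via Proposition \ref{kfh}) are sound and parallel the paper's first step, but the passage from there to stable causality requires the finer route just described, not the boundary analysis you sketch.
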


\begin{proof}
 Consider the spacetime $N=M\backslash
\overline{[r]}$ with the induced metric $g_N$, and denote with
$J^{+}_N$ its causal relation. This spacetime is clearly
chronological and in fact strongly causal. Indeed, if strong
causality would fail at $p \in N$ then there would be sequences
$p_n, q_n \to p$, and causal curves $\sigma_n$ of endpoints $p_n,
q_n$, entirely contained in $N$, but all escaping and reentering
some neighborhood of $p$. By an application of the limit curve
theorem \cite{minguzzi07c,beem96} on the spacetime $M$ there would
be an inextendible continuous causal curve   $\sigma$ passing
through $p$ and contained in $\bar{N}$ to which a reparametrized
subsequence $\sigma_n$ converges uniformly on compact subsets
($\sigma$ can possibly be closed). The curve $\sigma$  must be
achronal otherwise one would easily construct a closed timelike
curve intersecting $N$ (a piece of this curve would be a segment of
some $\sigma_n$ thus intersecting $N$). Thus $\sigma$ is a lightlike
line and hence, by Lemma \ref{kfh}, it  is entirely contained in
$N$. By assumption $\sigma$ is complete thus by null genericity and
null convergence  it has conjugate points, which is in contradiction
with it being achronal. The contradiction proves that $(N,g_N)$ is
strongly causal.

%As a consequence it is not hard to show, by taking a timelike
%variation of $\sigma$, and building up closed timelike curves as
%close as one wishes to $p$, that $p \in \bar{\mathcal{C}}$, a
%contradiction.

%The possibility that $\sigma$ intersects $\dot{N}=\dot{[r]}$ leads
%also to a contradiction because  $\sigma$ cannot intersect
%$\dot{[r]}$ in the causal future of $p$ as
%$J^{-}(\overline{[r]}\,)=\overline{[r]}$. But if it intersects
%$\dot{[r]}$ in the causal past of $p$, $\sigma$ cannot be tangent to
%the generators of $\dot{[r]}$ otherwise by Prop. \ref{kjh}  $p \in
%\dot{[r]}$, a contradiction, thus after the intersection with
%$\dot{[r]}$ (again by Prop. \ref{kjh}) $\sigma$ remains in
%$\dot{[r]}$. This fact implies that $\sigma$ is not achronal, as it
%has a corner, a contradiction (see above). We conclude that strong
%causality holds on $N$.

 The next step is to prove  that $\overline{J^{+}_N}$
is transitive. In this case $N$ would be causally easy
\cite{minguzzi08b} and hence stably causal (thus admitting time
functions). Suppose $(x,y)\in \overline{J^{+}_N}$ and $(y,z)\in
\overline{J^{+}_N}$. The transitivity of $\overline{J^{+}_N}$ is
proved as done in \cite[Theorem 5]{minguzzi07d}, observing  that the
limit curve passing through  $y$ constructed in that proof,
necessarily contained in $\bar{N}$, is either achronal and hence, by
Lemma \ref{kfh}, entirely contained in $N$, which allows to apply
that original argument, or non-achronal. In the latter case that
argument of proof shows that $(x,z)\in \overline{J^+}$. Let us
recall that $\overline{J^+}=\overline{I^+}$, thus there are
neighborhoods $U$ and $V$ such that any timelike curve connecting
$U\ni x$, $U \subset N$ to $V \ni z$, $V \subset N$ must stay in
$N$, because otherwise there would be some $w \in \overline{[r]}$
such that $x' \le w$, with $x' \in U$. This is impossible because by
Prop.\ \ref{kjh}, $J^{-}(\overline{[r]}\,) \subset
\overline{[r]}$. Thus $(x,z)\in \overline{I^+_N}=\overline{J^+_N}$.
%
%
%non-achronal, which allows to apply that argument, or is entirely
%contained in $N$ by Lemma \ref{kfh}, which again allows to apply
%that argument.
%
%
%the only difference is that the argument allows us only to prove
%that if, $x,y, z \in N$, $(x,y) \in \overline{J^{+}_N}$ and $(y,z)
%\in \overline{J^{+}_N}$ then $(y,z) \in
%\overline{J^{+}}(=\overline{I^{+}})$ as the limit causal curve
%passing through $y$ may intersect $\dot{N}$. However, there are
%neighborhoods $U$ and $V$ such that any timelike curve connecting
%$U\ni x$, $U \subset N$ to $V \ni z$, $V \subset N$ must stay in
%$N$, because otherwise there would be some $w \in \overline{[r]}$
%such that $x' \le w$, with $x' \in U$. This is impossible because by
%proposition \ref{kjh}, $J^{-}(\overline{[r]}\,) \subset
%\overline{[r]}$.
\end{proof}

In a different work \cite{minguzzi09d} I have argued, using entropic and homogeneity
arguments, that our spacetime could indeed have been causally
preceded by a region of chronology violation. In this picture the
{\em null} hypersurface $\dot{[r]}$ would be generated by achronal
inextendible lightlike geodesics, and would replace the usual
 Big Bang (which is usually taken as a {\em spacelike} hypersurface in the spacetime completion). Since $\dot{[r]}$ would be
generated by lightlike lines a rigidity mechanism would take place
and several components of the Weyl tensor would  vanish at the
boundary (because the Weyl tensor causes focusing \cite{hawking73}).
This fact is in accordance with Penrose's expectations on the
beginning of the universe \cite{penrose79} (the Weyl tensor
hypothesis) according to which, in order to solve the entropic
problem of cosmology, the Weyl tensor must be small at the beginning
of the Universe.

\section{Conclusions}

We have studied the boundary of the chronology violating set, defining  its future and past parts and proving the reasonability of the definition. For instance, we have shown that the edges of these parts coincide and that the full boundary is obtained by gluing the future and past parts along their edges. We have shown that our definitions are compatible with a previous  definition in the  domain of applicability of the latter.
 We have studied other properties of these boundaries, including causal convexity, differentiability and smoothness under energy conditions. Theorem \ref{mai} clarified the connection with singularities.  We have also proved that compactly generated horizons are compactly constructed. This results did not use the definition of chronological boundary but it is relevant in order to clarify no-go theorems on the  creation of time machines.
 Finally, we have considered the circumstance in which there is just one chronology violating region at the beginning of the Universe, proving that under reasonable energy and genericity conditions either there is a time function outside it or the spacetime is singular.

\section*{Acknowledgments}
I wish to thank Amos Ori for the useful comments and the stimulating discussions. This work has been partially supported by GNFM of INDAM.\\

%\newpage {\footnotesize
%\bibliography{../../bibliografie/simultaneity,../../bibliografie/libri,../../bibliografie/miei,../../bibliografie/mieiPreprints,../../bibliografie/mieiProceedings}
%\bibliographystyle{plain}

%}

\end{document}